\newtheorem{remark}{Remark}
\newtheorem{theorem}{Theorem}
\newtheorem{corollary}{Corollary}
\newtheorem{proposition}{Proposition}
\newtheorem{lemma}{Lemma}
\newcommand{\R}{\mathbb{R}}
\newcommand{\eins}{\text{\ensuremath{1\hspace*{-0.9ex}1}}}
\begin{document}
\title{Robust risk aggregation with neural networks}
\author{Stephan Eckstein\thanks{Department of Mathematics, University of Konstanz, Universit\"{a}tsstraße 10, 78464 Konstanz, Germany, stephan.eckstein@uni-konstanz.de} \and Michael Kupper\thanks{Department of Mathematics, University of Konstanz, Universit\"{a}tsstraße 10, 78464 Konstanz, Germany, kupper@uni-konstanz.de} \and Mathias Pohl\thanks{Faculty of Business, Economics \& Statistics, University of Vienna, Oskar-Morgenstern-Platz 1, 1090 Vienna, Austria, mathias.pohl@univie.ac.at}
}
\date{\today}

\maketitle

\begin{abstract}
  \noindent 
{\textbf{Abstract}.}
We consider settings in which the distribution of a multivariate random variable is partly ambiguous. We assume the ambiguity lies on the level of the dependence structure, and that the marginal distributions are known. Furthermore, a current best guess for the distribution, called reference measure, is available. We work with the set of distributions that are both close to the given reference measure in a transportation distance (e.g.~the Wasserstein distance), and additionally have the correct marginal structure. The goal is to find upper and lower bounds for integrals of interest with respect to distributions in this set.
 
The described problem appears naturally in the context of risk aggregation. When aggregating different risks, the marginal distributions of these risks are known and the task is to quantify their joint effect on a given system. This is typically done by applying a meaningful risk measure to the sum of the individual risks. For this purpose, the stochastic interdependencies between the risks need to be specified. In practice the models of this dependence structure are however subject to relatively high model ambiguity. 

The contribution of this paper is twofold: Firstly, we derive a dual representation of the considered problem and prove that strong duality holds. Secondly, we propose a generally applicable and computationally feasible method, which relies on neural networks, in order to numerically solve the derived dual problem. The latter method is tested on a number of toy examples, before it is finally applied to perform robust risk aggregation in a real world instance.  

\end{abstract}

\section{Introduction}
\subsection{Motivation}
\label{ssec:motivation}

Risk aggregation is the process of combining multiple types of risk within a firm. The aim is to obtain 
meaningful measures for the overall risk the firm is exposed to. The stochastic interdependencies between the different risk types are crucial in this respect. There is a variety of different approaches to model these interdependencies. One generally observes that these models for the dependence structure between the risk types are significantly less accurate than the models for the individual types of risk.

We take the following approach to address this issue: We assume that the distributions of the marginal risks are known and fixed. This assumption is justified in many cases of practical interest. Moreover, risk aggregation is per definition not concerned with the computation of the marginal risks' distributions. Additionally, we take a probabilistic model for the dependence structure linking the marginals risks as given. Note that there are at least two different approaches in the literature to specify this \emph{reference dependence structure}: The construction of copulas and factor models. The particular form of this reference model is not relevant for our approach as long as it allows us to generate random samples. Independently of the employed method, the choice of a reference dependence structure is typically subject to high uncertainty. Our contribution is to model the ambiguity with respect to the specified reference model, while fixing the marginal distributions.
We address the following question in this paper: 
\begin{center}
\noindent How can we account for model ambiguity with respect to a specific dependence structure when aggregating different risks?
\end{center}
We propose an intuitive approach to this problem: We compute the aggregated 
risk with respect to the worst case dependence structure in a neighborhood 
around the specified reference dependence structure.
For the construction of this neighborhood we use transportation distances. These distance measures between probability distributions are flexible enough to capture different kinds of \emph{model ambiguity}. At the same time, they allow us to generally derive numerical methods, which solve the corresponding problem of robust risk aggregation in reasonable time. 
To highlight some of the further merits of our approach, we are able to determine the worst case dependence structure for a problem at hand. Hence, our method for robust risk \emph{measurement} is arguably a useful tool also for risk \emph{management} as it provides insights about which scenarios stress a given system the most. Moreover, it should be emphasized that our approach is restricted neither to a particular risk measure nor a particular aggregation function.\footnote{Note also that our methods can be applied to solve completely unrelated problems, such as the portfolio selection problem under dependence uncertainty introduced in \citeA{pflug2017review}.}

In summary, the approach presented provides a flexible way to include model ambiguity in situations where a reference dependence structure is given and the marginals are fixed. It is generally applicable and computationally feasible.
In the subsequent subsection we outline our approach in some more details before discussing the related literature.

\subsection{Overview}
\label{ssec:Overview}
We aim to evaluate 
$$ \int_{\R^d} f d\bar{\mu},$$
for some $f:\R^d \to \R$ in the presence of ambiguity with respect to the probability measure $\bar{\mu} \in \mathcal{P}(\R^d)$, where $\mathcal{P}(\R^d)$ denotes the set of all Borel probability measures on $\R^d$. In particular, we assume that the marginals $\bar{\mu}_1, \dots, \bar{\mu}_d$ of $\bar{\mu}$ are known and the ambiguity lies solely on the level of the dependence structure. Moreover, we assume a reference dependence structure, namely the one implied by the reference measure $\bar{\mu}$, is given and that the degree of ambiguity with respect to the reference measure $\bar{\mu}$ can be modeled by the transportation distance $d_c$, which is defined in \eqref{eq:DefTransDistanceOverview} below. Hence, we consider the following problem 
\begin{align} \label{eq:ProblemPHI}
\phi(f):= \max_{\substack{\mu \in \Pi(\bar{\mu}_1,\dots,\bar{\mu}_d)\\  d_c(\bar{\mu},\mu)\le\rho}} \int_{\R^d} f\, d\mu,
\end{align}
where the set $\Pi(\bar{\mu}_1,\dots,\bar{\mu}_d)$ consists of all $\mu\in\mathcal{P}(\R^d)$
satisfying $\mu_i=\bar{\mu}_i$ for all $i=1,\dots,d$, where $\mu_i \in \mathcal{P}(\R)$ and $\bar{\mu}_i \in \mathcal{P}(\R)$ denote the $i$-th marginal distributions of $\mu$ and $\bar{\mu}$.
We fix a continuous function $c:\R^d \times \R^d \to[0,\infty)$ such that $c(x,x)=0$ for all $x\in \R$. The cost of transportation between $\bar{\mu}$ and $\mu$ in $\mathcal{P}(\R^d)$ with respect to the cost function $c$ is defined as 
\begin{align} \label{eq:DefTransDistanceOverview}
d_c(\bar{\mu},\mu) := \inf_{\pi \in \Pi(\bar{\mu},\mu)} \int_{\R^d \times \R^d} c\left(x,y\right) \pi(dx,dy),
\end{align}
where $\Pi(\bar{\mu},\mu)$ denotes the set of all couplings of the marginals $\bar{\mu}$ and $\mu$. For the cost function $c(x,y)=||x-y||^p$ with $p\ge 1$, the mapping $d_c^{1/p}$ corresponds to the Wasserstein distance of order $p$. 

The numerical methods to solve problem \eqref{eq:ProblemPHI}, which are developed in this paper, build on the following dual formulation of problem \eqref{eq:ProblemPHI}:
\begin{align}
\inf_{\lambda \geq 0,\, h_i \in C_b(\R)} \Big\{ \rho\lambda + \sum_{i=1}^d \int_{\R} h_i\, d\bar{\mu}_i  + \int_{\R^d} \sup_{y \in \R^d} \Big[ f(y) - \sum_{i=1}^d h_i(y_i) - \lambda c(x,y) \Big]\, \bar{\mu}(dx) \Big\}, \label{eq:DualFormulationIntro}
\end{align}
where $C_b(\R)$ the set of all continuous and bounded functions $h : \R \rightarrow \R$.
This dual formulation was initially derived by \citeA{gao2017data}. These authors show that strong duality holds, i.e.~problem \eqref{eq:ProblemPHI} and \eqref{eq:DualFormulationIntro} coincide, for upper semicontinuous functions $f: X \rightarrow \R$ satisfying the growth condition $\sup_{x \in X} \frac{f(x)}{c(x,y_0)} < \infty$ for some $y_0 \in X$, where $X = X_1 \times ... \times X_d$ for possibly non-compact subsets $X_1, ..., X_d$ of $\mathbb{R}$.

Theorem~\ref{thm:main} in Section~\ref{sec:results} extends the duality in the following aspects: Firstly, the functions $f: X \rightarrow \R$ need not satisfy a growth condition that depends on the cost $c$. Our results allow for upper semicontinuous functions of bounded growth. Secondly, we can consider a space $X = X_1 \times \dots \times X_d$, where $X_i$ can be arbitrary polish spaces. We emphasize that the problem setting can therefore include an information structure where multivariate marginals are known and fixed.
Lastly, Theorem \ref{thm:main} extends the constraint $d_c(\bar{\mu},\mu)\le\rho$ to a more general way of penalizing with respect to $d_c(\bar{\mu},\mu)$.

We now turn to the question how the dual formulation \eqref{eq:DualFormulationIntro} can be utilized to solve problem \eqref{eq:ProblemPHI}. One approach is to assume that the reference distribution $\bar{\mu}$ is a discrete distribution. In this context,
\citeA{gao2017data} show that the dual problem \eqref{eq:DualFormulationIntro} can be reformulated as a linear program under the following assumptions: First, the function $f$ can be written as the maximum of affine functions. Second, the reference distribution $\bar{\mu}$ is given by an empirical distribution on $n$ points $x^1,\dots, x^n$ in $\mathbb{R}^d$.
Third, the cost function $c$ has to be additively separable, i.e. $c(x, y) = \sum_{i=1}^d c_i(x_i, y_i)$.
For further details we refer to Corollary~\ref{coro:LPreformulation} in Section~\ref{ssec:duality}.

This linear programming approach is especially useful when 
only few observations are available to construct the reference distribution $\bar{\mu}$  - a case where accounting for ambiguity with respect to the dependence structure is often required. Nevertheless, the assumptions under which problem \eqref{eq:DualFormulationIntro} can be solved by means of linear programming exclude many cases of practical interest. Even in cases that linear programming is applicable, the resulting size of the linear program quickly becomes intractable in higher dimensions. 
Hence, this paper presents a generally applicable and computationally feasible method to numerically solve problem \eqref{eq:DualFormulationIntro} which uses neural networks.

The basic idea is to use neural networks to parametrize the functions $h_i \in C_b(\R)$ and then solve the resulting finite dimensional problem. Theoretically, such an approach is justified by the universal approximation properties of neural networks, see for example \citeA{hornik1991approximation}.

To utilize neural networks, we first dualize the point-wise supremum inside the integral of \eqref{eq:DualFormulationIntro}. Under mild assumptions, this leads to
\[
 \inf_{\substack{\lambda \geq 0,\\ h_i \in C_b(\R),\, g \in C_b(\R^d):\\ g(x) \geq f(y) - \sum_{i=1}^d h_i(y_i) - \lambda c(x,y)}} \Big\{ \lambda \rho + \sum_{i=1}^d \int_{\R} h_i \,d\bar{\mu}_i + \int_{\R^d} g \,d\bar{\mu} \Big\}.
\]
As the pointwise inequality constraint prevents a direct implementation with neural networks, the constraint is penalized. This is done by introducing a measure $\theta \in \mathcal{P}(\mathbb{R}^{2d})$, which we refer to as the \textsl{sampling measure}. Further, we are given a family of \textsl{penalty functions} $(\beta_\gamma)_{\gamma > 0}$ which increases the accuracy of the penalization for increasing $\gamma$, e.g.~$\beta_\gamma(x) = \gamma \max \lbrace 0,x \rbrace^2$. The resulting optimization problem reads 
\begin{align} \label{eq:PhiThetaGammaOverview}
\phi_{\theta,\gamma}(f) := \inf_{\substack{\lambda \geq 0,\\ h_i \in C_b(\R),\, g \in C_b(\R^d)}} \Big\{ \lambda \rho &+ \sum_{i=1}^d \int_{\R} h_i \,d\bar{\mu}_i + \int_{\R^d} g \,d\bar{\mu} \\[-10pt] 
+& \int_{\R^{2d}} \beta_{\gamma}\big(f(y) - \sum_{i=1}^d h_i(y_i) - \lambda c(x,y) - g(x)\big) \,\theta(dx,dy)\Big\}. \notag
\end{align}
Before we develop numerical methods to evaluate $\phi_{\theta,\gamma}(f)$ and thereby approximate $\phi(f)$, we need to study the convergence
\begin{align}\phi_{\theta,\gamma}(f) \rightarrow \phi(f) \quad \text{ for } \gamma \rightarrow \infty.\label{convergencegamma}\end{align}
A sufficient condition for this convergences is derived in Proposition \ref{prop:penal}. We additionally give a general instance where this derived condition is satisfied. It states that \eqref{convergencegamma} holds whenever the cost function $c$ satisfies a mild growth condition and the sampling measure $\theta$ is the product measure between the reference measure and the respective marginals, i.e.~$\theta = \bar{\mu} \otimes (\bar{\mu}_1 \otimes ... \otimes \bar{\mu}_d)$.

Besides the optimal value of problem \eqref{eq:ProblemPHI} also the corresponding optimizer is of interest.
To this end, we develop duality for problem \eqref{eq:PhiThetaGammaOverview}. This duality leads to a simple formula to obtain approximate  optimizers of the initial problem \eqref{eq:ProblemPHI} once the dual formulation \eqref{eq:PhiThetaGammaOverview} is solved. It shows that any optimizer $(\lambda^\star, (h^\star_i)_{i=1, ..., d}, g^\star)$ of \eqref{eq:PhiThetaGammaOverview} gives an approximate optimizer $\mu^\star$ of \eqref{eq:ProblemPHI} by setting $\mu^\star$ equal to the second marginal of $\pi^\star$, where $\pi^\star$ is defined by the Radon-Nikodym derivative
\begin{align}
		\frac{d\pi^\star}{d\theta}(x,y) := \beta_{\gamma}^\prime \Big(f(y) - g^\star(x) - \sum_{i=1}^d h^\star_i(y_i) - \lambda^\star c(x,y)\Big).
\end{align}

Problem $\phi_{\theta,\gamma}(f)$ fits into the standard framework in which neural networks can be applied to parametrize the functions $h_i \in C_b(\R)$ and $g \in C_b(\R^d)$. We justify this parametrization theoretically by giving conditions under which the approximation error vanishes for a infinite-size neural network. In Section \ref{sec:implementation}, we give details concerning the numerical solution of $\phi_{\theta,\gamma}(f)$ using neural networks, which encompasses the choice of the neural network structure, hyperparameters and optimization method.

This approach based on neural networks is the main reason to derive and study the penalized problem \eqref{eq:PhiThetaGammaOverview}. Nonetheless, problem \eqref{eq:PhiThetaGammaOverview} is interesting in its own right and by no means limited to the  application of neural networks: it may be efficiently solved using advanced first-order methods, see e.g.~\citeA{nesterov2012efficiency}. We thank an anonymous referee for pointing this out to us.

\vspace{5mm}
The remainder of the paper is structured as follows. In the subsequent Section \ref{ssec:literature}, we provide a overview of the relevant literature. Our main results can be found in Section \ref{sec:results}, which consists of three parts:  First, we state and prove the general form of the duality between \eqref{eq:ProblemPHI} and \eqref{eq:DualFormulationIntro} and derive some implications thereof. In the second part of Section \ref{sec:results}, we study the penalization introduced in equation \eqref{eq:PhiThetaGammaOverview} above. Third, we give conditions under which $\phi_{\theta,\gamma}(f)$ can be approximated with neural networks.
Section \ref{sec:implementation} gives implementation details.
Section \ref{sec:Examples} is devoted to four toy examples, which aim to shed some light on the developed concepts. In the final Section \ref{sec:DNB}, the acquired techniques are applied to a real world example. We thereby demonstrate how to implement robust risk aggregation with neural networks in practice.

\subsection{Related literature}
\label{ssec:literature}
There are three different strings of literature, which are relevant in the present context: Firstly, literature on risk aggregation; secondly, literature on model ambiguity and particularly on ambiguity sets constructed using the Wasserstein distance; thirdly, recent application of neural networks in finance and related optimization problems.
\subsubsection*{Risk aggregation}
In Section \ref{sec:DNB}, we motivate from an applied point of view why there is interest in risk bounds for the sum of losses of which the marginal distributions are known. The theoretical interest in this topic started with the following questions: How can one compute bounds for the distribution function of a sum of two random variables when the marginal distributions are fixed? This problem was solved in 1982 by \citeA{makarov1982estimates} and \citeA{ruschendorf1982random}. Starting with the work of \citeA{embrechts2006bounds} more than 20 years later, the higher dimensional version of this problem was studied extensively due to its relevance for risk management. We refer to \citeA{embrechts2015aggregation} and \citeA{puccetti2015extremal}  for an overview of the developments concerning \emph{risk aggregation under dependence uncertainty}, as this problem was coined. Let us mention that \citeA{puccetti2012computation} introduced the so-called \emph{rearrangement algorithm}, which is a fast procedure to numerically compute the bounds of interest. Applying this algorithm to real-world examples demonstrates a conceptual drawback of the assumption that no information concerning the dependence of the marginal risk is available: The implied lower and upper bound for the aggregated risk are impractically far apart.

Hence, some authors recently tried to overcome this drawback and to come up with more realistic bounds by including partial information about the dependence structure. For instance,  \citeA{puccetti2012bounds} discuss how positive, negative or independence information influence the above risk bounds; \citeA{bernard2017value} derive risk bounds with constraints on the variance of the aggregated risk;
\citeA{bernard2017risk} consider partially specified factor models for the dependence structure. The interested reader is referred to \citeA{ruschendorf2017risk} for a recent review of these and related approaches. Finally, we want to point out the intriguing contribution by \citeA{lux2016model}. These authors provide a framework which allows them to derive VaR-bounds if (a) extreme value information is available, (b) the copula linking the marginals is known on a subset of its domain and (c) the latter copula lies in the neighborhood of a reference copula as measured by a statistical distance. 

Since our paper aims to contribute to this string of literature, let us point out that the latter mentioned type of partial information about the dependence structure used in \citeA{lux2016model} is similar in spirit to our approach. We emphasize that Lux and Papapantoleon use statistical distances which are different to the transportation distance $d_c$ defined in the previous subsection.

\subsubsection*{Model Ambiguity}
There is an obvious connection of problem \eqref{eq:ProblemPHI}, which is studied in this paper, with the following minimax stochastic optimization problem
\begin{align} \label{eq:StochOptFramework}
\min_{x \in \mathbb{X}} \max_{Q \in \mathcal{Q}} \mathbb{E}^Q \left[ f(x, \xi) \right],\end{align}
where $\mathbb{X} \subset \R^m$, $f: \R^m \times \Xi \rightarrow \R$, $\xi$ is a random vector whose distribution $Q$ is supported on $\Xi \subset \R^d$ and $\mathcal{Q}$ is a nonempty set of probability distributions, referred to as \emph{ambiguity set}. Problems of this form recently became known as distributionally robust stochastic optimization problems. As pointed out by \citeA{shapiro2017distributionally}, there are two natural and somewhat different approaches to constructing the ambiguity set $\mathcal{Q}$. On the one hand, ambiguity sets have been defined by moment constraints, see ~\citeA{delage2010distributionally} and references therein. An alternative approach is to assume a reference probability distribution $\bar{Q}$ is given and define the ambiguity set by all distributions which are in the neighborhood of $\bar{Q}$ as measured by a statistical distance. 
To the best of our knowledge two distinct choices of this statistical distance have been established in the literature: The $\phi$-divergence and the Wasserstein distance. Concerning ambiguity sets constructed using the $\phi$-divergence we refer to\citeA{bayraksan2015data} and references therein. In the following, we focus on approaches which rely on the Wasserstein distance to account for model ambiguity. \citeA{pflug2007ambiguity} were the first to study these particular ambiguity sets. \citeA{esfahani2017data} showed that distributionally robust stochastic optimization problems over Wasserstein balls centered around a discrete reference distribution possess a tractable reformulation: under mild assumptions these problems belong to the same complexity class as their non-robust counterparts. The duality result driving this insight was also proven by \citeA{blanchet2016quantifying},\citeA{gao2016distributionally} and \citeA{bartl2017computational} based on different techniques and assumptions. These contribution indicate that distributionally robust stochastic optimization using the Wasserstein distance developed into an active field of research in recent years. For instance, \citeA{zhao2018datadriven} and \citeA{hanasusanto2016conic} adapted similar ideas in the context of two-stage stochastic programming and \citeA{chen2018distributionally} and \citeA{yang2017convex} study distributionally robust Markov decision processes using the Wasserstein distance. \citeA{obloj2018statistical} analyze a robust estimation method for superhedging prices relying on a Wasserstein ball around the empirical measure.

Most relevant in the context of our paper are the following two references:
\citeA{gao2017distributionally} put two Wasserstein-type-constraints on the probability distribution $Q$ in \eqref{eq:StochOptFramework}: 
$Q$ has to be close in Wasserstein distance to a reference distribution $\bar{Q}$, while the dependence structure implied by $Q$ has to be close, again in Wasserstein distance, to a specified reference dependence structure.
In their follow-up paper, \citeA{gao2017data} consider problem \eqref{eq:ProblemPHI} in the context of stochastic optimization, i.e.~in the framework \eqref{eq:StochOptFramework}. The contribution of this paper, i.e.~their duality result and the LP-formulation, is already reviewed in the above overview. In addition, the authors provide numerical experiments in portfolio selection and nonparametric density estimation.

\subsubsection*{Neural networks in finance and optimization}
Applications of neural networks have vastly increased in recent years. Most of the popularity arose from successes of neural networks related to data representation tasks, e.g.~related to pattern recognition, image classification, or task-specific artificial intelligence. In contrast to such a utilization, neural networks have also been applied strictly as a tool to solve certain optimization problems. This is the way we use neural networks in this paper, and they have found similar uses in various areas related to finance. Among others, they were applied to solving high dimensional partial differential equations and stochastic differential equations \cite<see e.g.>{beck2018solving,berner2018analysis,weinan2017deep}
as well as backward stochastic differential equation \cite{henry2017deep}, in optimal stopping~\cite{becker2018deep}, optimal hedging with respect to a risk measure \cite{buehler2018deep}, and superhedging \cite{eckstein2018computation}.

For more classical learning tasks where neural networks are applied, ideas from 
optimal transport and distributional robustness are also used. While the 
settings are often quite different in nature to the one in this paper, the 
optimization problems which are eventually implemented are 
nevertheless similar. Most related to the current paper are 
settings in which optimal transport type of constraints are 
solved via a penalization or regularization method. Examples
include generative models for images \cite<see e.g.>{gulrajani2017improved,roth2017stabilizing}, optimal transport and 
calculation of barycenters for images \cite<see e.g.>{seguy2017large}, 
martingale optimal transport \cite<see e.g.>{henry2019martingale},
or 
distributional robustness methods applied to learning tasks \cite<see e.g.>{blanchet2016robust,gao2016distributional}.

\section{Results}
\label{sec:results}

\subsection{Duality}
\label{ssec:duality}
 Let $X = X_1 \times X_2 \times\cdots \times X_d$ be a Polish space,
 and denote by $\mathcal{P}(X)$ the set of all Borel probability measures on $X$. Throughout, we fix a reference probability measure  $\bar{\mu} \in \mathcal{P}(X)$. For $i = 1,...,d$, we denote by $\mu_i:=\mu\circ\text{pr}_i^{-1}$ the $i$-th marginal of $\mu\in\mathcal{P}(X)$, where $\text{pr}_i\colon X\to X_i$ is the projection $\text{pr}_i(x):=x_i$.
 Further, let  $\kappa: X \rightarrow [1,\infty)$ 
 be a growth function of the form $\kappa(x_1,...,x_d) = \sum_{i=1}^d \kappa_i(x_i)$, where each $\kappa_i : X_i \rightarrow [1,\infty)$ is continuous and satisfies $\int_{X_i} \kappa_i \,d\bar{\mu}_i < \infty$. We further assume one of the following: Either $\kappa$ has compact sublevel sets,\footnote{This is for example satisfied if all $\kappa_i$ have compact sublevel sets, since the sublevel sets are by continuity closed and further by positivity of $\kappa_i$ it holds $\{\kappa \leq c\} \subseteq \{\kappa_1 \leq c\} \times ... \times \{\kappa_d \leq c\}$.} or $X_i = \mathbb{R}^{d_i}$ for all $i=1, ..., d$.
 Denote by $C_\kappa(X)$ and $U_\kappa(X)$ the spaces of all
 continuous, respectively upper semicontinuous functions $f:X\to \mathbb{R}$ such that 
$f/\kappa$ is bounded. Recall that $C_b(X)$ denotes the set of all continuous and bounded functions on $X$.

In the following we fix a continuous function $c:X\times X\to[0,\infty)$ such that $c(x,x)=0$ for all $x\in X$. The cost of transportation between $\bar{\mu}$ and $\mu$ in $\mathcal{P}(X)$ with respect to the cost function $c$ is defined as 
\begin{align} \label{eq:DefTransDistance}
d_c(\bar{\mu},\mu) := \inf_{\pi \in \Pi(\bar{\mu},\mu)} \int_{X\times X} c\left(x,y\right) \pi(dx,dy),
\end{align}
where $\Pi(\bar{\mu}_1,\dots,\bar{\mu}_d)$ denotes the
set of all $\mu\in\mathcal{P}(X)$ such that $\mu_i = \bar{\mu}_i$ for all $i=1,\dots,d$. The elements in $\Pi(\bar{\mu}_1,\dots,\bar{\mu}_d)$ are referred to as  couplings of the marginals $\bar{\mu}_1,\dots,\bar{\mu}_d$. Although the computation of the convex conjugate in the following result relies on \citeA{bartl2017computational}, we do not need their growth condition on the cost function $c$.
The main reason we do not require this condition is that continuity from above of the functional \eqref{functional:main} - which corresponds to tightness of the considered set of measures - is already obtained by the imposed marginal constraints.

\begin{theorem}\label{thm:main}
For every 
convex and lower semicontinuous function
$\varphi: [0,\infty] \to [0,\infty]$ such that $\varphi(0) = 0$ and $\varphi(\infty) = \infty$, and all $f \in U_\kappa(X)$, it holds that
\begin{align}
	&\max_{\mu \in \Pi(\bar{\mu}_1,\dots,\bar{\mu}_d)} \Big\{\int_{X} f\, d\mu - \varphi( d_c(\bar{\mu},\mu)) \Big\}  \label{functional:main}\\
	&= \inf_{\lambda \geq 0,\, h_i \in C_{\kappa_i}(X_i)} \Big\{\varphi^\ast(\lambda) + \sum_{i=1}^d \int_{X_i} h_i d\bar{\mu}_i +\int_{X} \sup_{y \in X} \Big[ f(y) - \sum_{i=1}^d h_i(y_i) - \lambda c(x,y) \Big]\, \bar{\mu}( dx) \Big\},  \nonumber
	\end{align}
	where $\varphi^\ast$ denotes the convex conjugate of $\varphi$, i.e.~$\varphi^\ast(\lambda) = \sup_{x \geq 0} \lbrace \lambda x - \varphi(x) \rbrace$.
\end{theorem}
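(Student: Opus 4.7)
The plan is to prove the duality via a minimax argument combined with Kantorovich duality for a multi-marginal transport problem. The easy direction ($\leq$, weak duality) is a direct Lagrangian computation: for any admissible $\mu, \lambda, h_i$, the marginal constraint $\mu_i = \bar\mu_i$ gives $\int h_i\, d\mu_i = \int h_i\, d\bar\mu_i$, so taking an optimal coupling $\pi \in \Pi(\bar\mu, \mu)$ and applying the Fenchel--Young inequality $-\varphi(t) \leq \varphi^\ast(\lambda) - \lambda t$ together with a pointwise bound by the supremum over $y$ yields the desired upper bound on $\int f\, d\mu - \varphi(d_c(\bar\mu, \mu))$.

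For the reverse inequality, the first step is to rewrite the primal as a supremum over couplings. Since $\varphi$ is nondecreasing (a consequence of convexity, non-negativity, and $\varphi(0) = 0$), for each fixed $\mu$ one has $-\varphi(d_c(\bar\mu, \mu)) = \sup_{\pi \in \Pi(\bar\mu,\mu)} \{-\varphi(\int c\, d\pi)\}$, hence
\[
\text{LHS} \;=\; \sup_{\pi} \Big\{ \int f(y)\, d\pi(x,y) - \varphi\Big(\int c\, d\pi\Big) \Big\},
\]
where the supremum runs over all $\pi \in \mathcal{P}(X \times X)$ whose first marginal is $\bar\mu$ and for which the $i$-th marginal of the second factor is $\bar\mu_i$. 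Invoking $-\varphi(t) = \inf_{\lambda \geq 0}\{\varphi^\ast(\lambda) - \lambda t\}$ inside the supremum and then swapping $\sup_\pi$ with $\inf_{\lambda \geq 0}$ reduces the problem to a multi-marginal Kantorovich duality for the linear cost $f(y) - \lambda c(x,y)$. Its dual is precisely the expression on the right-hand side of the theorem, once the Kantorovich potential conjugate to the $\bar\mu$-marginal is eliminated by a $c$-transform. The computation of this multi-marginal dual is the part that follows \citeA{bartl2017computational}.

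The main technical obstacle is the $\sup$--$\inf$ swap and, relatedly, the attainment of the primal maximum. Both rely on tightness of the admissible set of couplings, which is immediate from Prokhorov's theorem applied to the fixed marginals $\bar\mu$ and $\bar\mu_i$. Together with upper semicontinuity of $f$ (and the $\kappa$-growth condition to control the tail of $\int f\, d\pi$), lower semicontinuity of $\pi \mapsto \int c\, d\pi$ and of $\varphi$, this tightness provides the compactness needed for a Sion-type minimax theorem and a Weierstrass-type attainment argument. It is precisely this built-in tightness, absent in unconstrained transport settings, that allows one to dispense with the growth assumption on $c$ required by Bartl et al.
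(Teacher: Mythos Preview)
Your route is genuinely different from the paper's. The paper does not perform a Sion swap followed by a $(d{+}1)$-marginal Kantorovich duality. Instead it writes the dual functional as an inf-convolution $\psi=\psi_1\,\Box\,\psi_2$, where $\psi_1$ is the $d$-marginal optimal-transport functional and $\psi_2(f)=\inf_{\lambda\ge 0}\{\varphi^\ast(\lambda)+\int f^{\lambda c}\,d\bar\mu\}$. The conjugates are computed separately: $\psi_1^\ast$ is the indicator of $\Pi(\bar\mu_1,\dots,\bar\mu_d)$, while $\psi_2^\ast(\mu)=\varphi(d_c(\bar\mu,\mu))$ is the identity taken from \citeA{bartl2017computational}. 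Then $\psi^\ast=\psi_1^\ast+\psi_2^\ast$ equals the primal penalty, and duality together with attainment follow from an abstract Fenchel--Moreau theorem in \citeA{bartl2017robust}, once $\psi$ is shown to be continuous from above on $C_\kappa(X)$. Most of the technical effort goes into proving this continuity for $\psi_1$; that is where the marginal constraints (rather than any growth of $c$) provide the required tightness---the same phenomenon you correctly identify, but exploited through continuity from above instead of Sion.

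Your scheme can be completed, but two points need more than a wave. First, Sion's hypotheses are not met on the full compact set of admissible couplings: whenever $\int c\,d\pi=\infty$, the map $\lambda\mapsto \varphi^\ast(\lambda)-\lambda\int c\,d\pi$ fails to be lower semicontinuous at $\lambda=0$, so a direct appeal to Sion does not go through. A repair is to intersect with $\{\pi:\int c\,d\pi\le M\}$ (still weakly compact, by lower semicontinuity of $\pi\mapsto\int c\,d\pi$), apply Sion there, and let $M\to\infty$; but this has to be spelled out. Second, your attribution is off: \citeA{bartl2017computational} is used in the paper for the conjugate of the single-marginal Wasserstein functional $\psi_2$, not for any multi-marginal Kantorovich duality. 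For your inner problem with upper semicontinuous cost $f(y)-\lambda c(x,y)$ you need a Kellerer-type result, and then a separate argument that replacing the $\bar\mu$-potential $g$ by the pointwise $c$-transform $x\mapsto\sup_y[f(y)-\sum_i h_i(y_i)-\lambda c(x,y)]$ does not change the infimum. This last step is precisely where the absence of a growth condition on $c$ bites, since that supremum need not lie in $C_\kappa(X)$. The paper's inf-convolution organisation sidesteps both difficulties.
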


\begin{proof}
		1) Define the optimal transport functional
		$\psi_1 : C_\kappa(X) \to \mathbb{R}$ by
		\[ \psi_1(f) := \inf \Big\{ \sum_{i=1}^d \int_{X_i} h_i \,d\bar{\mu}_i : h_i \in C_{\kappa_i}(X_i) \text{ such that } \oplus_{i=1}^d h_i \geq f\Big\},
		\]
		where $\oplus_{i=1}^d h_i:X\to\mathbb{R}$ is defined as $\oplus_{i=1}^d h_i(x):=\sum_{i=1}^d h_i(x_i)$. 
	    We show that $\psi_1$ is continuous from above on $C_\kappa(X)$, i.e.~for every sequence $(f^n)$ in  $C_\kappa(X)$ such that $f^n \downarrow f\in C_\kappa(X)$ one has $\psi_1(f^n) \downarrow \psi_1(f)$.
	    Fix $\varepsilon>0$ and $h_i\in C_{\kappa_i}(X_i)$ such that $\oplus_{i=1}^d h_i\ge f$ and $\psi_1(f)+\frac{\varepsilon}{3}\ge \sum_{i=1}^d \int_{X_i} h_i\,d\bar{\mu}_i$. Since $f^1\in C_{\kappa}(X)$ and $h_i \in C_{\kappa_i}(X_i)$, there exists a constant $M>0$ such that $|f^1|\le M \kappa$ and $|h_i| \le M \kappa_i$. By assumption,  $\int_{X_i}\kappa_i\,d\bar{\mu}_i<+\infty$ for all $i=1,\dots,d$. We now show that $\psi_1(f^{n_0}) \leq \psi_1(f) + \varepsilon$, which we do separately depending on whether we assume $\kappa$ has compact sublevel sets, or $X_i = \mathbb{R}^{d_i}$.
	    \begin{itemize}
	    	\item Let $\kappa$ have compact sublevel sets. Choose $z > 0$ such that $\sum_{i=1}^d \int_{X_i} 4 M (\kappa_i - \frac{z}{d})^+ \,d\bar{\mu}_i \leq \frac{\varepsilon}{3}$. By Dini's lemma there exists $n_0 \in \mathbb{N}$ such that $f^{n_0} \leq \oplus_{i=1}^d h_i + \frac{\varepsilon}{3}$ on the compact $\{\kappa \leq 2z\}$. Since it further holds $\kappa \eins_{\{\kappa > 2z\}} \leq 2 (\kappa - z)^+ \leq 2 \oplus_{i=1}^d (\kappa_i - \frac{z}{d})^+$, one obtains 
	    	\begin{align*}
	    	f^{n_0} &= \eins_{\{\kappa \leq 2 z\}} f^{n_0} + \eins_{\{\kappa > 2 z\}} f^{n_0} \\
	    	&\leq \eins_{\{\kappa \leq 2 z\}} \oplus_{i=1}^d h_i + \eins_{\{\kappa > 2 z\}} f^{n_0} + \frac{\varepsilon}{3} \\
	    	&= \oplus_{i=1}^d h_i + \eins_{\{\kappa > 2 z\}} (f^{n_0} - \oplus_{i=1}^d h_i) + \frac{\varepsilon}{3} \\
	    	&\leq \oplus_{i=1}^d h_i + \eins_{\{\kappa > 2 z\}} 2 M \kappa + \frac{\varepsilon}{3} \\
	    	&\leq \oplus_{i=1}^d \Big(h_i + 4M\big(\kappa_i - \frac{z}{d}\big)^+\Big) + \frac{\varepsilon}{3}
	    	\end{align*}
	    	and hence $\psi_1(f^{n_0}) \leq \sum_{i=1}^d \int_{X_i} h_i + 4M(\kappa_i - \frac{z}{d})^+ \,d\bar{\mu}_i + \frac{\varepsilon}{3} \leq \psi_1(f) + \varepsilon$.
		    \item Let $X_i = \mathbb{R}^{d_i}$. Choose $z > 0$ such that $\sum_{i=1}^d \int_{X_i} 4 M (\kappa_i - \frac{z}{d})^+ \,d\bar{\mu}_i \leq \frac{\varepsilon}{6}$. Choose $R_i > 0$ such that $\sum_{i=1}^d \bar{\mu}_i(\overline{B(0, R_i)}^c) \cdot 4 M z < \frac{\varepsilon}{6}$, where $B(0, r)$ is the open euclidean ball around $0$ of radius $r$. By Dini's lemma there exists $n_0 \in \mathbb{N}$ such that $f^{n_0} \leq \oplus_{i=1}^d h_i + \frac{\varepsilon}{3}$ on the compact $K := K_1 \times ... \times K_d := \overline{B(0, R_1+2)} \times ... \times \overline{B(0, R_d+2)}$. Since $\eins_{B(0, R_i+1)^c}$ is upper semicontinuous, we can find continuous and bounded functions $g_i$ such that $\eins_{B(0, R_i+1)^c} \leq g_i$ and $\sum_{i=1}^d \int_{X_i} g_i \,d\bar{\mu}_i \cdot 4 M z < \frac{\varepsilon}{6}$ (since $g_i$ approximates $\eins_{B(0, R_i+1)^c}$ and $\eins_{B(0, R_i+1)^c} \leq \eins_{\overline{B(0, R_i)}^c}$). With some of the same steps as in the case where $\kappa$ has compact sublevel sets, one obtains
	    	\begin{align*}
	    	f^{n_0} &= \eins_{K} f^{n_0} + \eins_{K^c} f^{n_0} \\
	    	&\leq \oplus_{i=1}^d h_i + \eins_{K^c} (f^{n_0} - \oplus_{i=1}^d h_i) + \frac{\varepsilon}{3} \\
	    	&\leq \oplus_{i=1}^d h_i + \eins_{K^c} \eins_{\{\kappa > 2 z\}} 2 M \kappa + \eins_{K^c} \eins_{\{\kappa \leq 2 z\}} 2 M \kappa + \frac{\varepsilon}{3} \\
	    	&\leq \oplus_{i=1}^d \Big(h_i + 4M\big(\kappa_i - \frac{z}{d}\big)^+\Big) + \eins_{K^c} 4Mz + \frac{\varepsilon}{3} \\
	    	&\leq \oplus_{i=1}^d \Big(h_i + 4M\big(\kappa_i - \frac{z}{d}\big)^+\Big) + (\oplus_{i=1}^d \eins_{K_i^c}) 4Mz + \frac{\varepsilon}{3} \\
	    	&\leq \oplus_{i=1}^d \Big(h_i + 4M\big(\kappa_i - \frac{z}{d}\big)^+ +4Mz \cdot g_i \Big) + \frac{\varepsilon}{3}
	    	\end{align*}
		    and hence $\psi_1(f^{n_0}) \leq \sum_{i=1}^d \int_{X_i} h_i + 4M(\kappa_i - \frac{z}{d})^+ + 4Mz \cdot g_i \,d\bar{\mu}_i + \frac{\varepsilon}{3} \leq \psi_1(f) + \varepsilon$.
	    \end{itemize}
	    This shows that $\psi_1$ is continuous from above on $C_\kappa(X)$. Moreover, its convex conjugate is given by 
		\begin{align}
		\psi^\ast_{1,C_\kappa}(\mu)&=\sup_{f\in C_\kappa(X)}\Big(\int_{X} f\,d\mu-\inf_{\substack{h_i\in C_{\kappa_i}(X_i)\nonumber\\\oplus_{i=1}^d h_i\ge f }} \sum_{i=1}^d \int_{X_i} h_i\,d\bar{\mu}_i\Big)\nonumber\\
		&=\sup_{h_i\in C_{\kappa_i}(X_i)}\sup_{\substack{f\in C_\kappa(X)\\\oplus_{i=1}^d h_i\ge f }}\Big(\int_{X} f\,d\mu- \sum_{i=1}^d \int_{X_i} h_i\,d\bar{\mu}_i\Big) \nonumber\\
		&=\sup_{h_i\in C_{\kappa_i}(X_i)}\sum_{i=1}^d \Big(\int_{X} h_i \,d\mu- \int_{X_i} h_i\,d\bar{\mu}_i\Big)
		=\begin{cases} 0&\mbox{if }\mu\in\Pi(\bar{\mu}_1,\dots,\bar{\mu}_d)\\+\infty&\mbox{else}\end{cases}\label{conj1}
		\end{align}
		for all $\mu\in\mathcal{P}_\kappa(X)$, where $\mathcal{P}_\kappa(X)$ denotes the set of all $\mu\in\mathcal{P}(X)$ such that $\kappa\in L^1(\mu)$. Notice that 
		$\Pi(\bar{\mu}_1,\dots,\bar{\mu}_d)\subset \mathcal{P}_\kappa(X)$.

	2) Define $\psi_2 : C_\kappa(X) \to \mathbb{R}\cup\{+\infty\}$ by 
	\[ \psi_2(f):= \inf_{\lambda \geq 0} \Big( \varphi^\ast(\lambda) + \int_X \sup_{y \in X} \big[ f(y) - \lambda c(x,y) \big] \,\bar{\mu}( dx) \Big).
	\]
	By definition $\psi_2$ is convex and increasing.
	Further, since $\inf_{\lambda\ge 0}\varphi^\ast(\lambda)=\varphi^\ast(0)=0$ and $f^{\lambda c}(x):=\sup_{y \in X}\{  f(y) - \lambda c(x,y)\} \ge f(x)$ for all $\lambda\ge 0$, it follows that
	\[
	\psi_2(f)\ge\inf_{\lambda\ge 0}\Big(\varphi^\ast(\lambda)+\int_X f\,d\bar{\mu}\Big)>-\infty
	\]
	for all $f\in C_\kappa(X)$, where we use that $f\in L^1(\bar{\mu})$. For the convex conjugates one has
	\begin{align}
	\psi^\ast_{2,C_\kappa}(\mu)&:=\sup_{f\in C_\kappa(X)}\Big(\int_X f\,d\mu-\psi_2(f)\Big)\nonumber\\&=\sup_{f\in U_\kappa(X)}\Big(\int_X f\,d\mu-\psi_2(f)\Big)=:	\psi^\ast_{2,U_\kappa}(\mu)=\varphi( d_c(\bar{\mu},\mu)) \label{conj2}
	\end{align}
	for all $\mu\in\mathcal{P}_\kappa(X)$. Indeed, for every $\mu \in \mathcal{P}_\kappa(X)$ one has
	\[
	\psi_{2,U_\kappa}^\ast(\mu) \geq \psi_{2,C_\kappa}^\ast(\mu) \geq \psi_{2,C_b}^\ast(\mu) = \varphi( d_c(\bar{\mu},\mu)),
	\]
	where the last equality is shown in \citeA[Proof of Thm.~2.4, Step 4]{bartl2017computational}, notably without using the growth condition for $c$ imposed in \citeA{bartl2017computational}. It remains to show that $\psi_{2,U_\kappa}^\ast(\mu) \leq \varphi( d_c(\bar{\mu},\mu))$. Since $\varphi(\infty) = \infty$, the case $ d_c(\bar{\mu},\mu) = \infty$ is obvious. Suppose $ d_c(\bar{\mu},\mu) < +\infty$.
	Note that $\int_X f^{\lambda c}\, d\bar{\mu}$ is well-defined since $f^{\lambda c} \geq f \in L^1(\bar{\mu})$, so that the negative part of the integral is finite. 
	Further, by eliminating redundant choices in supremum and infimum of the convex conjugate, one obtains
	\[
	\psi_{2,U_\kappa}^\ast(\mu) = \sup_{\substack{f\in U_{\kappa}(X)\\\psi_2(f) < \infty}} \Big\{ \int_X f d\mu - \inf_{\substack{\lambda \geq 0,\,\varphi^{\ast}(\lambda) < \infty,\\\int_X f^{\lambda c} d\bar{\mu} < \infty}}\Big(\varphi^{\ast}(\lambda) + \int_X f^{\lambda c} d\bar{\mu}\Big)\Big\}.
	\]
	For every $\varepsilon>0$, $f \in U_\kappa(X)$ and $\lambda \geq 0$ such that $\psi_2(f) < +\infty$, $\varphi^\ast(\lambda) < +\infty, \int_X f^{\lambda c} d\bar{\mu} < +\infty$, it follows that $\int_X f \,d\mu$, $\varphi^{\ast}(\lambda)$ and $\int_X f^{\lambda c} \,d\bar{\mu}$ are real numbers, so that
	\begin{align*}
	&\int_X f \,d\mu - \varphi^\ast(\lambda) - \int_X f^{\lambda c} \,d\bar{\mu} -\varepsilon\\
	 & \leq \int_X f \, d\mu - \lambda  d_c(\bar{\mu},\mu) + \varphi( d_c(\bar{\mu},\mu)) - \int_X f^{\lambda c} d\bar{\mu}- \varepsilon\\
	&\le \int_{X\times X} f(y) \,\pi(dx,dy) - \int_{X\times X} \lambda c(x,y) \,\pi(dx,dy) -\int_{X\times X} f^{\lambda c}(x)\, \pi(dx,dy) + \varphi( d_c(\bar{\mu},\mu))\\
	&\leq  \int_{X\times X} \big[\lambda c(x,y) + f^{\lambda c}(x) - \lambda c(x,y) - f^{\lambda c}(x)\big] \,\pi(dx,dy) + \varphi( d_c(\bar{\mu},\mu)) \\
	&= \varphi( d_c(\bar{\mu},\mu)),
	\end{align*}
	where $\pi \in \Pi(\bar{\mu},\mu)$ is such that $\lambda  d_c(\bar{\mu},\mu)+\varepsilon \ge \int_{X\times X}\lambda  c\, d\,\pi$, and where we used that $\varphi^\ast(\lambda) \geq \lambda  d_c(\bar{\mu},\mu) - \varphi( d_c(\bar{\mu},\mu))$ and $f(y) \leq \lambda c(x,y) + f^{\lambda c}(x)$. Taking the supremum over all such $f$ and $\lambda$ implies $\psi_{2,U_\kappa}^{\ast}(\mu) \leq \varphi( d_c(\bar{\mu},\mu))$.
	
	3) For $f\in U_\kappa(X)$ define the convolution
	\begin{align*}
	&\psi(f) := \inf_{g \in C_\kappa(X)} \left\{ \psi_1(g) + \psi_2(f-g)\right\} \\
	&= \inf_{\lambda \geq 0,\, h_i \in C_b(X_i)} \left\lbrace\varphi^\ast(\lambda) + \sum_{i=1}^d \int_{X_i} h_i d\bar{\mu}_i +\int_{X} \sup_{y \in X} \Big[ f(y) - \sum_{i=1}^d h_i(y_i) - \lambda c(x,y) \Big]\, \bar{\mu}( dx) \right\rbrace.
	\end{align*}
	For the associated convex conjugates it follows from
	\eqref{conj1} and \eqref{conj2} that
	\begin{align*}
	\psi^\ast_{C_\kappa}(\mu)&=\sup_{f\in C_\kappa(X)} \sup_{g\in C_\kappa(X)}\Big(\int_X f\,d\mu-\psi_1(g)-\psi_2(f-g)  \Big)\\
	&=\sup_{g\in C_\kappa(X)} \Big(\int_X g\,d\mu-\psi_1(g)\Big)+
	\sup_{f\in C_\kappa(X)}\Big(\int_X f\,d\mu-\psi_2(f)  \Big)=\psi^\ast_{1,C_\kappa}(\mu)+\psi^\ast_{2,C_\kappa}(\mu)\\
	&=\psi^\ast_{1,C_\kappa}(\mu)+\psi^\ast_{2,U_\kappa}(\mu)
	=\sup_{g\in C_\kappa(X)} \Big(\int_X g\,d\mu-\psi_1(g)\Big)+
	\sup_{f\in U_\kappa(X)}\Big(\int_X f\,d\mu-\psi_2(f)  \Big)\\
	&=\sup_{f\in U_\kappa(X)} \sup_{g\in C_\kappa(X)}\Big(\int_X f\,d\mu-\psi_1(g)-\psi_2(f-g)  \Big)\\&=\psi^\ast_{U_\kappa}(\mu)=\begin{cases}\varphi\big( d_c(\bar{\mu},\mu)\big)&\mbox{if }\mu\in\Pi(\bar{\mu}_1,\dots,\bar{\mu}_d)\\+\infty&\mbox{else}\end{cases}
	\end{align*}
	for all $\mu\in\mathcal{P}_\kappa(X)$.
	
	4) For every $f\in U_\kappa(X)$ one has
	\[
	\psi(f)\ge \int_X f\,d\bar{\mu}- \psi^\ast_{U_\kappa}(\bar{\mu})=
	\int_X f\,d\bar{\mu}>-\infty
	\]
	since $\psi^\ast_{U_\kappa}(\bar{\mu})=\varphi( d_c(\bar{\mu},\bar{\mu}))=\varphi(0)=0$ and $f\in L^1(\mu)$. This shows that $\psi:U_\kappa(X)\to\mathbb{R}$. By definition, $\psi$ is convex and increasing. Moreover, $\psi$ is continuous from above on $C_\kappa(X)$, since for every sequence $(f^n)$ in $C_\kappa(X)$ such that $f^n\downarrow 0$ one has
	\begin{align*}
	\inf_{n\in\mathbb{N}}\psi(f^n)&=\inf_{n\in\mathbb{N}}\inf_{g\in C_\kappa(X)}\big(\psi_1(g)+\psi_2(f^n-g)\big) \\
	&=\inf_{g\in C_\kappa(X)}\inf_{n\in\mathbb{N}}\big(\psi_1(f^n-g)+\psi_2(g)\big) \\
	&=\inf_{g\in C_\kappa(X)}\big(\psi_1(-g)+\psi_2(g)\big) = \psi(0),
	\end{align*}
	where we use that $\psi_1$ is continuous from above on $C_\kappa(X)$ by the first step. Since also $\psi^\ast_{C_\kappa}=\psi^\ast_{U_\kappa}$ on $\mathcal{P}_\kappa(X)$ by the third step, it follows from \cite[Theorem 2.2.]{bartl2017robust} and \cite[Proposition 2.3.]{bartl2017robust} that $\psi$ has the  dual representation
	\begin{align*}
	\psi(f)&=\max_{\mu \in \mathcal{P}_\kappa(X)} \Big\{\int_{X} f\, d\mu - \psi^\ast_{C_\kappa}(\mu)\Big\} =\max_{\mu \in \Pi(\bar{\mu}_1,\dots,\bar{\mu}_d)} \Big\{\int_{X} f\, d\mu - \varphi( d_c(\bar{\mu},\mu)) \Big\}
	\end{align*}
	for all $f\in U_\kappa(X)$.
\end{proof}

\begin{corollary}
\label{coro:DualityWassersteinball}
For every $f\in U_\kappa(X)$ one has
\begin{align}
&\max_{\substack{\mu \in \Pi(\bar{\mu}_1,\dots,\bar{\mu}_d)\\  d_c(\bar{\mu},\mu)\le\rho}} \int_{X} f\, d\mu \label{eq:primal1} \\
&= \inf_{\lambda \geq 0,\, h_i \in C_{\kappa_i}(X_i)} \Big\{ \rho\lambda + \sum_{i=1}^d \int_{X_i} h_i\, d\bar{\mu}_i 
+ \int_{X} \sup_{y \in X} \Big[ f(y) - \sum_{i=1}^d h_i(y_i) - \lambda c(x,y) \Big]\, \bar{\mu}( dx) \Big\} \label{eq:dual1}
\end{align}
for each radius $\rho\ge 0$.
\end{corollary}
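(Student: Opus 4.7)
The plan is to derive Corollary~\ref{coro:DualityWassersteinball} as a direct specialization of Theorem~\ref{thm:main} by choosing the penalty $\varphi$ to be the convex indicator of the interval $[0,\rho]$. Concretely, I would define
\[
\varphi(x) := \begin{cases} 0 & \text{if } x \in [0,\rho],\\ +\infty & \text{if } x \in (\rho,\infty],\end{cases}
\]
and verify that $\varphi:[0,\infty]\to[0,\infty]$ is convex, lower semicontinuous, satisfies $\varphi(0)=0$ and $\varphi(\infty)=\infty$, so that the hypotheses of Theorem~\ref{thm:main} are met for any $f\in U_\kappa(X)$.

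With this choice, the term $\varphi(d_c(\bar{\mu},\mu))$ on the left-hand side of Theorem~\ref{thm:main} acts precisely as the hard constraint $d_c(\bar{\mu},\mu)\leq\rho$: it is $0$ for admissible $\mu$ and $+\infty$ otherwise. Hence the maximum in \eqref{functional:main} reduces to the constrained maximum \eqref{eq:primal1}. Note that the maximum is still attained because Theorem~\ref{thm:main} guarantees the maximum and the effective feasible set is unchanged.

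For the right-hand side, a short computation of the convex conjugate gives, for $\lambda\geq 0$,
\[
\varphi^\ast(\lambda) \;=\; \sup_{x\geq 0}\{\lambda x - \varphi(x)\} \;=\; \sup_{x\in[0,\rho]}\lambda x \;=\; \rho\lambda.
\]
Substituting this expression into the dual side of Theorem~\ref{thm:main} yields exactly the right-hand side \eqref{eq:dual1}, and the infimum already runs over $\lambda\geq 0$, so there is no issue with the behaviour of $\varphi^\ast$ on negative arguments.

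There is no real obstacle here; the entire content is in selecting the correct $\varphi$ and checking it meets the regularity requirements of Theorem~\ref{thm:main}. The only small point to be careful about is the lower semicontinuity of $\varphi$ at the boundary point $\rho$, which holds because the sublevel set $\{\varphi \leq c\} = [0,\rho]$ is closed for every $c\in[0,\infty)$.
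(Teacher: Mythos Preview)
Your proposal is correct and follows exactly the paper's own proof: apply Theorem~\ref{thm:main} with $\varphi$ the convex indicator of $[0,\rho]$, so that $\varphi^\ast(\lambda)=\rho\lambda$ and the penalty becomes the hard constraint $d_c(\bar{\mu},\mu)\le\rho$.
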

\begin{proof}
This follows directly from Theorem \ref{thm:main} for $\varphi$ given by $\varphi(x)=0$ if $x\le\rho$ and $\varphi(x)=+\infty$ if $x>\rho$. In that case the conjugate is given by $\varphi^\ast(\lambda)=\rho\lambda$.
\end{proof}

\begin{remark}
Let us comment on the interpretation of the dual problem \eqref{eq:dual1}:
Roughly speaking, in case $\rho = \infty$, the above result collapses to the duality of multi-marginal optimal transport. On the other hand, if $\rho = 0$, both the primal problem \eqref{eq:primal1} and the dual problem \eqref{eq:dual1} reduce to $\int f \,d\bar{\mu}$. Finally, if one drops the constraint $\mu \in \Pi(\bar{\mu}_1,\dots,\bar{\mu}_d)$ in the primal formulation \eqref{eq:primal1}, the functions $h_1 = h_2 = \dots = 0$.
\end{remark}

From a computational point of view the penalty function $\varphi(x)=x$ is of particular interest since the optimization in  Theorem \ref{thm:main} over the Lagrange multiplier $\lambda$ disappears.

\begin{corollary}
For every $f\in U_\kappa(X)$ one has
\begin{align*}
&\max_{\mu \in \Pi(\bar{\mu}_1,\dots,\bar{\mu}_d)} \Big\{\int_{X} f\, d\mu -  d_c(\bar{\mu},\mu) \Big\} \\
&= \inf_{h_i \in C_{\kappa_i}(X_i)} \Big\{ \sum_{i=1}^d \int_{X_i} h_i d\bar{\mu}_i 
+ \int_{X} \sup_{y \in X} \Big[ f(y) - \sum_{i=1}^d h_i(y_i) - c(x,y) \Big]\, \bar{\mu}( dx) \Big\}.
\end{align*}
\end{corollary}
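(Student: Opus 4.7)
The plan is to apply Theorem~\ref{thm:main} with the specific penalty function $\varphi(x) = x$ for $x \in [0,\infty)$ and $\varphi(\infty) = \infty$. This $\varphi$ is convex, lower semicontinuous, non-negative and satisfies $\varphi(0) = 0$, so all hypotheses of the theorem are met. The entire argument then reduces to computing the convex conjugate of $\varphi$ and observing that the remaining infimum over the Lagrange multiplier $\lambda$ is attained at the boundary.

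First I would compute
\[
\varphi^*(\lambda) = \sup_{x \geq 0}\bigl((\lambda-1)x\bigr) = \begin{cases} 0 & \text{if } \lambda \leq 1, \\ +\infty & \text{if } \lambda > 1. \end{cases}
\]
Plugging this into the dual formulation of Theorem~\ref{thm:main} restricts the outer infimum to $\lambda \in [0,1]$ and kills the $\varphi^*(\lambda)$ term on that interval. The remaining step is to observe that for every fixed $x \in X$ and every fixed family $(h_i)_i$, the map
\[
\lambda \mapsto \sup_{y \in X} \Big[ f(y) - \sum_{i=1}^d h_i(y_i) - \lambda c(x,y) \Big]
\]
is non-increasing on $[0,\infty)$ because $c \geq 0$. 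Integrating against $\bar{\mu}$ preserves monotonicity, so the dual objective is non-increasing in $\lambda$ on $[0,1]$, and the infimum is therefore attained at $\lambda = 1$. This collapses the optimization onto the marginal potentials $h_i \in C_{\kappa_i}(X_i)$ alone, with $c(x,y)$ appearing in place of $\lambda c(x,y)$, yielding exactly the right-hand side of the corollary.

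There is no genuine obstacle to overcome; the only delicate point is that $\lambda = 1$ must indeed be admissible, i.e.\ $\varphi^*(1) = 0$ rather than $+\infty$, which is why the case distinction for $\varphi^*$ must use the closed condition $\lambda \leq 1$. Everything else is bookkeeping inherited from Theorem~\ref{thm:main}.
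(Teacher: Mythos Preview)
Your proof is correct and follows essentially the same approach as the paper: apply Theorem~\ref{thm:main} with $\varphi(x)=x$, compute $\varphi^\ast(\lambda)=0$ for $\lambda\in[0,1]$ and $+\infty$ otherwise, and observe that the infimum over $\lambda$ is attained at $\lambda=1$. You are in fact slightly more explicit than the paper in justifying the last step via monotonicity of the inner supremum in $\lambda$.
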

\begin{proof}
This follows from Theorem \ref{thm:main} for $\varphi(y)=y$. Indeed, as the convex conjugate is given by $\varphi^\ast(\lambda)=0$ for $0\le\lambda\le 1$ and $\varphi^\ast(\lambda)=+\infty$ for $\lambda>1$,
the infimum in Theorem \ref{thm:main} is attained at $\lambda=1$.
\end{proof}

\begin{corollary}[\citeA{gao2017data}]\label{coro:LPreformulation}
Let $f(x) = \max_{1\leq m \leq M} (a^m)^\top x + b^m$ for $x \in \mathbb{R}^d$, $a^m \in \mathbb{R}^d$, and $b^m \in \mathbb{R}$. Let $\bar{\mu} = \frac{1}{n} \sum_{j=1}^n \delta_{x^j}$ for given points $x^1,\dots, x^n$ in $\mathbb{R}^d$.\footnote{Note that $\delta_{x}(A) = 1$ if $x\in A$, and $\delta_{x}(A) = 0$ otherwise.} Let the same points $x^1,\dots, x^n$ define the sets $X_i$, i.e.~$X_i = \lbrace x_i^1,\dots,x_i^n \rbrace$ and $X= X_1\times \dots \times X_d$. Let the cost function $c$ be additively separable, i.e.~$c(x,y) = \sum_{i=1}^d c_i(x_i,y_i)$. 
Then, the dual problem \eqref{eq:dual1} is equivalent to the linear program
\begin{align} \label{eq:dualLP}
&\min_{\lambda,\, h_i(j),\, g(j),\, u_i(j,m)} \, \Big\{\lambda \rho  + \frac{1}{n} \sum_{i=1}^d \sum_{j=1}^n h_i(j) + \frac{1}{n} \sum_{j=1}^n g(j)\Big\} \\
&\text{s.t.: } g(j) \geq b^m + \sum_{i=1}^d u_i(j,m) \hspace{21.5mm} j=1,\dots,n;\, m=1,\dots,M \\
&  \, \,  u_i(j,m) \geq a_i^m x_i^k  - h_i(k) - \lambda c_i(x_i^j,x_i^k) \qquad i=1,\dots,d;\, m=1,\dots,M;\, j,k = 1,\dots,n \label{eq:LPconstraint2} \\
&\hspace{12mm} \lambda \geq 0.
\end{align}
\end{corollary}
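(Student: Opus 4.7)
The plan is to exploit the discrete, finite-dimensional nature of every ingredient: $X = X_1 \times \cdots \times X_d$ is a finite product of finite sets, $\bar{\mu}$ and each marginal $\bar{\mu}_i$ are uniform on finitely many atoms, the cost $c$ is additively separable, and $f$ is a maximum of finitely many affine functions. Each of these lets one piece of \eqref{eq:dual1} collapse into a finite-dimensional linear object.

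First I would discretize the function variables and the integrals. Since $X_i = \{x_i^1, \dots, x_i^n\}$ is finite, any $h_i \in C_{\kappa_i}(X_i)$ is identified with a vector $(h_i(1), \dots, h_i(n)) \in \R^n$ via $h_i(j) := h_i(x_i^j)$. Because $\bar{\mu} = \tfrac{1}{n}\sum_{j=1}^n \delta_{x^j}$ and hence $\bar{\mu}_i = \tfrac{1}{n}\sum_{j=1}^n \delta_{x_i^j}$, the integrals in \eqref{eq:dual1} collapse to $\tfrac{1}{n}\sum_{j=1}^n h_i(j)$ and $\tfrac{1}{n}\sum_{j=1}^n G(x^j)$, where $G(x) := \sup_{y \in X}\bigl[f(y) - \sum_i h_i(y_i) - \lambda c(x, y)\bigr]$. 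This already reproduces the linear objective of the LP, and it only remains to encode $G(x^j)$ by linear constraints.

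Second I would evaluate $G(x^j)$ using the structures of $f$ and $c$. Because $f(y) = \max_m [(a^m)^\top y + b^m]$ and $c(x^j, y) = \sum_{i=1}^d c_i(x_i^j, y_i)$, the finite max over $m$ commutes with $\sup_y$, and once $m$ is fixed the integrand is additively separable in the coordinates $y_i$, so the supremum distributes coordinatewise and each one-dimensional supremum is a maximum over $y_i \in X_i = \{x_i^1, \dots, x_i^n\}$. This yields
\[
G(x^j) = \max_{1 \le m \le M} \Bigl[b^m + \sum_{i=1}^d \max_{1 \le k \le n}\bigl(a_i^m x_i^k - h_i(k) - \lambda c_i(x_i^j, x_i^k)\bigr)\Bigr].
\]

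Third I would apply the standard epigraph reformulation. Introduce auxiliary variables $u_i(j,m)$ upper-bounding the inner max over $k$ and $g(j)$ upper-bounding the outer max over $m$; this produces exactly the inequality \eqref{eq:LPconstraint2} and the constraint preceding it, and replaces $G(x^j)$ in the objective by $g(j)$. Since $g(j)$, and via it $u_i(j,m)$, enter the minimization with the nonnegative coefficient $\tfrac{1}{n}$, at any optimum these inequalities are tight, so the LP value coincides with the value of \eqref{eq:dual1}. There is no substantive obstacle in the argument; the only points worth noting are the commutation of $\sup_y$ with the finite max over $m$ (valid because the latter is indexed by a finite set) and the distribution of the supremum across the coordinates once $m$ is fixed (valid by separability).
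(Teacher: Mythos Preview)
Your proposal is correct and follows essentially the same route as the paper's proof: discretize the integrals using the empirical marginals, commute the finite $\max_m$ with $\sup_{y\in X}$, exploit additive separability of both the affine piece and the cost to split the supremum coordinatewise into $\max_{1\le k\le n}$, and then apply the standard epigraph trick with the auxiliary variables $g(j)$ and $u_i(j,m)$. The paper presents these steps in the same order and with the same justifications, so there is nothing to add.
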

The proof can be found in \citeA{gao2017data}. For the convenience of the reader, we also present a direct proof of Corollary \ref{coro:LPreformulation}.

\begin{proof}
Due to the assumptions that $X_i = \lbrace x_i^1,\dots,x_i^n \rbrace$ and $\bar{\mu} = \frac{1}{n} \sum_{j=1}^n \delta_{x^j}$, the term $\int_{X_i} h_i \, d\bar{\mu}_i $ in \eqref{eq:dual1} can be written as $\frac{1}{n} \sum_{j=1}^n h_i(x^j)$ and we shall use that $h_i(x^j) = h_i(x_i^j)$.
Combing these facts with the assumption $c(x,y) = \sum_{i=1}^d c_i(x_i,y_i)$, the dual problem \eqref{eq:dual1} can be reformulated as 
\begin{align*}
&\min_{\lambda \geq 0,\, h_i} \Bigg\{\lambda \rho + \frac{1}{n} \sum_{i=1}^d \sum_{j=1}^n h_i(x^j) \\
& \qquad \qquad \quad + \frac{1}{n} \sum_{j=1}^n \max_{y \in X} \Big\{ \max_{1 \leq m \leq M} \Big( \sum_{i=1}^d a_i^m y_i + b^m \Big) - \sum_{i=1}^d h_i(y) - \lambda c( x^j, y) \Big\}\Bigg\} \\
&= \min_{\lambda \geq 0,\, h_i} \Bigg\{\lambda \rho + \frac{1}{n} \sum_{i=1}^d \sum_{j=1}^n h_i(x^j) \\
& \qquad \qquad \qquad + \frac{1}{n} \sum_{j=1}^n \max_{1 \leq m \leq M} \Big\{ \max_{y \in X}  b^m + \sum_{i=1}^d \Big( a_i^m y_i  - h_i(y_i) - \lambda c_i( x_i^j, y_i) \Big) \Big\}\Bigg\}
\end{align*} 
The assumption $X= X_1\times \dots \times X_d$ implies that for any $y \in X$ we can find indices $k_1,\dots,k_d$ with $1 \leq k_i \leq n$ for $i=1,\dots,d$
such that $y = (x_1^{k_1},\dots,x_d^{k_d})$.
We introduce the auxiliary variables $g(j) \in \mathbb{R}$ for $j=1,\dots,n$ and write the above problem as 
\begin{align*}
&\min_{\lambda\geq 0,\, h_i,\, g(j)} \, \Big\{\lambda \rho  + \frac{1}{n} \sum_{i=1}^d \sum_{j=1}^n h_i(x^j) + \frac{1}{n} \sum_{j=1}^n g(j): \\
& \qquad \quad g(j) \geq   \max_{k_1,\dots, k_d} b^m + \sum_{i=1}^d \Big( a_i^m x_i^{k_i} - h_i(x_i^{k_i}) - \lambda c_i( x_i^j, x_i^{k_i}) \Big), 1\leq j \leq n, 1 \leq m \leq M   \Big\}\\
&= \min_{\lambda\geq 0,\, h_i,\, g(j)} \, \Big\{\lambda \rho  + \frac{1}{n} \sum_{i=1}^d \sum_{j=1}^n h_i(x^j) + \frac{1}{n} \sum_{j=1}^n g(j): \\
&\qquad \quad g(j) \geq  b^m + \sum_{i=1}^d \max_{1\leq k \leq n} \Big( a_i^m x_i^{k_i} - h_i(x_i^{k_i}) - \lambda c_i( x_i^j, x_i^{k_i}) \Big), 1\leq j \leq n, 1 \leq m \leq M \Big\},
\end{align*} 
where we use that
$$\max_{k_1,\dots, k_d} \sum_{i=1}^d a_i^m x_i^{k_i} - h_i(x^{k_i}) - \lambda c_i(x_i^j,x_i^{k_i}) =  \sum_{i=1}^d \max_{1\leq k \leq n} \Big( a_i^m x_i^k - h_i(x_i^k) - \lambda c_i(x_i^j,x_i^k) \Big).$$
Introducing the auxiliary variables $u_i(j,m) \in \mathbb{R}$, where $i=1,\dots,d, j=1,\dots,n$ and $m=1,\dots,M$, in order to remove the remaining max function, together with the notation $h_i(j) := h_i(x^j) \in \mathbb{R}$ yields the assertion. 
\end{proof}

\subsection{Penalization}
\label{subsec:penalization}
The aim of this section is to modify the functional
\eqref{functional:main}, so that it allows for a numerical solution by neural networks.

To focus on the main ideas, we assume that $\kappa$ is bounded, i.e.~we restrict to continuous bounded functions, as well as $\varphi = \infty \text{\ensuremath{1\hspace*{-0.9ex}1}}_{(\rho, \infty)}$ as in the overview in Section \ref{ssec:Overview}.
Hence, in line with Corollary \ref{coro:DualityWassersteinball}
we consider the functional 
\begin{align}
\phi(f)&:= \max_{\substack{\mu \in \Pi(\bar{\mu}_1,\dots,\bar{\mu}_d)\\  d_c(\bar{\mu},\mu)\le\rho}} \int_{X} f\, d\mu  \label{def:phi} \\
&= \inf_{\lambda \geq 0,\, h_i \in C_{\kappa_i}(X_i)} \Big\{ \rho\lambda + \sum_{i=1}^d \int_{X_i} h_i\, d\bar{\mu}_i 
+ \int_{X} \sup_{y \in X} \Big[ f(y) - \sum_{i=1}^d h_i(y_i) - \lambda c(x,y) \Big]\, \bar{\mu}( dx) \Big\}
\nonumber
\end{align}
for all $f\in C_b(X)$ and a fixed radius $\rho> 0$. For simplicity, we assume that the function
$f^{\lambda c}(x)=\sup_{y \in X}\{f(y) - \lambda c(x,y)\}$ is continuous for all $\lambda\ge 0$ and $f\in C_b(X)$.\footnote{By definition, $f^{\lambda c}$ is lower semicontinuous. Moreover, if $c(x,y)=\bar c(x-y)$ for a continuous function $\bar{c}\colon X\to[0,\infty)$ with compact sublevel sets, then $f^{\lambda c}$ is upper semicontinuous and therefore continuous. This for instance holds for
$\bar{c}(x)=\sum_{i=1}^d |x_i|$ or $\bar{c}(x)=\sum_{i=1}^d |x_i|^2$ corresponding to the first and second order Wasserstein distance on $\mathbb{R}^d$.} In that case, the functional
$\phi_1\colon C_b(X^2)\to\mathbb{R}$ defined as
\begin{equation}\label{def:phi1}
\phi_1(f):=\inf_{\substack{\lambda \geq 0,\, h_i \in C_{b}(X_i),\, g\in C_b(X):\\g(x) \geq f(x,y) - \sum_{i=1}^d h_i(y_i) - \lambda c(x,y)}} \Big\{\lambda \rho + \sum_{i=1}^d \int_{X_i} h_i\, d\bar{\mu}_i +\int_{X} g\, d\bar\mu\Big\}
\end{equation}
satisfies $\phi(\tilde{f})=\phi_1(\tilde{f}\circ\text{pr}_2)$ for all $\tilde{f}\in C_b(X)$, i.e.~$\phi_1$ is an extension of $\phi$ from $C_b(X)$ to $C_b(X^2)$. The functional $\phi_1$ can be regularized by penalizing the inequality constraint. To do so, we consider the functional
\begin{align}
\phi_{\theta, \gamma}(f) &:= \inf_{\substack{\lambda \geq 0,\, h_i \in C_b(X_i),\\ g\in C_b(X)}} \Big\{\lambda \rho + \sum_{i=1}^d \int_{X_i} h_i\, d\bar\mu_i + \int_X g\, d\bar\mu \nonumber \\&\qquad\qquad + \int_{X^2} \beta_{\gamma}\big(f(x, y) - g(x) - \sum_{i=1}^d h_i(y_i) - \lambda c(x,y)\big)\, \theta(dx, dy)\Big\} \label{eq:PhiThetaGamma}
\end{align}
for a sampling measure $\theta\in\mathcal{P}(X^2)$,
and a penalty function $\beta_{\gamma}(x) := \frac{1}{\gamma}\beta(\gamma x)$, $\gamma>0$, where
 $\beta : \mathbb{R} \rightarrow [0,\infty)$ is 
convex, nondecreasing, differentiable, and satisfies $\frac{\beta(x)}{x}\rightarrow \infty$ for $x\rightarrow \infty$. 
Let $\beta_\gamma^\ast(y) := \sup_{x\in \mathbb{R}} \{xy - \beta_\gamma(x)\}$ for $y \in \mathbb{R}_+$, and notice that 
$\beta_{\gamma}^\ast(y) = \frac{1}{\gamma} \beta^\ast(y)$.

Notice that the introduced penalization method is in no way specific to the penalized constraint and hence rather general. It includes as a special case the well-studied entropic penalization related to the Sinkhorn algorithm, which is often applied to optimal transport problems. The penalization can also be seen as a regularization since it introduces a slight smoothness bias for the probability measures in the optimization problem. 
On the one hand, this leads to an approximation error, which can be made arbitrarily small theoretically, see Propositions \ref{prop:penal} and \ref{prop2} below.
On the other hand, the resulting smoothness is also seen as a feature which produces good empirical results \cite<see e.g.>{cuturi2013sinkhorn, genevay2017learning}.

The following lemma sets the stage for Proposition~\ref{prop:penal}, in which we provide a duality result for $\phi_{\theta, \gamma}(f)$, study the respective relation of primal and dual optimizers, and outline convergence $\phi_{\theta,\gamma}(f) \rightarrow \phi(f)$ for $\gamma \rightarrow \infty$.

\begin{lemma}\label{lem:phi_theta_gamma}
For every $f\in C_b(X^2)$ one has
\begin{equation}\label{eq:convolution}
\phi_{\theta, \gamma}(f) = \inf_{\tilde{f} \in C_b(X^2)} \big\{\phi_1(\tilde{f}) + \phi_2(f - \tilde{f})\big\},
\end{equation}
where $\phi_2(f) := \int_{X^2} \beta_\gamma(f)\, d\theta$. Moreover, the convex conjugate of $\phi_{\theta,\gamma}$ is given by 
\[
\phi_{\theta, \gamma}^\ast(\pi) = \left\{ \begin{array}{l l}   \int_{X^2} \beta_{\gamma}^\ast\big(\tfrac{d\pi}{d\theta}\big)\, d\theta &
\text{ if } \pi_1 = \bar{\mu},\,\pi_2 \in \Pi(\bar{\mu}_1,...,\bar{\mu}_d) \mbox{ and } \int_{X^2} c \,d\pi \le  \rho \\ \infty & \text{ else}  \end{array} \right.
\]
for all $\pi\in\mathcal{P}(X^2)$
with the convention $\tfrac{d\pi}{d\theta} = +\infty$ if $\pi$ is not absolutely continuous with respect to $\theta$.	
\end{lemma}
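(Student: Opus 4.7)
The plan is to treat the two assertions in sequence. For the infimal-convolution identity \eqref{eq:convolution}, I would prove both inequalities directly from the definitions. The inequality $\phi_{\theta,\gamma}(f) \le \inf_{\tilde f}\{\phi_1(\tilde f) + \phi_2(f - \tilde f)\}$ is straightforward: for any $\tilde f \in C_b(X^2)$ and any $(\lambda, h_i, g)$ feasible for $\phi_1(\tilde f)$, the pointwise inequality $g \ge \tilde f - \oplus_{i=1}^d h_i - \lambda c$ combined with the monotonicity of $\beta_\gamma$ gives $\beta_\gamma(f - g - \oplus_{i=1}^d h_i - \lambda c) \le \beta_\gamma(f - \tilde f)$, which integrates against $\theta$ and yields the desired bound. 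For the reverse inequality, given $(\lambda, h_i, g)$ feasible for $\phi_{\theta,\gamma}(f)$, the natural candidate is $\tilde f := g + \oplus_{i=1}^d h_i + \lambda c$; when $c$ is unbounded this $\tilde f$ need not lie in $C_b(X^2)$, and I would remedy this by truncating to $\tilde f_n := g + \oplus_{i=1}^d h_i + \lambda(c \wedge n) \in C_b(X^2)$. The same triple $(\lambda, h_i, g)$ is feasible for $\phi_1(\tilde f_n)$, and $\beta_\gamma(f - \tilde f_n)$ is nonincreasing, nonnegative, and dominated by a constant, so monotone convergence gives $\phi_2(f - \tilde f_n) \downarrow \int \beta_\gamma(f - g - \oplus_{i=1}^d h_i - \lambda c)\,d\theta$, and the desired inequality follows in the limit.

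With the inf-convolution identity in hand, the formula $\phi_{\theta,\gamma}^\ast = \phi_1^\ast + \phi_2^\ast$ follows by substituting $u := f - \tilde f$ in $\sup_{f,\tilde f}\{\int f\,d\pi - \phi_1(\tilde f) - \phi_2(f - \tilde f)\}$ and swapping suprema: the objective decouples into a contribution in $\tilde f$ and one in $u$, both ranging over $C_b(X^2)$. The piece $\phi_2^\ast$ is the standard variational representation of $\int \beta_\gamma^\ast(d\pi/d\theta)\,d\theta$: when $\pi \ll \theta$ the upper bound comes from the pointwise Legendre inequality $u\cdot p - \beta_\gamma(u) \le \beta_\gamma^\ast(p)$, and the matching lower bound from approximating an optimal $u \in (\partial \beta_\gamma^\ast)(p)$ by bounded continuous functions (Lusin); when $\pi \not\ll \theta$, test functions concentrating on a $\theta$-null but $\pi$-positive set exploit the superlinearity of $\beta_\gamma$ to force $\phi_2^\ast(\pi) = +\infty$.

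The main computation is $\phi_1^\ast$. Writing out its definition as a supremum over $(f, \lambda, h_i, g)$ with $g \ge f - \oplus_{i=1}^d h_i - \lambda c$, I would fix $(\lambda, h_i, g)$ and first carry out the inner supremum over $f$. Setting $G := g + \oplus_{i=1}^d h_i + \lambda c$, which is continuous and bounded below, the truncations $G \wedge n$ lie in $C_b(X^2)$ and increase to $G$, so monotone convergence yields $\sup_{f \in C_b(X^2),\, f \le G}\int f\,d\pi = \int G\,d\pi$. Substituting, the outer problem separates into
\[
\sup_{g \in C_b(X)}\int g\,d(\pi_1 - \bar{\mu}) + \sum_{i=1}^d \sup_{h_i \in C_b(X_i)}\int h_i\,d(\pi_{2,i} - \bar{\mu}_i) + \sup_{\lambda \ge 0}\lambda\Bigl(\int_{X^2} c\,d\pi - \rho\Bigr).
\]
Since $C_b$ separates Borel probability measures on Polish spaces, each of the first $d+1$ suprema equals $0$ exactly when the corresponding marginal agrees with the prescribed one and $+\infty$ otherwise, while the last supremum vanishes precisely when $\int c\,d\pi \le \rho$. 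Adding $\phi_1^\ast$ and $\phi_2^\ast$ produces the claimed formula. The only genuinely non-routine issue is the possible unboundedness of $c$, which prevents a direct identification of $g + \oplus_{i=1}^d h_i + \lambda c$ with an element of $C_b(X^2)$; both occurrences are handled by the same device of truncation combined with monotone convergence.
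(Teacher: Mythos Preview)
Your proof is correct, and for the inf-convolution identity \eqref{eq:convolution} it coincides with the paper's argument: both handle the unboundedness of $c$ by the truncation $\min\{n,\,g+\oplus_i h_i+\lambda c\}$ (you truncate $c$ instead, which is equivalent) and pass to the limit by monotone/dominated convergence.

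For the conjugate the two arguments are organized differently. You invoke the general identity $(\phi_1\mathbin{\square}\phi_2)^\ast=\phi_1^\ast+\phi_2^\ast$ and compute each piece separately: $\phi_1^\ast$ by taking the inner supremum over $f\le g+\oplus_i h_i+\lambda c$ via truncation and then decoupling into marginal tests plus the $\lambda$-term, and $\phi_2^\ast$ by the standard variational formula for the $\beta_\gamma^\ast$-entropy. The paper instead computes $\phi_{\theta,\gamma}^\ast$ directly: it first bounds $\phi_{\theta,\gamma}$ above by a multi-marginal transport functional to obtain the $+\infty$ case, and then, assuming the marginals are correct, makes the substitutions $\tilde f=f-\oplus_i h_i-g$ and $\bar f=\tilde f+\min\{n,\lambda c\}$ to reduce in one stroke to $\sup_{\lambda\ge0}\lambda(\int c\,d\pi-\rho)+\sup_{\bar f}\{\int\bar f\,d\pi-\int\beta_\gamma(\bar f)\,d\theta\}$, citing an external reference for the last supremum. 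Your modular route is arguably cleaner and more transparent about where each constraint in the conjugate comes from; the paper's route is shorter because it never isolates $\phi_1^\ast$ as a separate object. Both rely on the same truncation device to cope with the unboundedness of $c$.
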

\begin{proof}
Observe that for every $f\in C_b(X^2)$ one has
\begin{align*}
&\inf_{\tilde{f} \in C_b(X^2)} \big\{\phi_1(\tilde{f}) + \phi_2(f - \tilde{f})\big\} \\=& \inf_{\substack{\lambda \geq 0,\, h_i \in C_b(X_i),\, g \in C_b(X),\,\tilde {f} \in C_b(X^2):\\ \tilde{f}(x,y) \leq g(x) + \sum_{i=1}^d h_i(y_i) + \lambda c(x,y)}} \Big\{\lambda \rho + \sum_{i=1}^d \int_{X_i} h_i \,d\bar{\mu}_i + \int_{X} g\, d\bar{\mu} + \int_{X^2} \beta_\gamma(f - \tilde{f})\, d\theta\Big\}
\end{align*}
where the right hand side is equal to $\phi_{\theta, \gamma}(f)$. This follows from the dominated convergence theorem applied on the sequence
$\tilde{f}_n(x,y) = \min\{n,~ g(x) + \sum_{i=1}^d h_i(y_i) + \lambda c(x,y)\}$.

As for the calculation of the convex conjugate,
we first show that 
$\phi_{\theta, \gamma}^\ast(\pi) = \infty$ whenever $\pi_1 \neq \bar{\mu}$ or $\pi_2 \not\in\Pi(\bar{\mu}_1, ...,\bar{\mu}_d)$. Indeed, since 
\begin{align*}
\phi_{\theta, \gamma}(f) &\le \inf_{h_i \in C_b(X_i),\,g\in C_b(X)} \Big\{ \sum_{i=1}^d \int_{X_i} h_i\, d\bar\mu_i + \int_X g\, d\bar\mu \nonumber \\&\qquad\qquad + \int_{X^2} \beta_{\gamma}\big(f(x, y) - g(x) - \sum_{i=1}^d h_i(y_i) \big)\, \theta(dx, dy)\Big\} \\
 &\le \inf_{\substack{h_i \in C_b(X_i),\, g\in C_b(X): \\ g(x)+\sum_i h_i(y_i)\ge f(x,y)}} \Big\{ \sum_{i=1}^d \int_{X_i} h_i\, d\bar\mu_i + \int_X g\, d\bar\mu \Big\}+\beta_\gamma(0),
\end{align*}
it follows that $\phi_{\theta,\gamma}$ is bounded above by a multi-marginal transport problem. As the respective convex conjugate is $+\infty$, it follows that $\phi_{\theta, \gamma}^\ast(\pi) = \infty$ for all $\pi\in\mathcal{P}(X^2)$ such that $\pi_1 \neq \bar{\mu}$ or $\pi_2 \not\in\Pi(\bar{\mu}_1, ...,\bar{\mu}_d)$. 
Conversely, if $\pi_1 = \bar{\mu}$ and $\pi_2 \in \Pi(\bar{\mu}_1, ...,\bar{\mu}_d)$ one has
\begin{align*}
\phi_{\theta, \gamma}^\ast(\pi) &= \sup_{f \in C_b(X^2)}\Big\{ \int_{X^2}  f \,d\pi - \phi_{\theta,\gamma}(f) \Big\} \\
&= \sup_{\lambda \geq 0}  \sup_{\tilde{f}\in C_b(X^2)} \Big\{-\lambda \rho + \int_{X^2} \tilde{f}\,d\pi - \int_{X^2} \beta_\gamma(\tilde{f} - \lambda c)\, d\theta\Big\} \\
&= \sup_{\lambda \geq 0} \sup_{\bar{f}\in C_b(X^2)} \Big\{-\lambda \rho + \lambda \int_{X^2} c\, d\pi + \int_{X^2} \bar{f} d\pi - \int_{X^2} \beta_\gamma(\bar{f})\, d\theta\Big\}\\
&= \sup_{\lambda \geq 0} \lambda \Big( \int_{X^2} c\, d\pi - \rho \Big) + \int_{X^2} \beta_\gamma^{\ast}\big(\tfrac{d\pi}{d\theta}\big) \,d\theta.\\
&=\begin{cases}  
\int_{X^2} \beta_\gamma^{\ast}\big(\tfrac{d\pi}{d\theta}\big) d\theta & \mbox{if } \int_{X^2} c\, d\pi \le \rho\\
+\infty & \mbox{else} 
\end{cases}.
\end{align*}
Here, the second equality follows by substituting $\tilde{f}(x,y) = f(x,y) - \sum_{i=1}^d h_i(y_i) - g(x)$ and using the structure of the marginals of $\pi$. The third equality  follows by setting $\bar{f}^n = \tilde{f} + \min\{n,\lambda c\}$  and using the dominated convergence theorem. Finally, the fourth equality follows by a standard selection argument, see e.g.~the proof of \cite[Lemma 3.5]{bartl2017robust}.
\end{proof}

\begin{proposition}
	\label{prop:penal}
		Suppose there exists $\pi \in \mathcal{P}(X^2)$ such that $\phi_{\theta,\gamma}^\ast(\pi) < \infty$. 
		Then it holds:
	\begin{itemize}
	    \item[(a)] For for every $f\in C_b(X^2)$ one has 
 	    \begin{equation}\label{eq:PhiThetaGamma:dual}
 	    \phi_{\theta, \gamma}(f) = \max_{\substack{\pi \in \Pi(\bar{\mu}, \bar{\mu}_1, ..., \bar{\mu}_d):\\ \int c\,d\pi \leq \rho}} \int_{X^2} f d\pi - \int_{X^2} \beta_\gamma^{\ast}\big(\tfrac{d\pi}{d\theta}\big) \,d\theta.\end{equation}
	\item[(b)] Let $f\in C_b(X^2)$. If $g^\star \in C_b(X)$, $h^\star_i \in C_b(X_i)$, $i=1, ..., d$, and $\lambda^\star \geq 0$ are optimizers of \eqref{eq:PhiThetaGamma},
		then the probability measure $\pi^\star$ defined by
		\[
		\frac{d\pi^\star}{d\theta}(x,y) := \beta_{\gamma}^\prime \Big(f(x,y) - g^\star(x) - \sum_{i=1}^d h^\star_i(y_i) - \lambda^\star c(x,y)\Big)
		\]
		is a maximizer of \eqref{eq:PhiThetaGamma:dual}. Hence $\mu^\star := \pi^\star \circ \textrm{pr}_2^{-1}$ is a feasible solution to \eqref{def:phi}.
		\item[(c)] Fix $f \in C_b(X)$ and $\varepsilon > 0$. Suppose that $\mu_\varepsilon \in \mathcal{P}(X)$ is an $\varepsilon$-optimizer of \eqref{def:phi}, and 
		$\pi_\varepsilon \in \Pi(\bar{\mu}, \mu_\varepsilon)$ satisfies
		$\alpha := \int_{X^2} \beta^\ast\big(\tfrac{d\pi_\varepsilon}{d\theta}\big) \,d\theta < \infty$, and $\int_{X^2} c \,d\pi_\varepsilon \leq \rho$.
	    Then one has
		\[
		\phi_{\theta, \gamma}(f\circ \operatorname{pr}_2) - \frac{\beta(0)}{\gamma}\le \phi(f) \leq \phi_{\theta, \gamma}(f\circ\text{pr}_2) + \varepsilon + \frac{\alpha}{\gamma}.
		\]
	\end{itemize}
\end{proposition}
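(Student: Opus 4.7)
The plan is to establish the three claims in sequence, building on Lemma~\ref{lem:phi_theta_gamma}. For part (a), I would apply a Fenchel--Moreau biconjugate argument in the spirit of step~4 in the proof of Theorem~\ref{thm:main}. First, I would show that $\phi_{\theta,\gamma}$ is continuous from above on $C_b(X^2)$: the inf-convolution representation $\phi_{\theta,\gamma}(f) = \inf_{\tilde f}\{\phi_1(\tilde f) + \phi_2(f - \tilde f)\}$ from Lemma~\ref{lem:phi_theta_gamma} expresses $\phi_{\theta,\gamma}$ as a convolution of $\phi_1$, which is continuous from above by an argument parallel to step~1 of Theorem~\ref{thm:main}, and $\phi_2$, which is continuous from above by dominated convergence. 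The identity
\[
\inf_n \phi_{\theta,\gamma}(f^n) = \inf_{\tilde f} \inf_n \bigl(\phi_1(f^n - \tilde f) + \phi_2(\tilde f)\bigr) = \inf_{\tilde f}\bigl(\phi_1(-\tilde f) + \phi_2(\tilde f)\bigr) = \phi_{\theta,\gamma}(0)
\]
then transfers continuity from above to $\phi_{\theta,\gamma}$ for $f^n \downarrow 0$. Combined with the conjugate computed in Lemma~\ref{lem:phi_theta_gamma} and the standing properness assumption, the dual representation results of \citeA{bartl2017robust} yield $\phi_{\theta,\gamma}(f) = \sup_\pi\{\int f\, d\pi - \phi_{\theta,\gamma}^\ast(\pi)\}$, which matches the stated dual. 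Attainment of the maximum follows from tightness of the feasible set (automatic via the fixed marginals), weak closedness of $\{\int c\, d\pi \le \rho\}$, and weak lower semicontinuity of $\pi \mapsto \int \beta_\gamma^\ast(d\pi/d\theta)\, d\theta$.

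For part (b), I would verify the first-order optimality conditions at $(\lambda^\star, h_i^\star, g^\star)$ and then exploit the pointwise Fenchel equality. Perturbing $g^\star$ by $\varepsilon g$ for $g \in C_b(X)$, differentiating under the integral, and setting the derivative to zero yields $\int_X g\, d\bar\mu = \int_{X^2} g(x)\, d\pi^\star$, so the first marginal of $\pi^\star$ is $\bar\mu$. Analogous perturbations in $h_i^\star$ give $\pi^\star \circ \operatorname{pr}_2^{-1} \in \Pi(\bar\mu_1,\dots,\bar\mu_d)$, and the KKT condition for $\lambda^\star \ge 0$ yields $\int c\, d\pi^\star \le \rho$ together with complementary slackness $\lambda^\star(\int c\, d\pi^\star - \rho) = 0$; in particular $\mu^\star = \pi^\star \circ \operatorname{pr}_2^{-1}$ is feasible for \eqref{def:phi}. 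Writing $z := f - g^\star - \sum_i h_i^\star - \lambda^\star c$, the pointwise Fenchel identity $\beta_\gamma(z) + \beta_\gamma^\ast(\beta_\gamma'(z)) = z\,\beta_\gamma'(z)$ integrated against $\theta$ gives $\int \beta_\gamma(z)\, d\theta = \int z\, d\pi^\star - \int \beta_\gamma^\ast(d\pi^\star/d\theta)\, d\theta$. Substituting this into the expression defining $\phi_{\theta,\gamma}(f)$ and cancelling the $g^\star$ and $h_i^\star$ integrals via the marginal conditions, and $\lambda^\star \rho$ against $\lambda^\star \int c\, d\pi^\star$ via slackness, leaves $\phi_{\theta,\gamma}(f) = \int f\, d\pi^\star - \int \beta_\gamma^\ast(d\pi^\star/d\theta)\, d\theta$, which equals the dual value from (a) and identifies $\pi^\star$ as a maximizer.

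Part (c) is a direct sandwich. For the lower bound, any $(\lambda, h_i, g)$ feasible in the dual of \eqref{def:phi} applied to $f$ satisfies $f(y) - g(x) - \sum_i h_i(y_i) - \lambda c(x,y) \le 0$, so by monotonicity of $\beta_\gamma$ the additional penalty term in $\phi_{\theta,\gamma}(f\circ\operatorname{pr}_2)$ contributes at most $\beta_\gamma(0) = \beta(0)/\gamma$, giving $\phi_{\theta,\gamma}(f\circ\operatorname{pr}_2) \le \phi(f) + \beta(0)/\gamma$. For the upper bound, I would evaluate the dual formula from (a) at $f\circ\operatorname{pr}_2$ against the feasible $\pi_\varepsilon$: since $\beta_\gamma^\ast = \beta^\ast/\gamma$, one obtains $\phi_{\theta,\gamma}(f\circ\operatorname{pr}_2) \ge \int f\, d\mu_\varepsilon - \alpha/\gamma \ge \phi(f) - \varepsilon - \alpha/\gamma$, using $\varepsilon$-optimality of $\mu_\varepsilon$.

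The most delicate step I anticipate is the rigorous justification of the first-order conditions in (b), in particular the $\lambda$-perturbation, because it involves the potentially unbounded cost $c$. A dominating integrand should nonetheless be available: since $\pi^\star$ is necessarily dual-feasible, the bound $\int c\, d\pi^\star \le \rho < \infty$ supplies the envelope needed to swap differentiation and integration via dominated convergence.
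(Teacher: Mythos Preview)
Your proposal follows essentially the same strategy as the paper's proof for all three parts: the Fenchel--Moreau argument via \cite{bartl2017robust} for (a), first-order conditions plus the pointwise Fenchel identity $\beta_\gamma(z)+\beta_\gamma^\ast(\beta_\gamma'(z))=z\beta_\gamma'(z)$ for (b), and the direct sandwich for (c).

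One point in part (a) deserves care. You propose to obtain continuity from above of the convolution via continuity from above of $\phi_1$, asserting this is ``parallel to step~1 of Theorem~\ref{thm:main}''. But the $\phi_1$ of \eqref{def:phi1} is not the $\psi_1$ of Theorem~\ref{thm:main}: it lives on $C_b(X^2)$ and carries the extra $\lambda c$ term in the constraint, so the Dini-type argument from step~1 does not transfer directly. The paper sidesteps this by swapping the roles in the inf-convolution, writing
\[
\inf_n \phi_{\theta,\gamma}(f_n)=\inf_{\tilde f}\inf_n\big\{\phi_1(\tilde f)+\phi_2(f_n-\tilde f)\big\}=\inf_{\tilde f}\big\{\phi_1(\tilde f)+\phi_2(-\tilde f)\big\}=\phi_{\theta,\gamma}(0),
\]
and using only continuity from above of $\phi_2$, which is immediate from dominated convergence since $\beta_\gamma$ is continuous and $f_n-\tilde f$ is uniformly bounded in $C_b(X^2)$. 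This is the cleaner route; your version would require a separate (though feasible) argument for $\phi_1$.
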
	
	\begin{proof}
		(a) To show duality, we check condition (R1) from \cite[Theorem 2.2]{bartl2017robust}, i.e.~we have to show that $\phi_{\theta, \gamma}$ is real-valued and continuous from above.
		That $\phi_{\theta,\gamma}$ is real-valued follows from the assumption that there exists $\pi \in \mathcal{P}(X^2)$ such that $\phi_{\theta,\gamma}^\ast(\pi) < \infty$. Indeed, it holds $\infty > \phi_{\theta, \gamma}^\ast(\pi) \geq \int f d\pi - \phi_{\theta, \gamma}(f)$ and hence $\phi_{\theta, \gamma}(f) > -\infty$ (while $\phi_{\theta, \gamma}(f) < \infty$ is clear) for all $f \in C_b(X^2)$.

		To show continuity from above, let $(f_n)$ be a sequence in $C_b(X^2)$ such that $f_n \downarrow 0$. In view of \eqref{eq:convolution}, one has  
		\[
		\inf_{n\in\mathbb{N}} \phi_{\theta, \gamma}(f_n) = \inf_{\tilde{f} \in C_b(X^2)} \inf_{n\in\mathbb{N}} \big\{\phi_1(\tilde{f}) + \phi_2(f_n - \tilde{f})\big\}=\inf_{\tilde{f} \in C_b(X^2)} \big\{\phi_1(\tilde{f}) + \phi_2( - \tilde{f})\big\}=\phi_{\theta, \gamma}(0),
		\]
		since $\inf_{n\in\mathbb{N}} \phi_2(f_n - \tilde{f})=  \phi_2(- \tilde{f})$ by dominated convergence. By Lemma \ref{lem:phi_theta_gamma}, the claim follows.

		(b) That $\pi^\star$ is a feasible solution in the sense that  $\pi^\star_1=\bar\mu$, $\pi^\star_2\in \Pi(\bar\mu_1,\dots,\bar\mu_d)$, and $\int_{X^2} c\,d\pi^\star=\rho$ whenever $\lambda^\star>0$, follows from the first order conditions. For instance,
		since the derivative of \eqref{eq:PhiThetaGamma} in direction
		$g^\star + t g$ vanishes at $t=0$, it follows  $\int_X g\,d\bar\mu-\int_{X^2} g\circ\text{pr}_1\,d\pi^\star=0$ for all $g\in C_b(X)$, which shows that
		$\pi^\star_1=\bar\mu$. This also implies that $\pi^\star$ is a probability measure. Similarly, $\pi^\star_2 \in \Pi(\bar{\mu}_1, ..., \bar{\mu}_d)$ follows by considering the derivative in direction $h^\star_i + t h_i$, and $\int_{X^2} \lambda^\star c\, d\pi^\star = \lambda^\star\rho$ from the first order condition for $\lambda$. Hence, as $\pi^\star$ is feasible it follows from Lemma \ref{lem:phi_theta_gamma} that
		\begin{align*}
		\phi_{\theta,\gamma}(f)&\ge\int_{X^2} f\,d\pi^\star-\phi^\ast_{\theta,\gamma}(\pi^\star) \\
		&= \int_{X^2} f \beta^\prime_\gamma\big(f-g^\star-\sum_ih^\star_i-\lambda^\star c\big) - \beta^\ast_\gamma\Big(\beta^\prime_\gamma\big(f-g^\star-\sum_ih^\star_i-\lambda^\star c \big)\Big)\,d\theta	\\
		&=\int_{X^2} g^\star+\sum_i h^\star_i +\lambda^\star c\,d\pi^\star
		+\int_{X^2} \beta_\gamma\big(f-\hat g-\sum_ih^\star_i-\lambda^\star c \big)\,d\theta \\
		&= \lambda^\star\rho +\sum_i\int_{X_i} h^\star_i\,d\bar{\mu}_i+\int_Xg^\star\,d\bar{\mu}+\int_{X^2} \beta_\gamma\big(f-\hat g-\sum_ih^\star_i-\lambda^\star c \big)\,d\theta\\
		&=\phi_{\theta,\gamma}(f)
		\end{align*}		
		where we use that
		$\beta^\ast_\gamma\big(\beta^\prime_\gamma(x)\big)=\beta^\prime_\gamma(x)x-\beta_\gamma(x)$ for all $x\in\mathbb{R}$. This shows that $\pi^\star$ is an optimizer.

		(c) By restricting the infimum in \eqref{eq:PhiThetaGamma} to those $\lambda\ge 0$, $h_i\in C_b(X_i)$, $g\in C_b(X)$ such that $g(x)\ge f(y)-\sum_i h_i(y_i)-\lambda c(x,y)$, it follows that
		\begin{align*}
		\phi_{\theta,\gamma}(f\circ\text{pr}_2)&\le 
		\inf_{\substack{\lambda \geq 0,\, h_i \in C_{b}(X_i),\, g\in C_b(X):\\g(x) \geq f(y) - \sum_{i=1}^d h_i(y_i) - \lambda c(x,y)}} \Big\{\lambda \rho + \sum_{i=1}^d \int_{X_i} h_i\, d\bar{\mu}_i +\int_{X} g\, d\bar\mu\Big\} + \beta_\gamma(0)\\
		&=\phi(f)+\frac{\beta(0)}{\gamma},
		\end{align*}
		where the last equality follows from 
		\eqref{def:phi1}. As for the second inequality, since $\mu_\varepsilon \in \mathcal{P}(X)$ is an $\varepsilon$-optimizer of \eqref{def:phi}, and 
		$\pi_\varepsilon \in \Pi(\bar{\mu}, \mu_\varepsilon)$ one has
\begin{align*}
	\phi(f)&\le \int_{X} f\, d\mu_\varepsilon +\varepsilon = \int_{X^2} f\circ \text{pr}_2\, d\pi_\varepsilon-\phi^\ast_{\theta,\gamma}(\pi_\varepsilon)+\phi^\ast_{\theta,\gamma}(\pi_\varepsilon)+\varepsilon 
	\le \phi_{\theta,\gamma}(f\circ\text{pr}_2)+\frac{\alpha}{\gamma}+\varepsilon.
\end{align*}
The proof is complete.		
\end{proof}

The following Proposition shows that the convergence result from Proposition \ref{prop:penal} (c) can be applied whenever the sampling measure is chosen as $\theta = \bar{\mu}\otimes \bar{\mu}_1 \otimes ... \otimes\bar{\mu}_d$, and a minimal growth condition on the cost function $c$ is imposed, see Proposition \ref{prop2} (b)(ii) below. In this case, existence of $\pi \in \mathcal{P}(X^2)$ such that $\phi^{\ast}_{\theta, \gamma}(\pi) < \infty$ holds as well, so that Proposition \ref{prop:penal} applies in full. It is worth pointing out that this result below trivially transfers to all reference measures $\theta$ for which the Radon-Nikodym derivative $\frac{d \bar{\mu}\otimes \bar{\mu}_1 \otimes ... \bar{\mu}_d}{d\theta}$ is bounded. As pointed out by a referee, it is especially desirable to have the values $\alpha_\varepsilon := \int_{X^2} \beta^\ast\big(\tfrac{d\pi_\varepsilon}{d\theta}\big) \,d\theta$ uniformly bounded in $\varepsilon$ (respectively growing in a certain order depending on $\varepsilon$), so that a linear convergence $\phi_{\theta, \gamma}(f) \rightarrow \phi(f)$ for $\gamma \rightarrow \infty$ (respectively a slower order of convergence) is implied. The result below does not achieve this, and we believe this to be a non-trivial task left open for future work.
\begin{proposition}
	\label{prop2}
	\begin{itemize}
		\item[(a)] Let $\mu_i \in \mathcal{P}(X_i)$ for $i=1, ..., 
		d$. Let $\nu \in \Pi(\mu_1, ..., \mu_d)$ and let $\mu := \mu_1 \otimes 
		\mu_2 
		\otimes ... \otimes \mu_d$.
		Then there exist $\nu^{n} \in \Pi(\mu_1, ..., \mu_d)$ for $n\in\mathbb{N}$ such 
		that $\nu^n \stackrel{w}{\rightarrow} \nu$ for $n \rightarrow \infty$, 
		$\nu^{n} \ll \mu$ and there exist constants $0 < C_n < \infty$ such 
		that 
		$\frac{d \nu^n}{d\mu} \leq C_n$ $\mu$-a.s..
		\item[(b)] Let $\eta_i: X_i \rightarrow [0, \infty)$ be Borel measurable with $\int_{X_i} \eta_i \,d\bar{\mu}_i < \infty$. Let $\eta(x) = \sum_{i=1}^d \eta_i(x_i)$. Assume there is a constant $C > 0$ such that for all $x, y\in X$ it holds $c(x, y) \leq C (\eta(x) + \eta(y))$. Let $\theta = \bar{\mu} \otimes \bar{\mu}_1 \otimes ... \otimes \bar{\mu}_d$. Then it holds:
		\begin{itemize}
			\item[(i)] For $\pi^* \in \Pi(\bar{\mu}, \bar{\mu}_1, ..., \bar{\mu}_d)$ with 
			$\int c \,d\pi^* \leq \rho$, there exist $\pi_\varepsilon \in 
			\Pi(\bar{\mu}, \bar{\mu}_1, ..., \bar{\mu}_d)$ for $\varepsilon > 0$ such that $\pi_\varepsilon 
			\stackrel{w}{\rightarrow} \pi^*$ for $\varepsilon \rightarrow 0$, $\pi_\varepsilon \ll \theta$, 
			$\frac{d\pi_\varepsilon}{d\theta}$ is $\theta$-a.s. bounded and $\int c 
			\,d\pi_\varepsilon \leq \rho$.
			\item[(ii)] The condition for Proposition \ref{prop:penal} (c) is satisfied, i.e.~for every $\varepsilon > 0$ there exists $\mu_\varepsilon \in \mathcal{P}(X)$ which is an $\varepsilon$-optimizer of \eqref{def:phi}, and 
			$\pi_\varepsilon \in \Pi(\bar{\mu}, \mu_\varepsilon)$ satisfying
			$\alpha_\varepsilon := \int_{X^2} \beta^\ast\big(\tfrac{d\pi_\varepsilon}{d\theta}\big) \,d\theta < \infty$, and $\int_{X^2} c \,d\pi_\varepsilon \leq \rho$.
		\end{itemize} 
	\end{itemize}
	\begin{proof} \textsl{Proof of (a):}
		We endow each $X_i$ by a compatible metric $m_i$ and without loss of generality we specify the metric on $X = X_1 \times X_2 \times ... \times X_d$ as $m(x, y) = \sum_{i=1}^d m_i(x_i, y_i)$.
		
		\textsl{Step 1) Construction of $\nu^n$:} 
		Let $K^n_i \subseteq X_i$ be compact for $i=1, ..., d$ such that 
		$\mu_i(K_i^n) \rightarrow 1$ for $n\rightarrow \infty$, and $K^n = 
		K_1^n \times ... \times K_d^n$. Notably $\nu(K^n) \rightarrow 1$ 
		follows.
		
		Further, since each $K_i^n$ is compact
		we can choose a Borel partition
		$\mathcal{A}_n$ of $K^n$, i.e.\[\stackrel{.}{\cup}_{A\in \mathcal{A}_n} 
		A = K^n,\] where each $A \in \mathcal{A}_n$ is Borel measurable and satisfies 
		$A=A_1 \times ... \times A_d$ as well as \[\max_{A \in \mathcal{A}_n} \sup_{x, y\in 
			A} m(x, y) \leq \frac{1}{n}.\] A simple way of obtaining such 
		a partition is to first cover each $K^n_i$ by countably many open balls 
		of radius $1/(2dn)$, choosing a finite subcover, and building a 
		partition of $K^n_i$ out of that subcover. For the partition of $K^n$ simply 
		choose all product 
		sets that can be formed from the partitions of the $K^n_i$.
		
		Note that $(K^n)^c$ is the disjoint union of $2^d-1$ many 
		product sets, namely $K_{n, 1}^c \times K_{n, 2} \times ... \times 
		K_{n, d}$, ..., $K_{n, 1}^c \times ... \times K_{n, d}^c$. We denote 
		the union of $\mathcal{A}_n$ with the family of these $2^d-1$ many 
		product sets by $\overline{\mathcal{A}}_n$, which is a partition 
		of 
		$X$. Define
		\[
		\nu^n := \sum_{A \in 
			\overline{\mathcal{A}}_n} \nu(A) \, \cdot 
		\, 
		(\nu_{|A})_1 \otimes ... \otimes (\nu_{|A})_d
		\]
		where implicitly the sum is understood to only include those terms 
		where $\nu(A) > 0$ and $\nu_{|A}$ is then defined as $\nu_{|A}(B) = 
		\nu(A \cap B) / 
		\nu(A)$. We won't make this explicit, but every time we divide by 
		$\nu(A)$ or $\mu_i(A_i)$ we will assume it is one of the relevant terms 
		with $\nu(A) > 0$, 
		where of course $\nu(A) > 0$ implies $\mu_i(A_i) > 0$ for all $i = 1, 
		..., d$ and $A \in \overline{\mathcal{A}}_n$.
		
		\textsl{Step 2) Verifying marginals of $\nu^n$:} We only show that
		$\nu^n_1 = \mu_1$, while the other marginals follow in the same way by symmetry. Let $B_1 \subseteq X_1$ be Borel. It holds
		\begin{align*}
		\nu^n(B_1 \times X_2 \times ... \times X_d) &= \sum_{A\in 
			\overline{\mathcal{A}}_n} \nu(A) \, \cdot \,
		(\nu_{|A})_1(B_1) 
		\\
		&=
		\sum_{A\in\overline{\mathcal{A}}_n} \nu(A) \, \cdot \, \nu_{|A}(B_1 
		\times X_2 
		\times 
		... 
		\times 
		X_d)\\
		&= \sum_{A\in 
			\overline{\mathcal{A}}_n} \nu\left(A \cap (B_1 \times X_2 \times ... \times 
		X_d) \right)\\
		&= \nu(B_1 \times X_2 \times ... \times X_d) = \nu_1(B_1).
		\end{align*}
		
		\textsl{Step 3) Convergence $\nu^n \stackrel{w}{\rightarrow} \nu$ for 
			$n\rightarrow \infty$:} Let $f: X \rightarrow \mathbb{R}$ be bounded 
		and Lipschitz continuous with constant $L$. We have to show $\int f \,
		d\nu^n \rightarrow 
		\int f\, d\nu$ for $n\rightarrow \infty$. Since $\nu(A) = \nu^n(A)$ for all  $A \in 
		\overline{\mathcal{A}}_n$ (and in particular $\nu^n((K^n)^c) = 
		\nu((K^n)^c)$), 
		it holds
		\begin{align*}
		\Big| \int f \,d\nu^n - \int f\,d\nu \Big| &\leq \|f\|_\infty 2 
		\nu((K^n)^c) + 
		\sum_{A \in \mathcal{A}_n} \Big| \int f \eins_A \,d\nu^n - \int f 
		\eins_A 
		\,d\nu \Big|\\
		&\leq \|f\|_\infty 2 \nu((K^n)^c) + \sum_{A \in \mathcal{A}_n} \sup_{x, 
			y 
			\in A} |f(x)-f(y)| \nu(A) \\
		&\leq \|f\|_\infty 2 \nu((K^n)^c) + \sum_{A \in \mathcal{A}_n} \nu(A) L 
		\frac{1}{n}\\
		&\stackrel{n\rightarrow \infty}{\longrightarrow} 0.
		\end{align*}
		
		\textsl{Step 4) Absolute continuity and boundedness of 
			$\frac{d\nu^n}{d\mu}$:}
		Let \[C_n = \max_{\substack{A \in \overline{\mathcal{A}}_n:\\\nu(A) > 
				0}}
		\frac{1}{\nu(A)^{d-1}}.\]
		Given arbitrary Borel sets $B_i \subseteq X_i$ for $i=1, ..., d$, we 
		show 
		that 
		$\nu^n(B_1 \times ... \times B_d) \leq C_n \, \mu(B_1 \times ... \times 
		B_d)$. Once this is shown, $\nu^n(S) \leq C_n \, \mu(S)$ follows for 
		all Borel sets $S \subseteq X$ by the monotone class theorem, which 
		will immediately 
		yield both absolute continuity $\nu^n \ll \mu$ and $\frac{d\nu^n}{d\mu} 
		\leq C_n$.
		
		For $A \in \overline{\mathcal{A}}_n$ it holds 
		\begin{align*}
		(\nu_{|A})_i(B_i) &= 
		\nu(A)^{-1} \nu(A_1 \times ... \times (A_i \cap B_i) \times ... \times 
		A_d) \\
		&\leq \nu(A)^{-1}\,\cdot\, \nu_i(A_i \cap B_i) = 
		\nu(A)^{-1}\,\cdot\,\mu_i(A_i \cap 
		B_i).
		\end{align*}
		It follows
		\begin{align*}
		\nu^n(B_1 \times ... \times B_d) &= \sum_{A\in\overline{\mathcal{A}}_n} 
		\nu(A) \, \cdot \, (\nu_{|A})_1(B_1) \cdot ... \cdot (\nu_{|A})_d(B_d) 
		\\
		&\leq \sum_{A\in\overline{\mathcal{A}}_n} \frac{1}{\nu(A)^{d-1}} 
		\mu_1(B_1 \cap A_1) \cdot ... \cdot \mu_d(B_d \cap A_d)\\
		&\leq C_n \sum_{A\in\overline{\mathcal{A}}_n} \mu((B_1 \times ... 
		\times B_d) \cap A) = C_n \, \mu(B_1 \times ... \times B_d)
		\end{align*}
		where we note that for the last equality to hold, the second to last 
		sum over $A \in \overline{\mathcal{A}}_n$ includes all terms, not just 
		the ones where $\nu(A) > 0$ (this only makes the sum larger).
		The proof of part (a) is complete. 
		
		\textsl{Proof of (b), (i):}
		We first show the following: If $\Pi(\bar{\mu}, \bar{\mu}_1, ..., \bar{\mu}_d) \ni \pi_\varepsilon \stackrel{w}{\rightarrow} \pi \in \Pi(\bar{\mu}, \bar{\mu}_1, ..., \bar{\mu}_d)$ for $\varepsilon \rightarrow 0$, then $\int c \,d\pi_\varepsilon \rightarrow \int c \,d\pi$ for $\varepsilon \rightarrow 0$.
		
		To prove it, note that the growth condition implies that for every $\delta > 0$, we can choose a compact set $K \in X\times X$ such that $\sup_{\varepsilon > 0} \int_{K^c} c \,d\pi_\varepsilon \leq \delta$ and $\int_{K^c} c \,d\pi \leq \delta$. Restricted to $K$, $c$ is bounded from above, say by a constant $M > 0$ (note $c$ is non-negative). Hence for all $\varepsilon > 0$, it holds $|\int c \,d\pi_\varepsilon - \int \min\{c, M\} \,d\pi_\varepsilon| \leq 2 \delta$, and the same for $\pi$ instead of $\pi_\varepsilon$. Since $\min\{c, M\}$ is continuous and bounded, we get $|\int c \,d\pi_\varepsilon - \int c \,d\pi| \leq 4\delta + |\int \min\{c, M\} \,d\pi_\varepsilon - \int \min\{c, M\} \,d\pi| \rightarrow 4 \delta$ for $\varepsilon \rightarrow 0$. Letting $\delta$ go to zero yields the claim.
		
		For the statement of the part (b)(i), consider for $\lambda \in (0, 1)$ the coupling $\pi_\lambda 
		:= \lambda \pi^* + (1-\lambda) \big(\bar{\mu}\otimes (x \mapsto 
		\delta_x)\big)$. Then it holds $\int c \,d\pi_\lambda < \rho$ since 
		$c(x, 
		x) = 0$. Further $\pi_\lambda \stackrel{w}{\rightarrow} \pi^*$ for 
		$\lambda \rightarrow 1$. By approximating every $\pi_\lambda$ by a 
		$\pi_{\lambda, \varepsilon}$ via part (a), $\int c \,d 
		\pi_{\lambda, \varepsilon} \leq \rho$ follows automatically for 
		$\varepsilon$ small enough, which follows by $\int c \,d\pi_{\lambda, \varepsilon} \rightarrow \int c\,d\pi_\lambda$ for $\varepsilon \rightarrow 0$ as shown above. The claim hence follows by a 
		diagonal argument.
		
		\textsl{Proof of (b), (ii):} Any optimizer $\mu^\star \in \Pi(\bar{\mu}_1, ..., \bar{\mu}_d)$ of \eqref{def:phi} and a corresponding coupling $\pi^* \in \Pi(\bar{\mu}, \bar{\mu}_1, ..., \bar{\mu}_d)$ with $\int c \,d\pi^* \leq \rho$ can be approximated via part (b) by $(\pi_\varepsilon)_{\varepsilon > 0}$ which satisfies all required properties. Taking $\mu_\varepsilon$ as the projection of $\pi_\varepsilon$ onto the second component of $X \times X$, i.e. $\mu_\varepsilon = \pi_\varepsilon \circ ((x, y) \mapsto y)^{-1}$, we get that $\mu_\varepsilon \stackrel{w}{\rightarrow} \mu^\star$ and hence $\int f \,d\mu_\varepsilon \rightarrow \int f \,d\mu^\star$ which means after a possible change of indices, $\mu_\varepsilon$ is an $\varepsilon$-optimizer of \eqref{def:phi}.
	\end{proof}
\end{proposition}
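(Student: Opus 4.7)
The statement has three parts that build on one another: part (a) is the core construction of a bounded-density approximation preserving marginals, part (b)(i) adds the cost constraint via a convex-combination trick, and part (b)(ii) applies (b)(i) to an almost-optimizer of \eqref{def:phi}. Throughout, I would endow each $X_i$ with a compatible metric and $X$ with the sum metric.

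For part (a), my plan is to replace $\nu$ locally by its product marginals on a fine rectangle partition. Choose compacts $K_i^n\subseteq X_i$ with $\mu_i(K_i^n)\to 1$ and set $K^n:=K_1^n\times\cdots\times K_d^n$; then $\nu(K^n)\to 1$. Partition each $K_i^n$ into small Borel pieces of diameter at most $1/(2dn)$ (via a finite subcover by open balls), take products to get a rectangle partition $\mathcal A_n$ of $K^n$ with pieces of diameter at most $1/n$, and append the $2^d-1$ product sets covering $(K^n)^c$ to obtain a Borel partition $\overline{\mathcal A}_n$ of $X$. Then define
\[
\nu^n:=\sum_{A\in\overline{\mathcal A}_n}\nu(A)\cdot (\nu_{|A})_1\otimes\cdots\otimes (\nu_{|A})_d.
\]
Marginals are preserved because summing the local rectangles telescopes back to $\nu_i$; weak convergence follows from the Lipschitz estimate that bounds $|\int f\,d\nu^n-\int f\,d\nu|$ by $2\|f\|_\infty \nu((K^n)^c)+L/n$; and the local bound $(\nu_{|A})_i(B_i)\le \nu(A)^{-1}\mu_i(A_i\cap B_i)$ gives $d\nu^n/d\mu\le \max_A \nu(A)^{-(d-1)}$ on rectangles, upgraded to all Borel sets by a monotone class argument.

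For part (b)(i), the main difficulty is that perturbing $\pi^\ast$ via (a) moves $\int c\,d\pi$ slightly and may push it above $\rho$. I would first create strict slack by interpolating with the identity coupling: $\pi_\lambda:=\lambda\pi^\ast+(1-\lambda)\,\bar\mu\otimes(x\mapsto\delta_x)$. Since $c(x,x)=0$, one gets $\int c\,d\pi_\lambda=\lambda\int c\,d\pi^\ast$, which is strictly below $\rho$ for $\lambda<1$. The key continuity input to verify is that, under the growth hypothesis $c(x,y)\le C(\eta(x)+\eta(y))$ with $\eta$ integrable against the marginals, the family $\Pi(\bar\mu,\bar\mu_1,\dots,\bar\mu_d)$ is $c$-uniformly integrable, so that weak convergence transfers to the cost functional. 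This follows from a standard cut-off: pick a large compact $K$ outside which $\int_{K^c} c\,d\pi$ is uniformly small, then approximate $c$ by the bounded continuous $\min\{c,M\}$. Applying (a) to each $\pi_\lambda$ yields $\pi_{\lambda,\varepsilon}\to\pi_\lambda$ weakly with bounded density against $\theta=\bar\mu\otimes\bar\mu_1\otimes\cdots\otimes\bar\mu_d$, and $\int c\,d\pi_{\lambda,\varepsilon}\le\rho$ holds for $\varepsilon$ small by the slack. A diagonal extraction with $\lambda\uparrow 1$ and $\varepsilon\downarrow 0$ produces the desired family.

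Part (b)(ii) is then almost immediate: take any optimizer $\mu^\star$ of \eqref{def:phi} (existence follows from weak compactness of the feasible set and continuity of $\mu\mapsto\int f\,d\mu$ for $f\in C_b(X)$), pick $\pi^\ast\in\Pi(\bar\mu,\mu^\star)$ with $\int c\,d\pi^\ast\le\rho$, apply (b)(i) to obtain $(\pi_\varepsilon)$, and let $\mu_\varepsilon$ be its second marginal. Weak convergence $\mu_\varepsilon\to\mu^\star$ gives $\int f\,d\mu_\varepsilon\to\int f\,d\mu^\star$, so after reindexing $\mu_\varepsilon$ is an $\varepsilon$-optimizer; boundedness of $d\pi_\varepsilon/d\theta$ together with finiteness of $\beta^\ast$ on bounded sets yields $\alpha_\varepsilon<\infty$. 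The main obstacle is part (a): one must preserve the marginals \emph{exactly} while securing a pointwise density bound, and the rectangle-partition construction is precisely engineered for this; direct smoothing approaches (mollification, kernels) would destroy the marginal constraint.
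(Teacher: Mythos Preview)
Your proposal is correct and follows essentially the same route as the paper: the rectangle-partition construction $\nu^n=\sum_{A}\nu(A)\,(\nu_{|A})_1\otimes\cdots\otimes(\nu_{|A})_d$ with the density bound $\max_A\nu(A)^{-(d-1)}$ for part (a), the interpolation $\pi_\lambda=\lambda\pi^\ast+(1-\lambda)\,\bar\mu\otimes(x\mapsto\delta_x)$ combined with uniform $c$-integrability and a diagonal argument for (b)(i), and the projection-plus-reindexing step for (b)(ii). The only minor addition is that you explicitly justify existence of an optimizer $\mu^\star$ and the finiteness of $\alpha_\varepsilon$ via boundedness of $\beta^\ast$ on bounded sets, points the paper leaves implicit.
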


\subsection{Approximation with neural networks}
Let us shortly recap. In Section~\ref{ssec:duality}, we show that our original problem 
$$\phi(f):= \max_{\substack{\mu \in \Pi(\bar{\mu}_1,\dots,\bar{\mu}_d)\\  d_c(\bar{\mu},\mu)\le\rho}} \int_{\R^d} f\, d\mu,$$
can be written as 
$$\inf_{\lambda \geq 0,\, h_i \in C_b(\R)} \Big\{ \rho\lambda + \sum_{i=1}^d \int_{\R} h_i\, d\bar{\mu}_i  + \int_{\R^d} \sup_{y \in \R^d} \Big[ f(y) - \sum_{i=1}^d h_i(y_i) - \lambda c(x,y) \Big]\, \bar{\mu}(dx) \Big\}, $$
for all continuous and bounded functions $f \in C_b(X)$. We then proceed in Section~\ref{subsec:penalization} with considering the penalized version of the latter problem 
\begin{align*} 
\phi_{\theta,\gamma}(f) := \inf_{\substack{\lambda \geq 0,\\ h_i \in C_b(\R),\, g \in C_b(\R^d)}} \Big\{ \lambda \rho &+ \sum_{i=1}^d \int_{\R} h_i \,d\bar{\mu}_i + \int_{\R^d} g \,d\bar{\mu} \\[-10pt] 
+& \int_{\R^{2d}} \beta_{\gamma}\big(f(y) - \sum_{i=1}^d h_i(y_i) - \lambda c(x,y) - g(x)\big) \,\theta(dx,dy)\Big\}. \notag
\end{align*}
We provide sufficient conditions for the convergence $\phi_{\theta,\gamma}(f) \rightarrow \phi(f)$ for $\gamma \rightarrow \infty$.
The subsequent and final step is to theoretically justify that neural networks can indeed be used to approximate $\phi_{\theta,\gamma}(f)$ and thereby $\phi(f)$.

To do so, let us introduce the following notation:  We denote by $A_0,\dots,A_l$ affine transformations with $A_0$ mapping form $\mathbb{R}^{d_0}$ to $\mathbb{R}^m$, $A_1,\dots,A_{l-1}$ mapping form $\mathbb{R}^m$ to $\mathbb{R}^m$ and $A_l$ mapping form $\mathbb{R}^m$ to $\mathbb{R}$. We further fix a non-constant, continuous and bounded \emph{activation function} $\varphi:\mathbb{R} \rightarrow \mathbb{R}$. The evaluation of $\varphi$ at a vector $y \in \mathbb{R}^m$ is understood pointwise, i.e.~$\varphi(y) = \left(  \varphi(y_1),\dots,\varphi(y_m) \right)$. Then,
$$\mathfrak{N}(m, d_0) := \left\lbrace g: \mathbb{R}^{d_0} \rightarrow \mathbb{R}: x\mapsto A_l \circ \varphi \circ A_{l-1} \circ \dots \circ \varphi \circ A_0(x) \right\rbrace$$
defines the set of neural network functions mapping to the real numbers $\R$ with a fixed number of layers $l\geq 2$ (at least one hidden layer), input dimension $d_0$ and hidden dimension $m$.
The following is a classical \textsl{universal approximation theorem} for neural networks.
\begin{theorem}[\citeA{hornik1991approximation}]	
	\label{thm:Hornik}
	Let $h \in C_b(\mathbb{R}^N)$. For any finite measure $\nu \in \mathcal{P}(\mathbb{R}^N)$ and $\varepsilon > 0$ there exists $m\in\mathbb{N}$ and $h^m \in \mathfrak{N}_{m, N}$ 
such that $\|h - h^m\|_{L^p(\nu)} \leq \varepsilon$.
\end{theorem}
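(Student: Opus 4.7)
The statement is Hornik's 1991 universal approximation theorem in $L^p(\nu)$. My plan is to derive it from the more basic compact-set uniform approximation theorem (Cybenko / Hornik--Stinchcombe--White, which asserts that $\mathfrak{N}(m,N)$ is dense in $C(K)$ in the sup norm for every compact $K\subseteq\mathbb{R}^N$) together with a tightness/cutoff argument for the finite Borel measure $\nu$.

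\emph{Reduction via tightness and cutoff.} Fix $\varepsilon>0$ and set $M:=\|h\|_\infty$. Since $\nu$ is a finite Borel measure on the Polish space $\mathbb{R}^N$, it is tight, so for any $\eta>0$ I can choose a compact $K\subseteq\mathbb{R}^N$ with $\nu(K^c)<\eta$. Pick a continuous cutoff $\chi\colon\mathbb{R}^N\to[0,1]$ that is compactly supported in some compact $K'\supseteq K$ and equals $1$ on $K$, and set $\tilde h:=h\chi$. Then $\tilde h$ is continuous, bounded by $M$, supported in $K'$, and $\|h-\tilde h\|_{L^p(\nu)}^p\leq M^p\,\nu(K^c)\leq M^p\eta$.

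\emph{Sup-norm approximation and assembly.} Apply the classical compact-set universal approximation theorem to $\tilde h\in C(K')$: since $\varphi$ is continuous, non-constant and bounded, for any $\delta>0$ there exists $h^m\in\mathfrak{N}(m,N)$ with $\sup_{x\in K'}|\tilde h(x)-h^m(x)|<\delta$. Because $\varphi$ is bounded, every hidden unit takes values in a fixed compact interval, so the final affine map yields a globally bounded function, say $\|h^m\|_\infty\leq C$ for a constant $C$ depending on the network. Using $\tilde h\equiv 0$ on ${K'}^c$,
\[
\|\tilde h - h^m\|_{L^p(\nu)}^p \leq \delta^p\,\nu(\mathbb{R}^N)+C^p\,\nu({K'}^c),
\]
and the triangle inequality gives $\|h-h^m\|_{L^p(\nu)}\leq M\eta^{1/p}+\delta\,\nu(\mathbb{R}^N)^{1/p}+C\,\nu({K'}^c)^{1/p}$.

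\emph{Main obstacle.} The delicate point is that $C$ depends on the chosen network, hence on $\delta$ and on $K'$, so one cannot fix $\eta$ in advance to annihilate the last term. I would resolve this by arranging the quantifiers in the order: first fix a large compact $K'$ with $\nu({K'}^c)$ prescribed tiny (tightness); then run the sup-norm approximation to produce $h^m$ and read off its bound $C$; if $C\,\nu({K'}^c)^{1/p}$ is still too large, enlarge $K'$ (making $\nu({K'}^c)$ smaller than any fixed threshold) and redo the compact-set approximation on the larger set. A conceptually cleaner alternative, which I expect to be the main technical step, is to append an additional saturating layer constructed from $\varphi$ that clips the output of $h^m$ to $[-M-1,M+1]$ everywhere; this replaces $C$ by a universal constant and trivializes the combining step. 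Once this is in place, choosing $\delta$ and $\eta$ small enough yields the claim.
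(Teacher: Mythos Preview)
The paper does not prove this statement; it is quoted as Hornik's 1991 theorem and used as a black box. So there is no ``paper's own proof'' to compare against. That said, your attempt to rederive it has a real gap. Resolution (a) is circular: each time you enlarge $K'$ and rerun the compact-set approximation you get a \emph{new} network $h^m$ with a \emph{new} global bound $C$, and nothing prevents $C$ from growing faster than $\nu({K'}^c)^{1/p}$ shrinks. The iteration ``enlarge, redo, check'' has no reason to terminate. This is not a quantifier-ordering issue; it is a genuine lack of uniform control over the off-compact behaviour of the approximating network.

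Resolution (b) is the right idea but is only a sketch, and the missing work is not trivial. You need to build, from a general bounded non-constant continuous $\varphi$, a one-layer map that approximates the clip $x\mapsto\max(-M-1,\min(M+1,x))$ uniformly on a large interval \emph{and} stays bounded by a fixed constant on all of $\mathbb{R}$. This can be done (e.g.\ by first manufacturing an approximate Heaviside from rescaled copies of $\varphi$, then summing shifted copies to get a bounded approximate identity on $[-M-1,M+1]$), but it deserves an explicit construction. Hornik's original proof in fact sidesteps the uniform-on-compacts route entirely: he shows directly that shifted/scaled $\varphi$'s approximate step functions in $L^p(\nu)$, and step functions are dense. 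If you want a self-contained argument, either carry out the clipping construction in full or switch to the step-function approach.
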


For Proposition \ref{prop:conv_network} below, let $X_i := \mathbb{R}^{N_i}$ for $i=1, ..., d$ and thus $X = \mathbb{R}^N$ with $N = \sum_{i=1}^d N_i$.
Define the function $F(\lambda, h_1, ..., h_d, g)$ by
	\[
	\phi_{\theta, \gamma}(f) = \inf_{\substack{\lambda \geq 0,\\ h_i \in C_b(\mathbb{R}^{N_i}),\, g \in C_b(\mathbb{R}^N)}} F(\lambda, h_1, ..., h_d, g).
	\]
	We define the neural network approximation of $\phi_{\theta, \gamma}(f)$ by 
	\[
	\phi_{\theta, \gamma}^m(f) = \inf_{\substack{\lambda \geq 0,\\ h_i^m \in \mathfrak{N}(m, N_i),\, g^m \in \mathfrak{N}(m, N)}} F(\lambda, h_1^m, ..., h_d^m, g^m).
	\]
	The following result showcases a simple, yet general setting in which the neural network approximation is asymptotically precise:
\begin{proposition}
	\label{prop:conv_network}
		Fix $f\in C_b(\mathbb{R}^{N})$. Let $p > 1$, $\beta_{\gamma}(x) := \frac{1}{\gamma} (\gamma x)_+^p$ and assume $c \in L_p(\theta)$. Then
		\[
		\phi_{\theta, \gamma}^m(f) \rightarrow \phi_{\theta, \gamma}(f) ~~~ \text{ for } m \rightarrow \infty.
		\]
\end{proposition}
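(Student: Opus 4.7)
The inclusion $\mathfrak{N}(m,\cdot)\subset C_b$ gives $\phi^m_{\theta,\gamma}(f)\ge \phi_{\theta,\gamma}(f)$ for free, so the task is to establish the reverse inequality in the limit. Fix $\varepsilon>0$ and pick an $\varepsilon$-optimizer $(\lambda^\ast,h_1^\ast,\dots,h_d^\ast,g^\ast)$ of the infimum defining $\phi_{\theta,\gamma}(f)$. Since the infimum for $\phi^m_{\theta,\gamma}(f)$ may be further restricted to the fixed value $\lambda^\ast$, and under the standard convention that $\mathfrak{N}(m,\cdot)$ is nondecreasing in $m$ (obtained by padding weights with zeros), it suffices to build, for sufficiently large $m$, neural networks $h_i^m\in\mathfrak{N}(m,N_i)$ and $g^m\in\mathfrak{N}(m,N)$ with
\[
F(\lambda^\ast,h_1^m,\dots,h_d^m,g^m)\longrightarrow F(\lambda^\ast,h_1^\ast,\dots,h_d^\ast,g^\ast)\quad\text{as }m\to\infty.
\]

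To produce these approximations, denote by $\theta_1$ and $\theta_{2,i}$ the marginals of $\theta$ on the first $X$-component and on each coordinate $y_i$, and set the probability measures $\nu:=\tfrac{1}{2}(\bar{\mu}+\theta_1)$ on $\mathbb{R}^N$ and $\nu_i:=\tfrac{1}{2}(\bar{\mu}_i+\theta_{2,i})$ on $\mathbb{R}^{N_i}$. Applying Theorem~\ref{thm:Hornik} to $g^\ast\in C_b(\mathbb{R}^N)$ and to each $h_i^\ast\in C_b(\mathbb{R}^{N_i})$ with these reference measures yields $h_i^m\in\mathfrak{N}(m,N_i)$ and $g^m\in\mathfrak{N}(m,N)$ such that $\|h_i^m-h_i^\ast\|_{L^p(\nu_i)}\to 0$ and $\|g^m-g^\ast\|_{L^p(\nu)}\to 0$. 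Since both $\bar{\mu}_i$ and $\theta_{2,i}$ are dominated by $2\nu_i$ (and analogously for $g^\ast$), we obtain $L^p$-convergence with respect to each of $\bar{\mu}_i,\theta_{2,i},\bar{\mu},\theta_1$ separately. Jensen/H\"older (with $\nu_i$ a probability measure) then forces $L^1(\bar{\mu}_i)$-convergence, giving $\sum_i\int h_i^m\,d\bar{\mu}_i\to \sum_i\int h_i^\ast\,d\bar{\mu}_i$ and $\int g^m\,d\bar{\mu}\to\int g^\ast\,d\bar{\mu}$, so both linear contributions to $F$ behave correctly.

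The delicate step, which is the main obstacle, is the penalization integral, since $\beta_\gamma$ grows like $x^p$. Write $Z^m(x,y):=f(y)-\sum_i h_i^m(y_i)-\lambda^\ast c(x,y)-g^m(x)$ and likewise $Z^\ast$. The difference $Z^m-Z^\ast$ splits into summands that each depend on $x$ only or on a single $y_i$ only, so its $L^p(\theta)$-norm is controlled by the marginal norms:
\[
\|Z^m-Z^\ast\|_{L^p(\theta)}\le \|g^m-g^\ast\|_{L^p(\theta_1)}+\sum_{i=1}^d\|h_i^m-h_i^\ast\|_{L^p(\theta_{2,i})}\longrightarrow 0.
\]
Using $\beta_\gamma(z)=\gamma^{p-1}z_+^p$, the elementary bound $|a_+^p-b_+^p|\le p(|a|+|b|)^{p-1}|a-b|$, and H\"older's inequality with exponents $p/(p-1)$ and $p$, one obtains
\[
\Big|\int\beta_\gamma(Z^m)\,d\theta-\int\beta_\gamma(Z^\ast)\,d\theta\Big|\le p\gamma^{p-1}\bigl(\|Z^m\|_{L^p(\theta)}+\|Z^\ast\|_{L^p(\theta)}\bigr)^{p-1}\|Z^m-Z^\ast\|_{L^p(\theta)}.
\]
The hypothesis $c\in L^p(\theta)$, boundedness of $f$, and the $L^p$-convergence established above render $\|Z^m\|_{L^p(\theta)}$ uniformly bounded, so the right-hand side vanishes. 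Combining this with the convergence of the linear terms gives $\limsup_{m}\phi^m_{\theta,\gamma}(f)\le F(\lambda^\ast,h_1^\ast,\dots,h_d^\ast,g^\ast)\le \phi_{\theta,\gamma}(f)+\varepsilon$, and letting $\varepsilon\downarrow 0$ closes the argument. The role of the assumption $c\in L^p(\theta)$ is exactly to make the H\"older estimate above finite; without it, a single controlled $L^p$-approximation of the inputs does not translate into control of the $p$-growth penalization.
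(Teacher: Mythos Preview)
Your proof is correct and follows essentially the same route as the paper's: both fix an $\varepsilon$-optimizer, invoke Hornik's universal approximation to produce $L^p$-close network functions, and then show that $L^p$-closeness of the argument of $\beta_\gamma$ forces convergence of the penalization integral. Your treatment is in fact slightly more explicit than the paper's in two places---you build the combined measures $\nu,\nu_i$ to obtain simultaneous $L^p$-approximation under both $\bar{\mu}$ and the relevant $\theta$-marginals, and you control the penalization term via the pointwise bound $|a_+^p-b_+^p|\le p(|a|+|b|)^{p-1}|a-b|$ together with H\"older, whereas the paper simply observes that $L^p$-convergence of the positive parts implies convergence of their $p$-th moments---but these are differences in presentation rather than strategy.
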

		\begin{proof}
		By the choice of the activation function, all network functions are continuous and bounded, and hence $\phi_{\theta, \gamma}^m(f) \geq \phi_{\theta, \gamma}(f)$.

It therefore suffices to show that for any $\varepsilon>0$, there exists an $m \in \mathbb{N}$ such that $$\phi_{\theta,\gamma}(f) \geq 	\phi_{\theta,\gamma}^m(f) - \varepsilon.$$
Choose any feasible $(\lambda, h_1, ..., h_d, g)$ for $\phi_{\theta, \gamma}(f)$. 
By Theorem 2 we can find a sequence $(\lambda^m, h_1^m, ..., h_d^m, g^m)$ with $h_i^m \in \mathfrak{N}(m, N_i)$ for $i=1, ..., d$ and $g^m \in \mathfrak{N}(m, N)$ such that for $m\rightarrow \infty$ it holds
		\begin{align*}
		\lambda^m &\rightarrow \lambda,  &&\\
		h_i^m &\rightarrow h_i ~ &&\text{ in } L_p(\bar{\mu}_i) \text{ for } i=1, ..., d,\\
		((x, y) \mapsto h_i^m(y_i)) &\rightarrow ((x, y) \mapsto h_i(y_i)) ~ &&\text{ in } L_p(\theta) \text{ for } i=1, ..., d,\\
		g^m &\rightarrow g ~ &&\text{ in } L_p(\bar{\mu}), \\
		((x, y) \mapsto g^m(x)) &\rightarrow ((x, y) \mapsto g(x)) &&\text{ in } L_p(\theta).
		\end{align*}
		As $c \in L_p(\theta)$, it also holds $\lambda^m c \rightarrow \lambda c$ in $L_p(\theta)$ and hence
		\begin{align*}
		((x, y) \mapsto f(x, y) - g^m(x) - \sum_{i=1}^d h_i^m(y_i) - \lambda^m c(x, y)) \\ \rightarrow ((x, y) \mapsto f(x, y) - g(x) - \sum_{i=1}^d h_i(y_i) - \lambda c(x, y))
		\end{align*}
		in $L_p(\theta)$ as $m\rightarrow \infty$. As the mapping $x \mapsto x^+$ is Lipschitz-1, taking only the positive parts lets the above convergence remain valid. As convergence in $L_p(\theta)$ implies convergence of the $p$-th moment, we obtain $F(\lambda^m, h_1^m, ..., h_d^m, g^m) \rightarrow F(\lambda, h_1, ..., h_d, g)$ as $m\rightarrow \infty$. 
		
For a given $\varepsilon>0$, choose a feasible $(\lambda, h_1, ..., h_d, g)$ for $\phi_{\theta, \gamma}(f)$, such that $\phi_{\theta, \gamma}(f) \geq F(\lambda, h_1, ..., h_d, g) - \frac{\varepsilon}{2}$. Due to the above proven convergence, we can find $(\lambda^m, h_1^m, ..., h_d^m, g^m)$ with $h_i^m \in \mathfrak{N}(m, N_i)$ for $i=1, ..., d$ and $g^m \in \mathfrak{N}(m, N)$ such that
$$\phi_{\theta, \gamma}(f) \geq F(\lambda, h_1, ..., h_d, g) - \frac{\varepsilon}{2} \geq \left( F(\lambda^m, h_1^m, ..., h_d^m, g^m) - \frac{\varepsilon}{2} \right) - \frac{\varepsilon}{2} \geq \phi_{\theta, \gamma}(f)^m - \varepsilon.$$
\end{proof}
\begin{remark}
	\label{rem:errors}
While the previous result obtains, for a fixed $f \in C_b(\mathbb{R}^N)$, the convergence $\phi^m_{\theta, \gamma}(f) \rightarrow \phi_{\theta, \gamma}(f)$ for $m\rightarrow \infty$, approximation errors for finite values of $m$ are also of interest. In general, there is hope to achieve this.
In the setting of Proposition \ref{prop:conv_network} with $p\in \mathbb{N}_{\geq 2}$: Assume $\lambda, h_1, ..., h_d, g$ are optimizers of $\phi_{\theta, \gamma}(f)$ which have sufficient moments and $h_1^m, ..., h_d^m, g^m$ are network functions which approximate $h_1, ..., h_d, g$ up to $\varepsilon$ accuracy for the respective $L^p$-norms.
Then it holds 
\[
\left| \phi_{\theta, \gamma}(f) - \phi_{\theta, \gamma}^m(f) \right| \leq C \cdot \varepsilon,
\]
where $C$ is a constant only depending on $f, \lambda, c, h_1, ..., h_d, g$.
	\begin{proof}
		First, define $T(x, y) := f(y) - \sum_{i=1}^d h_i(y_i) - \lambda c(x,y) - g(x)$ and $T^m(x, y) := f(y) - \sum_{i=1}^d h_i^m(y_i) - \lambda c(x,y) - g^m(x)$.
		Using the function $F$ introduced above, we have that $\phi_{\theta, \gamma}^m(f) \leq F(\lambda, h_1^m, ..., h_d^m, g^m)$ and hence
		\begin{align*}
		\phi_{\theta, \gamma}^m(f) - \phi_{\theta,\gamma}(f) &\leq |F(\lambda, h_1^m, ..., h_d^m, g^m) - F(\lambda, h_1^m, ..., h_d^m, g)| \\ 
		&\leq \sum_{i=1}^d \|h_i - h_i^m\|_{L^1(\bar{\mu}_i)} + \|g-g^m\|_{L^1(\bar{\mu})}  + \|T_+^p - (T^m)_+^p\|_{L^1(\theta)}
		\end{align*} 
		and by the inequality quoted in Lemma \ref{lem:Inequality} (see Appendix~\ref{subsec:Inequality}) it holds \[\|T_+^p - (T^m)_+^p\|_{L^1(\theta)} \leq \tilde{C} \|T_+ - T_+^m\|_{L_p(\theta)} \leq \tilde{C} \|T - T^m\|_{L^p(\theta)} \leq \tilde{C} (d+1) \varepsilon\] with $\tilde{C} = \sum_{k=0}^{p-1}\|T_+\|_{L^p(\theta)}^k \|T_+^m\|_{L^p(\theta)}^{p-1-k}$.
		Since $\phi_{\theta,\gamma}(f) \leq \phi_{\theta, \gamma}^m(f)$, we obtain
		$$\left| \phi_{\theta, \gamma}(f) - \phi_{\theta, \gamma}^m(f) \right| = \phi_{\theta, \gamma}^m(f) - \phi_{\theta,\gamma}(f) \leq (\tilde{C} (d+1) + (d+1)) \cdot \varepsilon.$$ While the constant $\tilde{C}$ formally depends on the $p$-th moments of both $T$ and $T^m$, to eliminate the dependence on $T^m$ one can use $\|T^m\|_{L^p(\theta)} \leq \|T\|_{L^p(\theta)} + \|T - T^m\|_{L^p(\theta)} \leq  \|T\|_{L^p(\theta)} + 1$ for $\varepsilon$ small enough.
	\end{proof}
\end{remark}

\section{Implementation}
	\label{sec:implementation}
	This section aims to give specifics regarding the implementation of problem \eqref{eq:PhiThetaGammaOverview} as an approximation of problem \eqref{eq:ProblemPHI}. In particular, the following points are discussed:
	\begin{enumerate}
		\item The choice of $\theta$, $\beta_\gamma$ and neural network structure.
		\item The optimization method for the parameters of the neural network.
		\item How to evaluate the quality of the obtained solution.
		\item The typical runtime.
	\end{enumerate}
	
	\subsection{Choice of $\theta$, $\beta_\gamma$ and neural network parameters}
	\label{subsec:parchoice}
	The neural network structure to approximate the space $C_b(\mathbb{R}^d)$ is chosen as a feedforward neural network with 5 layers (input, output, 3 hidden layers) with hidden dimension $64\cdot d$. The basic idea behind this was to increase the size of the neural networks until a further increase no longer changes the outcome of the optimization. As an activation function we use the ReLu function.
	
	To be precise, the neural network functions we work with are of the form
	\[
	x \mapsto \underbrace{A_4}_{\substack{\text{output}\\\text{layer}}} \circ \underbrace{\varphi \circ 
			A_{3}}_{4\text{th}~\text{layer}} 
		\circ \underbrace{\varphi \circ 
			A_{2}}_{3\text{rd}~\text{layer}} \circ \underbrace{\varphi \circ 
			A_{1}}_{2\text{nd}~\text{layer}} \circ  \underbrace{\varphi \circ A_0}_{\substack{\text{input}\\\text{layer}}}
		(x),
	\]
	where the activation function $\varphi$ is chosen as $\varphi(x) = \max\{0, x\}$. 
	The mappings $A_i$ are affine transformations, i.e.~$A_i(x) = M_i\,x + b_i$ for a matrix $M_i \in \mathbb{R}^{d_{i, 2} \times d_{i, 1}}$ and a vector $b_i \in \mathbb{R}^{d_{i, 2}}$. The matrices $M_0, ..., M_4$ and vectors $b_0, ..., b_4$ are the parameters of the network that one optimizes for. As described above, the dimensions of these parameters are chosen as follows: The input dimension $d_{0, 1} = d$ is given by the dimension of the input vector $x$, and $d_{i, 2} = d_{i+1, 1}$ has to hold for compatibility. The hidden dimension $d_{i, 1}$ for $i = 1, 2, 3, 4$ is set to $64 \cdot d$, while the output dimension $d_{4, 2}$ is always $1$.
	
	The penalization function $\beta_\gamma$ is set to $\beta_\gamma(x) = \gamma \max\{0, x\}^2$. On the one hand, this choice has shown to be stable across all examples. On the other hand, the theory in Proposition~\ref{prop:conv_network}  applies precisely to penalization functions of this kind.  is  Regarding the parameter $\gamma$, we usually first solve the problem with a low choice, like $\gamma=50$, which leads to stable performance. Then, we gradually increase $\gamma$ until a further increment no longer leads to a significant change in the objective value of \eqref{eq:PhiThetaGammaOverview}.
	Regarding instabilities when $\gamma$ is set too large, see Section \ref{subsec:qualityofsol}.
	
	Concerning the sampling measure $\theta$, the basic choice is to use $\theta^{\text{prod}} = \bar{\mu} \otimes \bar{\mu}_1 \otimes ... \otimes \bar{\mu}_d$. Particularly for low values of $\rho$, this is suboptimal: Indeed, for $\rho = 0$ in problem \eqref{eq:PhiThetaGamma:dual}, we know that the optimizer is always of the form $\pi^{\text{diag}} = \bar{\mu} \otimes K$, where $K$ is the stochastic kernel $K: \mathbb{R}^d \rightarrow \mathcal{P}(\mathbb{R}^d)$ given by $K(x) = \delta_x$. Since $\pi^{\text{diag}}$ is singular with respect to $\theta^{\text{prod}}$, using only $\theta^{\text{prod}}$ as sampling measure, one can expect high errors arising from penalization for small values of $\rho$. It hence makes sense to use (among other possibilities) $\theta^{\text{half}} := \frac{1}{2} \theta^{\text{prod}} + \frac{1}{2} \pi^{\text{diag}}$. This is very specific however, and most solutions will not put mass precisely where $\pi^{\text{diag}}$ puts mass. Hence, we add some noise to $\pi^{\text{diag}}$, e.g.~via a Gaussian measure with covariance matrix $\varepsilon^2$: $\theta^{\text{third}} := \frac{1}{2} \theta^{\text{prod}} + \frac{1}{4} \pi^{\text{diag}} + \frac{1}{4} (\pi^{\text{diag}} * \mathcal{N}(0, \varepsilon^2))$, where $*$ denotes convolution of measures. 
	The sampling measure $\theta^{\text{half}}$ is used in all four toy examples in Section~\ref{sec:Examples}, whereas we rely on $\theta^{\text{third}}$ in the final case study in Section~\ref{sec:DNB}.
	
	\subsection{Optimization method for the parameters of the neural network}
	\label{subsec:optmethod}
	This subsection may as well be called ``Training". However, as we do not employ neural networks in a training-testing kind of environment, this might be misleading. 
	
	Regarding this topic, trial and error is especially useful, as the simple goal is to obtain a stable convergence. For the parameters of the neural network, we use the Adam Optimizer with parameters $\beta_1 = 0.99$ and $\beta_2 = 0.995$. For the learning rate, we start with $\alpha=0.0001$ for the first $N_0$ iterations of training, and then decrease it by a factor of $0.98$ each $50$ iterations for a total of $N_{\text{fine}}$ further iterations. We use a batch size (the number of samples generated in each iteration for the measures involved) of around $2^7$ to $2^{16}$, see Section \ref{subsec:qualityofsol} for more details. $N_0$ and $N_{\text{fine}}$ are chosen problem specific: for simple problems in Section \ref{sec:Examples}, $N_0=15000$ and $N_{\text{fine}} = 5000$, while for the DNB case study in Section \ref{sec:DNB} they are chosen as $N_0=60000$ and $N_{\text{fine}} = 30000$. 
	
	The parameter $\lambda$ has to be optimized separately from the parameters of the neural network, since the value of $\lambda$ is clearly more important than any single parameter of the network. To be precise after a fixed number $N_\lambda$ of iterations, $\lambda$ is updated by
	$$ \lambda \mapsto \lambda - \alpha_\lambda  \frac{1}{N_\lambda} \sum_{i \in I} \nabla_\lambda^i,$$
	 where $I$ are the previous $N_\lambda$ many iterations, $\alpha_\lambda$ is the learning rate and $\nabla_\lambda^i$ is the sample derivative of the objective function with respect to $\lambda$ in iteration $i$. Concerning the choice of $\alpha_\lambda$ and $N_\lambda$, we usually first set $\alpha_\lambda$ to around $0.1$ (depending on the problem), and decrease it in the same fashion as $\alpha$, while $N_\lambda$ is set to 200.
Before we update $\lambda$ for the first time, we wait until the network parameters are in a sensible region, which typically takes around $1000$ to $10000$ iterations. 

If another parameter is involved in the optimization (such as $\tau$ in the examples that calculate the AVaR), we employ the same method as for $\lambda$, but we update this parameter even more rarely (once every 1000-2500 iterations), and wait longer at the start to update it the first time (between 5000 and 20000 iterations).
	\subsection{Evaluation of the solution quality}
	\label{subsec:qualityofsol}
	To evaluate the obtained solutions, we found that mostly three aspects have to be considered: 
	\begin{itemize}
		\item[(a)] Is the neural network structure rich enough?
		\item[(b)] How large is the effect of penalization?
		\item[(c)] Has the numerical optimization procedure converged to a (near) minimum?
	\end{itemize}
	Section \ref{subsec:convergenceanalysis} shows how this is put into practice for an exemplary case.
	
	Part (a) is the seemingly simplest, as we found the choice of network structure described in Subsection \ref{subsec:parchoice} to be sufficient for all problems, in the sense that further increasing the network size does not alter the obtained solution. 
	
	Regarding part (b), the most useful observation is the following: As described in Proposition \ref{prop:penal}, the numerical solution via neural networks can be used to obtain an approximate solution $\mu^\star$ of the primal problem. 
	If we evaluate the integral $\int f d\mu^\star$ and compare it to $\phi_{\theta, \gamma}(f)$, the difference is $\phi^\ast_{\theta, \gamma}(\pi^\star)$, which can be seen as the effect of penalization. If $\phi^\ast_{\theta, \gamma}(\pi^\star)$ has a small value, it indicates a small effect of penalization. The second observation is that $\phi_{\theta, \gamma}(f)$ is increasing in $\gamma$, and under the conditions studied in Proposition \ref{prop:penal} and \ref{prop2} converges to $\phi(f)$. Hence, starting with a low value of $\gamma$ and increasing it until no further change is observed is a good strategy. When doing so, values of $\gamma$ that are too large can of course be detrimental regarding part (c), and hence when increasing $\gamma$ a concurrent adaptation of training parameters (like learning rate or batch size) is often necessary.
	
	Regarding part (c), we found that most instabilities could be solved by increasing the batch size. This increase naturally comes with longer run times. Especially if $\gamma$ has to be increased a lot to allow for a small effect of penalization, very large batch sizes were required (e.g. in the DNB case study, we use a batch size of $2^{15}$). To obtain structured criteria for convergence (compared to just evaluating convergence visually), we can again use the dual relation arising from Proposition \ref{prop:penal}. Indeed, we can exploit the fact the numerically obtained $\mu^\star$ (as the second marginal of $\pi^\star$ from Proposition \ref{prop:penal} (b)) is an approximately feasible solution to problem \eqref{eq:ProblemPHI} if the algorithm has converged. Hence, as a necessary criteria for convergence, one can check whether $\mu^\star$ satisfies criteria for feasibility. To this end, one can compare the marginals of $\mu^\star$ to those of $\bar{\mu}$ (we did this mostly by visually evaluating empirical marginals of $\mu^\star$) as well as estimate $d_c(\bar{\mu}, \mu^\star)$.\footnote{Clearly, $d_c(\bar{\mu}, \mu^\star)$ should be be bounded by $\rho$ and in the optimum equal to $\rho$ (if one is not already in the edge case where $\rho$ is so large that it doesn't have an effect).}
	
	\subsection{Runtime}
	\label{subsec:hardware}
	Generally speaking, calculations using neural networks can benefit greatly from parallelization, e.g.~by employing GPUs. For most of our examples, this was not necessary however, and the respective calculations could be performed quickly (i.e.~in between one and five minutes) even with a regular CPU (intel i5-7200U; dual core with 2.5-3.1 GHz each). For the DNB case study however, a single run with stable learning parameters takes around 20 hours on a CPU. By utilizing a single GPU (Nvidia GeForce RTX 2080 Ti) this is reduced to around 30 minutes.  Notably, in the smaller examples there was less speed-up when using GPU compared to CPU, the reason being that the problems were too small to fully utilize parallel capabilities of a GPU.

\section{Examples}

\label{sec:Examples}
The aim of this section is to illustrate how the above introduced concepts can be used to numerically solve given problems. In particular we demonstrate that neural networks are able to (1) achieve a satisfactory empirical performance for all problems considered, (2) naturally determine the structure of the worst-case distribution via Proposition \ref{prop:penal}~(b) and (3) deal with problems, that cannot be reformulated as LP. Concerning the latter point, we consider both a function $f$, which cannot be written as the maximum of affine functions, as well as a cost function $c$, which is not additively separable. Additionally, we make a case for the generality of our duality result: we replace the distance constraint by a distance penalty and fix the distribution of bivariate, rather than univariate, marginals. Furthermore by considering unbounded functions $f$, we shed some light on the necessity of the growth functions $\kappa$ used in Theorem \ref{thm:main}.
To achieve all of these points, we consider three examples with increasing difficulty. 

Concerning the notation in this section, $c$ denotes the cost function 
$$c(x,y) = || x-y ||_1 = \sum_i \vert x_i - y_i \vert.$$ 
This notation implies that 
$$d_c(\bar{\mu},\mu) := \inf_{\pi \in \Pi(\bar{\mu},\mu)} \int_{\R^d \times \R^d} \sum_{i=1}^d \vert x_i - y_i \vert \, \pi(dx,dy),$$ 
is the first order Wasserstein distance with respect to the $L_1$-metric. On the other hand, we consider the first order Wasserstein distance with respect to the Euclidean metric
\begin{align}
d_{c_2}(\bar{\mu},\mu) := \left\lbrace \inf_{\pi \in \Pi(\bar{\mu},\mu)} \int_{\R^d \times \R^d} \left( \sum_{i=1}^d ( x_i - y_i )^2 \right)^{1/2}  \pi(dx,dy) \right\rbrace. \label{eq:WSdq1p2}
\end{align}
Notice that the cost function $c_2(x,y) := ||x-y||_2$ is not additively separable.\footnote{In the literature the Wasserstein distance with respect to the Euclidean metric is usually associated with order two, in which case the underlying cost function \emph{is}  additively separable.}

\subsection{Expected maximum of two comonotone standard Uniforms}
\label{ssec:MaxUniforms}
We start our exemplification with a toy example which is not connected to risk measurement.
Consider the following problem 
\begin{align}
\phi(f_1) :=& \sup_{\substack{\binom{V}{U} \sim \mu \in \Pi(\bar{\mu}_1,\bar{\mu}_2),\\ d_c(\bar{\mu},\mu)\leq \rho}}  \mathbb{E} \left[ \max(U,V) \right] = \sup_{\substack{\mu \in \Pi(\bar{\mu}_1,\bar{\mu}_2),\\ d_c(\bar{\mu},\mu)\leq \rho}}
\int_{[0,1]^2} \max(x_1,x_2) \, \mu(dx), \label{eq:ProblemExampleOne}
\end{align}
where $\bar{\mu}_1 = \bar{\mu}_2 = \mathcal{U}([0,1])$ are (univariate) standard uniformly distributed probability measures and $\bar{\mu}$ is the comonotone copula. In other words, $\bar{\mu}$ is a bivariate probability measure with standard uniformly distributed marginals which are perfectly dependent. In the notation of the Section \ref{sec:results}, we choose the function $f$ as $f_1(x) = \max(x_1,x_2)$  and $X = X_1 \times X_2 = [0,1] \times [0,1]$. Interpreting problem \eqref{eq:ProblemExampleOne}, we aim to compute the expected value of the maximum of two standard Uniforms under ambiguity with respect to the reference dependence structure, which is given by the comonotone coupling.
Problem \eqref{eq:ProblemExampleOne} possesses the following analytic solution
\begin{align*}
\phi(f_1) &= \frac{1+\min(\rho,0.5)}{2}.
\end{align*}
The derivation of this solution can be found in Appendix \ref{ssec:ProofEx1} and is based on the duality result in Corollary \ref{coro:DualityWassersteinball}. 
Hence, problem \eqref{eq:ProblemExampleOne} is well suited to benchmark the solution method based on neural networks. In comparison, we also solve the problem with linear programming. To be precise, we consider the following two methods:
\begin{enumerate}
\item We discretize the reference copula $\bar{\mu}$ (and thereby the marginal distributions $\bar{\mu}_1$ and $\bar{\mu}_2$) and solve the resulting dual problem by means of linear programming (see Corollary \ref{coro:LPreformulation}). There are two distinct ways to discretize $\bar{\mu}$:

\begin{enumerate}[a)]
\item We use Monte Carlo sampling. In the notation of Corollary \ref{coro:LPreformulation}, this means we sample $n$ points $x_1^1,\dots, x_1^n$ in $[0,1]$ from the standard Uniform distribution. Then, we set $x_2^j=x_1^j$ for $j=1,\dots,n$.
\item We set the points $x_1^j = x_2^j = \frac{2j-1}{2n}$ for $j=1,\dots,n$. As the comonotonic copula lives only on the main diagonal of the unit square, this deterministic discretization of $\bar{\mu}$ in some sense minimizes the discretization error. The simple geometrical argument used to find this discretization can be applied only due to the special structure of the reference distribution at hand.
\end{enumerate}
Let us emphasize that method 1.a) can be applied to any reference distribution $\bar{\mu}$. On the other hand, method 1.b) can only be used in this particular example as $\bar{\mu}$ is given by the comonotonic copula.
\item We solve the problem with the neural network approach described in the above Section \ref{sec:implementation}. As discussed, some hyperparameters need to be chosen problem specific. In particular we set: $N_0 = 15000$, $N_{\text{fine}} = 5000$, $\gamma = 1280$, batch size $=2^7$ and $\alpha_\lambda = 0.1$.\footnote{We wait for 2500 iterations until updating $\lambda$ for the first time where $\lambda$ is initialized to $\lambda = 0.75$} 
Concerning the sampling measure $\theta$, for this example we compare
\begin{enumerate}[a)]
\item the basic choice $\theta = \theta^{\text{prod}}$ and
\item the improved choice $\theta = \theta^{\text{half}}$.
\end{enumerate}
To better understand these parameter choices and our neural network approach in general, we provide a detailed convergence analysis for this example in Appendix~\ref{subsec:convergenceanalysis}.
\end{enumerate}

\begin{figure}
\begin{center}
\includegraphics[width=0.49\textwidth]{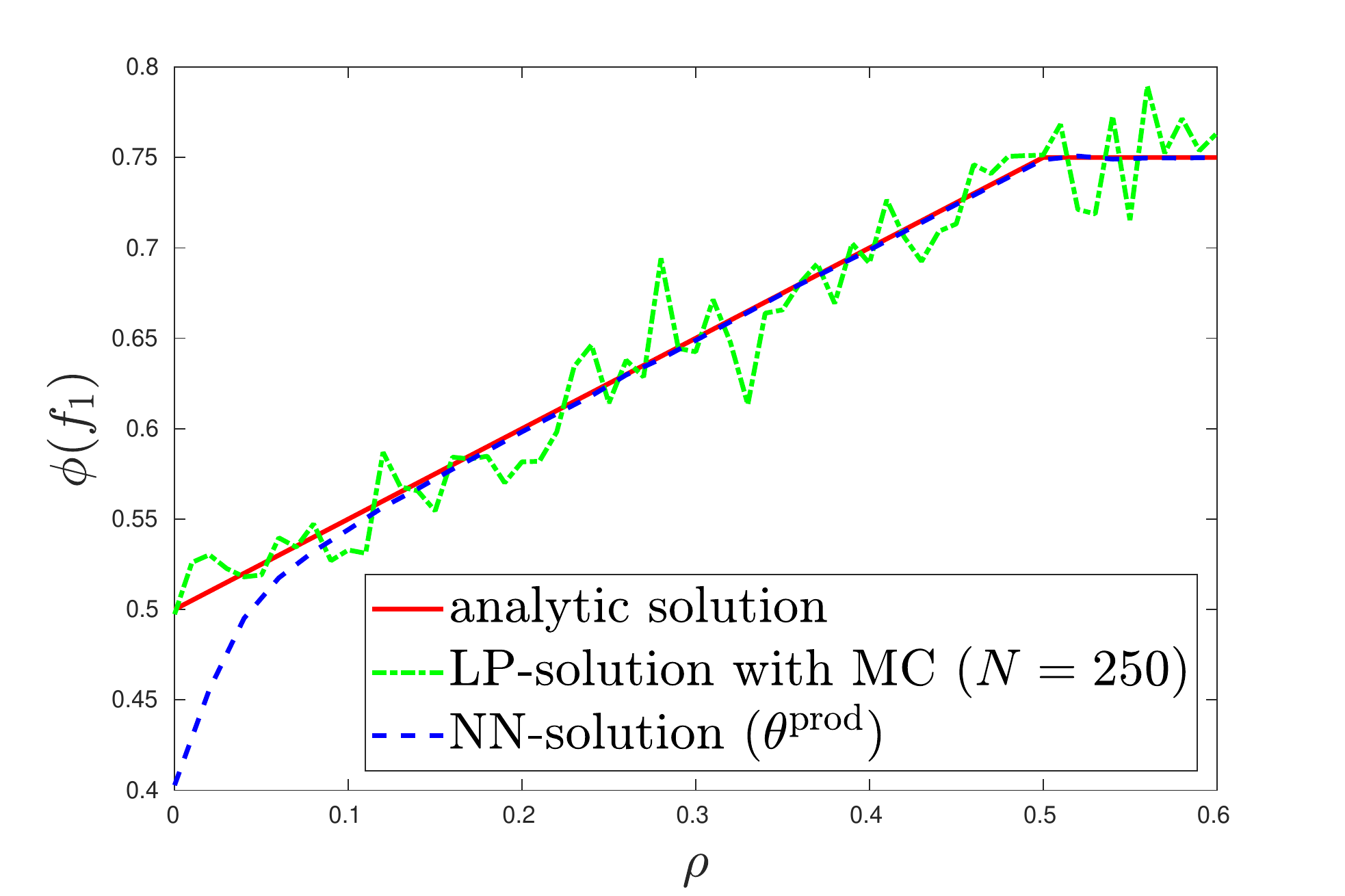}
\includegraphics[width=0.49\textwidth]{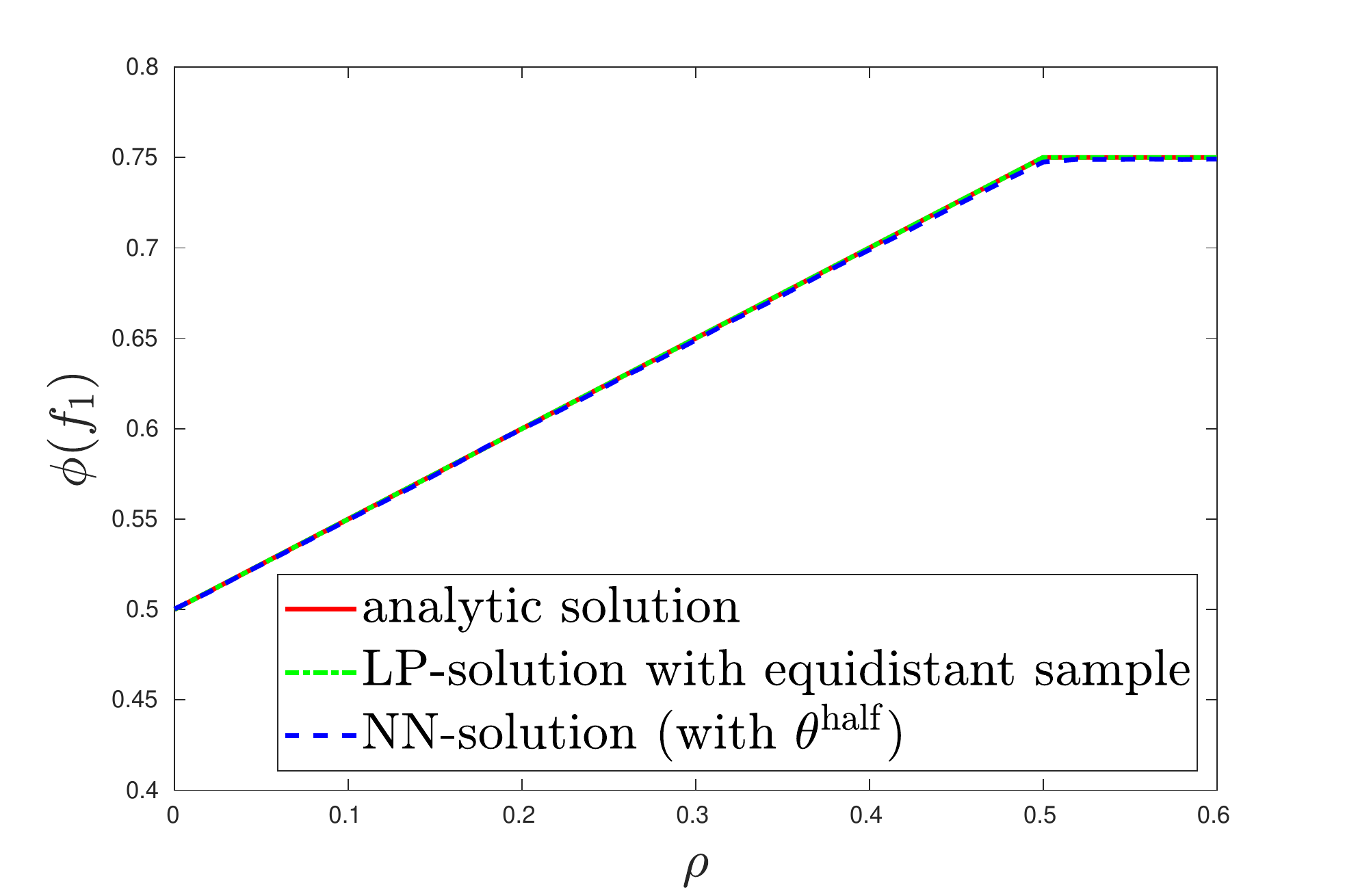}
\caption{In the left panel, the analytic solution $\phi(f_1)$ of problem \eqref{eq:ProblemExampleOne} is plotted as a function of $\rho$ and compared to corresponding numerical solutions obtained by method 1.a) and method 2.a), which are described in Section \ref{ssec:MaxUniforms}. The right panel shows the same for the improved methods 1.b) and 2.b).
}
\label{fig:Example1}
\end{center}
\end{figure}

Figure \ref{fig:Example1} compares the two above mentioned methods to solve problem \eqref{eq:ProblemExampleOne} for different values of $\rho$. In the left panel of Figure \ref{fig:Example1}, we observe that method 1.a) yields an unsatisfactory result even though $n=250$ is chosen as large as possible for the resulting LP to be solvable by a commercial computer. This issue arises due to the poor quality of the discretization resulting from Monte Carlo simulation. If one chooses the discretization as done in method 1.b), we recover the analytic solution of problem \eqref{eq:ProblemExampleOne} as can be seen in the right panel of Figure \ref{fig:Example1}. Moreover, Figure \ref{fig:Example1} indicates that method 2, i.e.~the approach presented in this paper, yields quite good and stable results. The left panel, however, shows that for small $\rho$ method 2.a) does not rediscover the true solution. The reason for this is that when drawing random samples from the chosen sampling measure $\theta^{\text{prod}}$, it is unlikely that we sample from the relevant region, namely the main diagonal of the unit square. As discussed in Section \ref{subsec:parchoice}, method 2.b) is designed to overcome precisely this weakness and the right panel of Figure \ref{fig:Example1} illustrates that it does.

\begin{figure}
\begin{center}
\includegraphics[width=0.49\textwidth]{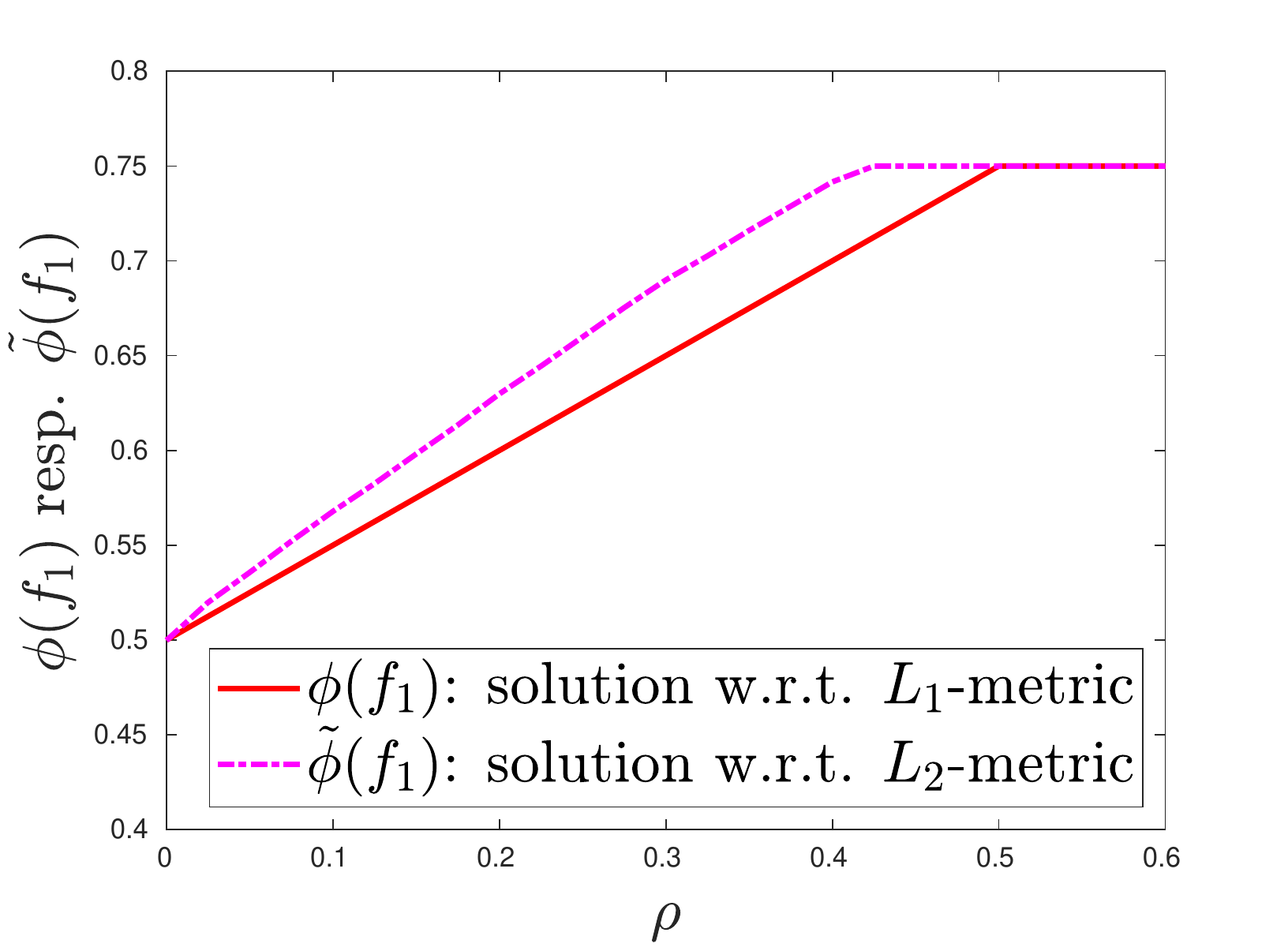}
\caption{The analytic solution $\phi(f_1)$ of problem \eqref{eq:ProblemExampleOne}, which uses the first order Wasserstein distance with respect to the $L_1$-metric, is compared to the numerical solution $\tilde{\phi}(f_1)$ of problem \eqref{eq:ProblemExampleOneOrder2}, which uses the first order Wasserstein distance with respect to the Euclidean metric, i.e.~the $L_2$-metric. }
\label{fig:Example1Order2}
\end{center}
\end{figure}

We finalize this example by considering the Wasserstein distance with respect to the Euclidean metric $d_{c_2}$, defined in equation \eqref{eq:WSdq1p2}, rather than the Wasserstein distance with respect to the $L_1$ metric $d_c$. 
Thus, we compare problem \eqref{eq:ProblemExampleOne} to 
\begin{align}
\tilde{\phi}(f_1) := \sup_{\substack{\mu \in \Pi(\bar{\mu}_1,\bar{\mu}_2),\\ d_{c_2}(\bar{\mu},\mu) \leq \rho}}
\int_{[0,1]^2} \max(x_1,x_2) \, \mu(dx).\label{eq:ProblemExampleOneOrder2}
\end{align}
Since the cost function $c_2$ is not additively separable, $\tilde{\phi}(f_1)$ - other than $\phi(f_1)$ - cannot be approximated based on Corollary \ref{coro:LPreformulation}, i.e.~linear programming. Nevertheless, we can approximate  $\tilde{\phi}(f_1)$ using neural networks, which demonstrates the flexibility of our approach.\footnote{We use the same choices for the hyperparameters as given in point 2. above, but increase $N_0$ considerably for small $\rho$ to guarantee convergence of $\lambda$, and use $\theta^{\text{half}}$ as a sampling measure.} 
Figure \ref{fig:Example1Order2} compares $\phi(f_1)$ and $\tilde{\phi}(f_1)$ for different $\rho$. Note that as $c(x,y) \geq c_2(x,y)$ for all $x,y$, $d_c(\bar{\mu},\mu) \geq d_{c^2}(\bar{\mu},\mu)^{1/2}$ for all $\bar{\mu},\mu \in \mathcal{P}(X)$. Hence, $\phi(f_1) \leq \tilde{\phi}(f_1)$ for fixed $\rho$. Figure \ref{fig:Example1Order2} is in line with this observation.

\subsection{Average Value at Risk of two independent standard Uniforms}
\label{ssec:AVaRUniforms}

We increase the level of complexity slightly compared to the previous example, as we now turn to robust risk aggregation. We aim to compute $\text{AVaR}_\alpha(U+V),$ where $U$ and $V$ are \emph{independent} standard Uniforms under ambiguity with respect to the independence assumption.  Note that the Average Value at Risk is defined by $$\text{AVaR}_\alpha(Y) := \min_{\tau \in \mathbb{R}} \big\lbrace \tau + \frac{1}{1-\alpha} \mathbb{E} \left[ \max(Y-\tau,0) \right] \big\rbrace,$$ see \citeA{rockafellar2000optimization}.
Using the first order Wasserstein distance to construct an ambiguity set around the reference dependence structure, we are led to the following problem
\begin{align}
\Phi_2 :=& \sup_{\substack{\binom{V}{U} \sim \mu \in \Pi(\bar{\mu}_1,\bar{\mu}_2),\\ d_c(\bar{\mu},\mu)\leq \rho}} \text{AVaR}_\alpha(U+V)   \label{eq:ExampleTwo} \\
=& \sup_{\substack{\mu \in \Pi(\bar{\mu}_1,\bar{\mu}_2),\\ d_c(\bar{\mu},\mu)\leq \rho}}
\inf_{\tau \in \mathbb{R}}  \left\lbrace \tau + \frac{1}{1-\alpha} \int_{[0,1]^2} \max(x_1+x_2-\tau,0) \mu(dx) \right\rbrace \label{eq:ExampleTwoLine2}  \\
=& \inf_{\tau \in \mathbb{R}} \phi(f_2^\tau), \label{eq:ExampleTwoLine3}
\end{align}
where $\bar{\mu}_1 = \bar{\mu}_2 = \mathcal{U}([0,1])$ are (univariate) standard uniformly distributed probability measures and
$\bar{\mu}$ is the independence copula. In other words, $\bar{\mu} = \mathcal{U}([0,1]^2)$ is a bivariate probability measure with independent, standard uniformly distributed marginals.
Moreover, we have that $f_2^\tau(x) = \tau + \frac{1}{1-\alpha} \max(x_1+x_2-\tau,0)$ and $\phi(\cdot)$ is defined as in equation \eqref{eq:ProblemPHI}. 

Notice that in the above formulation of the problem we can go from \eqref{eq:ExampleTwoLine2} to \eqref{eq:ExampleTwoLine3} since the problem is convex in $\tau$ and concave in $\mu$ and Wasserstein balls are weakly compact. Thus, we can apply Sion's Minimax Theorem to interchange the supremum and the infimum in \eqref{eq:ExampleTwoLine2}.
\vspace{5mm}

In Appendix \ref{ssec:ProofEx2}, we derive an analytical upper and lower bound for $\Phi_2$ in \eqref{eq:ExampleTwo}. These bounds are tight enough for the present purpose, which is to evaluate the performance of the two discussed numerical methods.

\begin{figure}
\begin{center}
\includegraphics[width=0.49\textwidth]{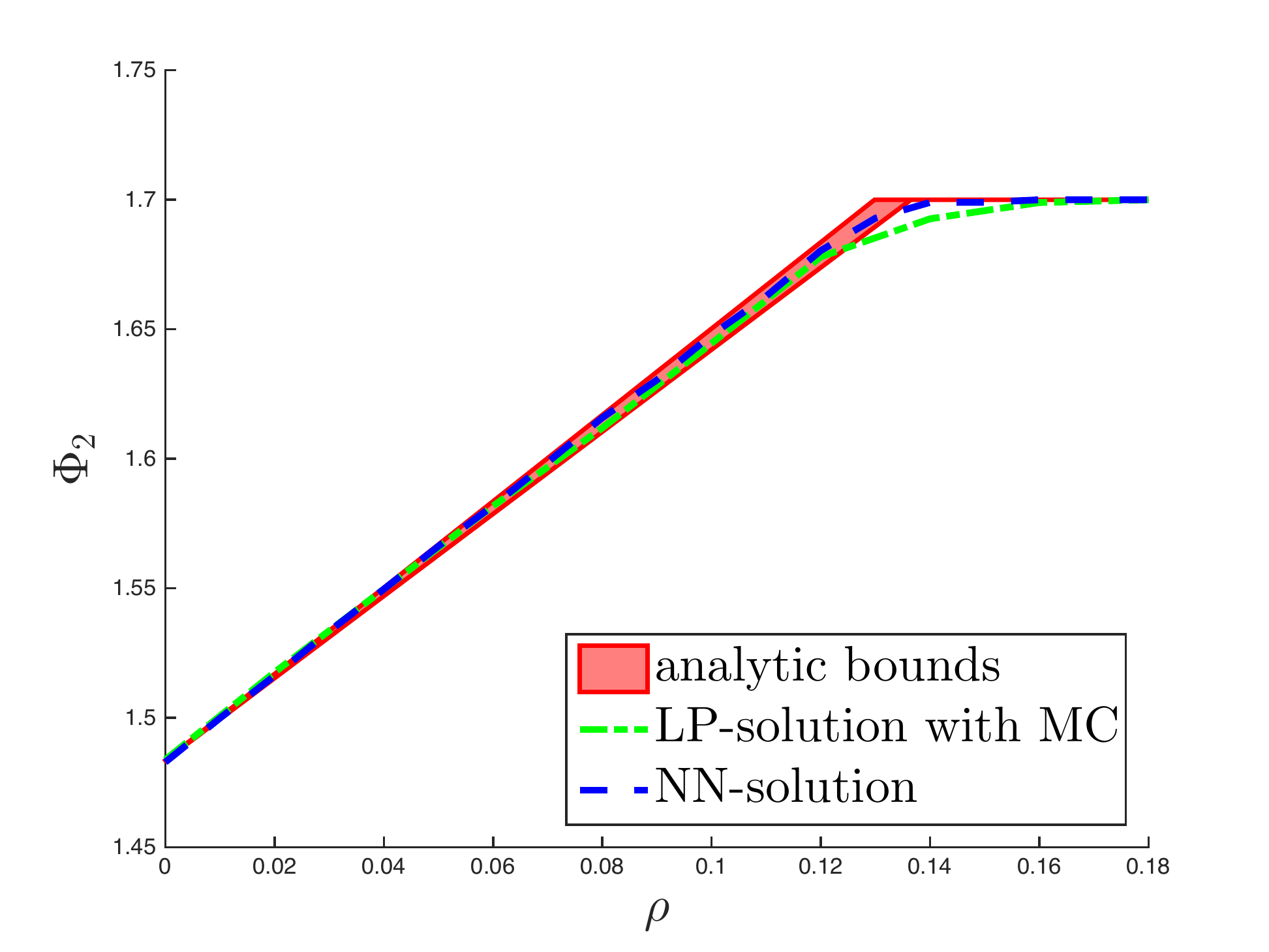}
\caption{The analytic upper and lower bounds of problem \eqref{eq:ExampleTwo} are compared to two distinct numerical solutions. The first numerical solution is obtained by Monte Carlo simulation with $n=100$ sample points as well as linear programming and averaged over 100 simulations for each fixed $\rho$. The second numerical solution is obtained by penalization and neural networks. The confidence level of the AVaR considered in problem \eqref{eq:ExampleTwo} is set to $\alpha = 0.7$.
}
\label{fig:Example2}
\end{center}
\end{figure}

Figure \ref{fig:Example2} supports the latter claim: The analytic bounds for $\Phi_2$ are rather tight when plotted as a function of $\rho$. The bounds are compared to the same two numerical methods as discussed in the previous example. With respect to the solution based on Monte Carlo simulation and linear programming, we now average over 100 simulations for each fixed $\rho$. Thus, the results in Figure \ref{fig:Example2} do not fluctuate as much as those we have seen in the left panel of Figure \ref{fig:Example1}. Nevertheless, Figure \ref{fig:Example2} shows that the solution obtained via MC and LP does not stay within the analytic bounds - other than the solution based on our neural networks approach. Arguably this is due to the lack of symmetry when discretizing the reference distribution $\mu$ using Monte Carlo.  Regarding runtime, both numerical methods take around the same time to calculate the values needed for Figure 3.

\setlength{\unitlength}{0.333\textwidth}
\begin{figure}[ht]
\begin{center}
\begin{picture}(3,2)(0,0)
\put(0,1){\includegraphics[width=0.3\textwidth]{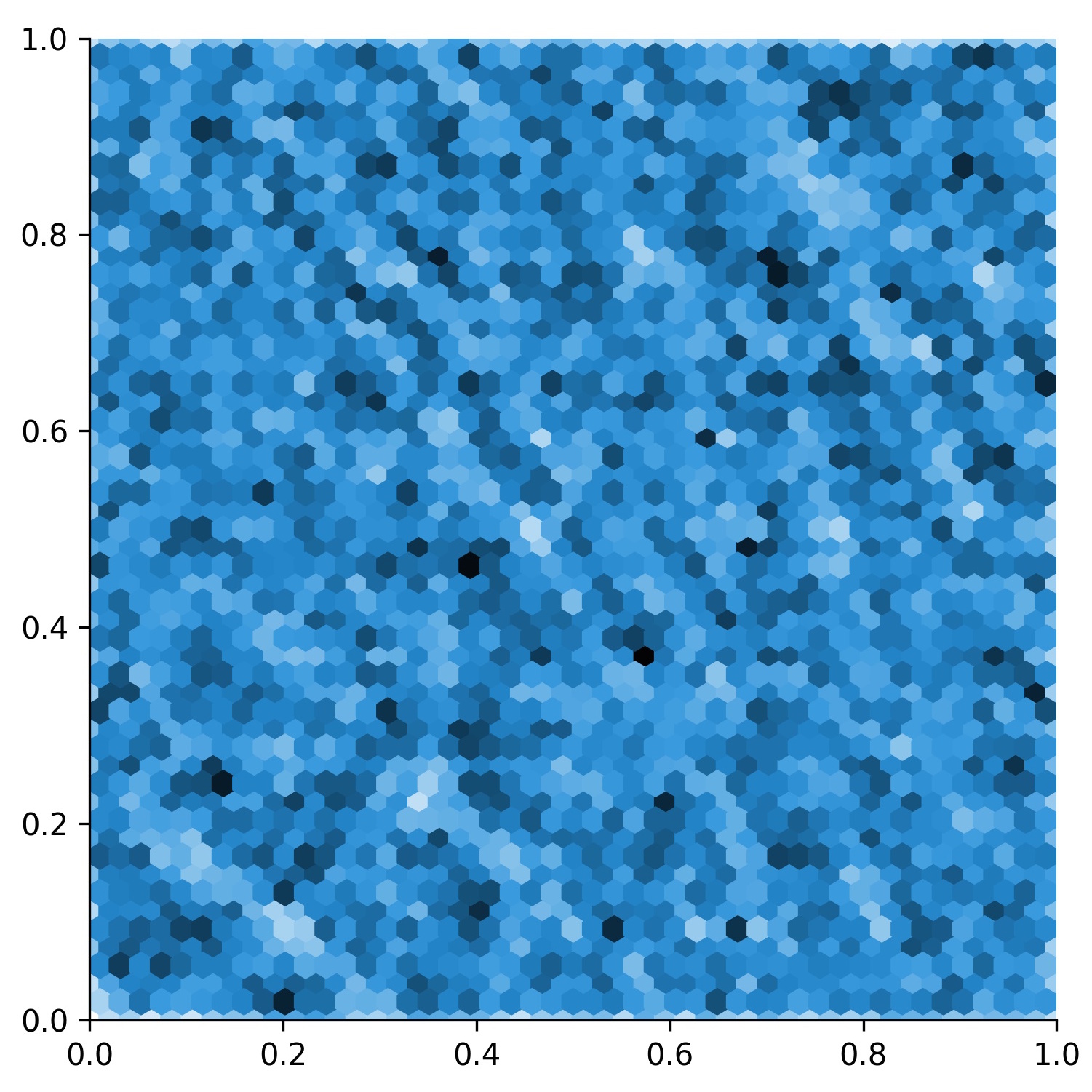}}
\put(1,1){\includegraphics[width=0.3\textwidth]{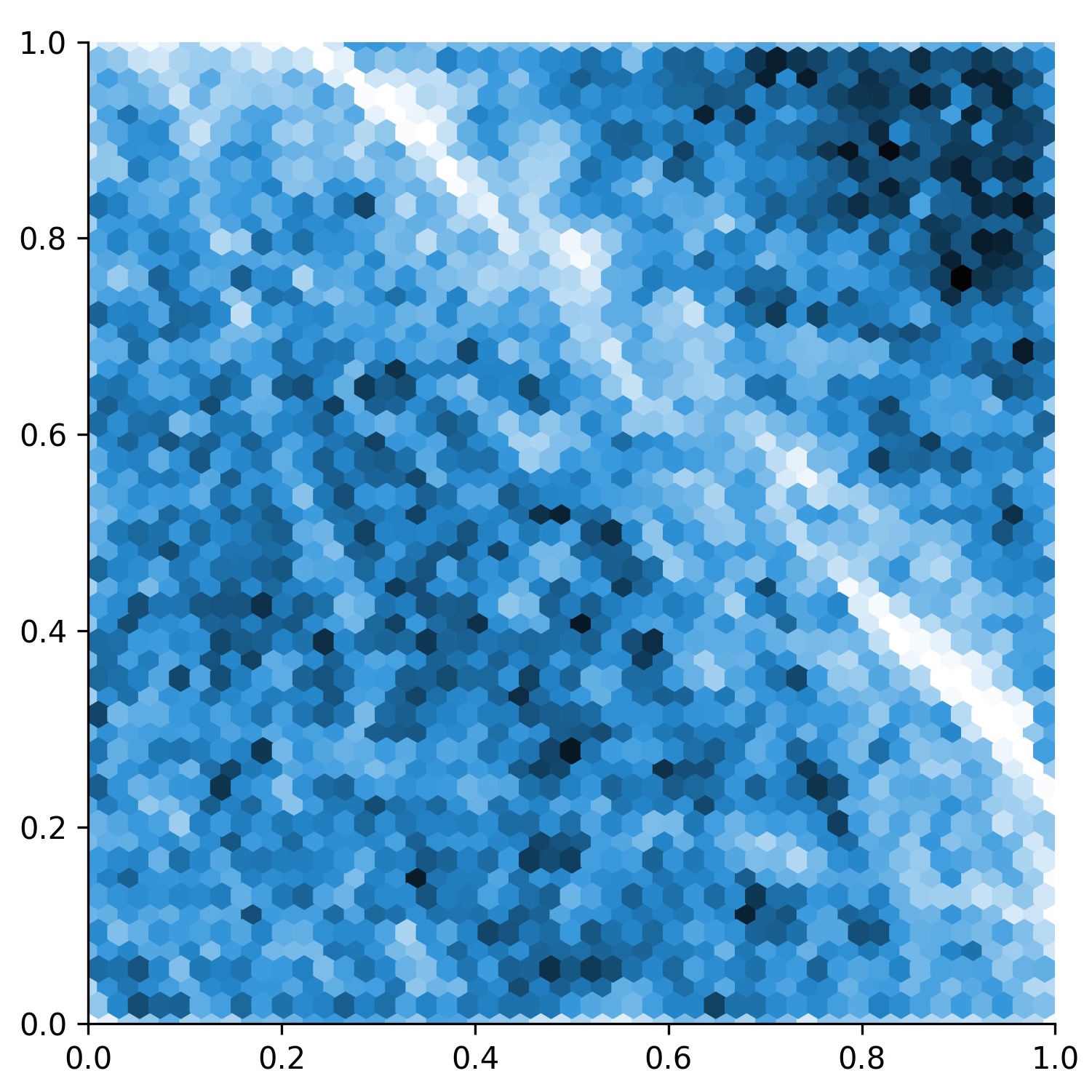}}
\put(2,1){\includegraphics[width=0.3\textwidth]{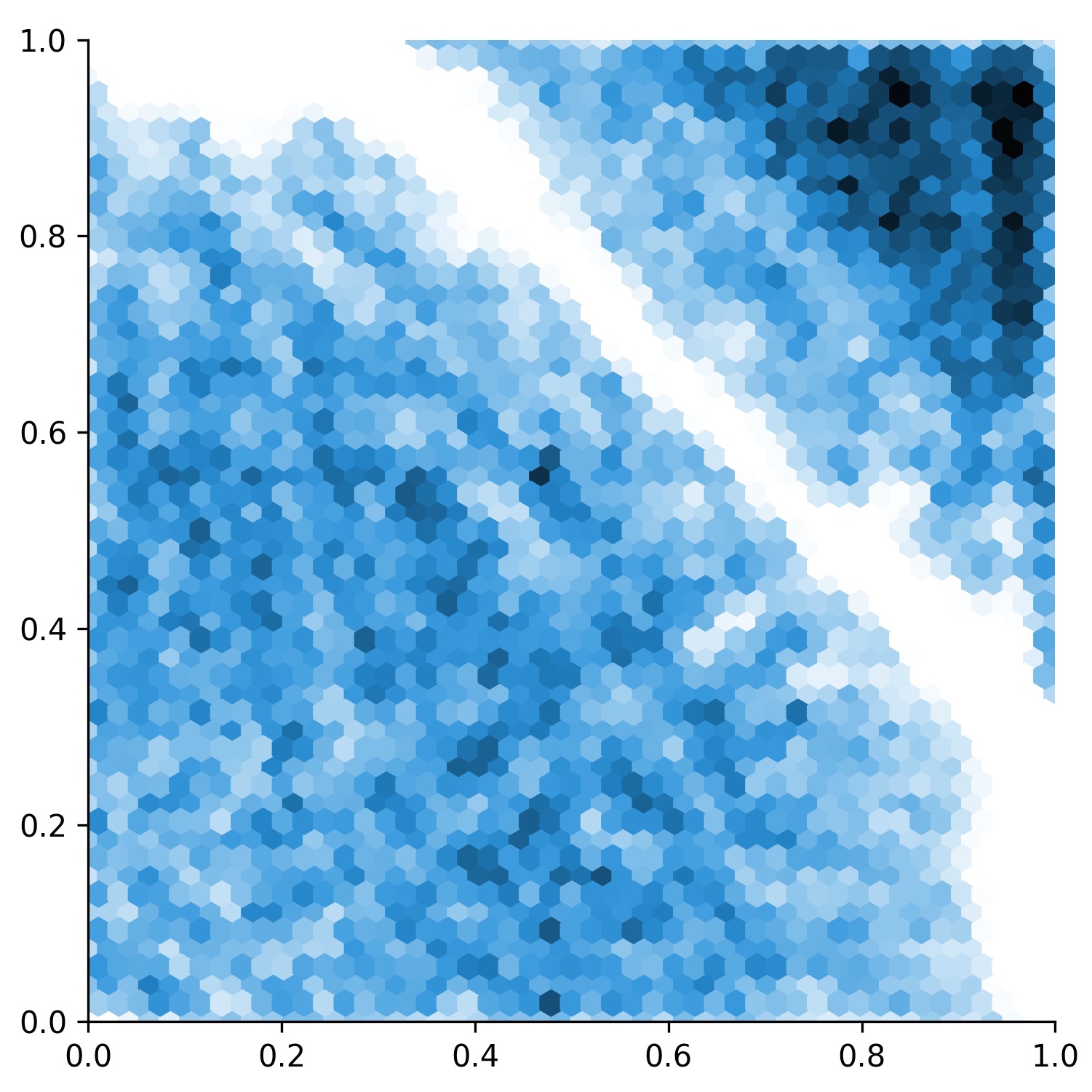}}
\put(0,0){\includegraphics[width=0.3\textwidth]{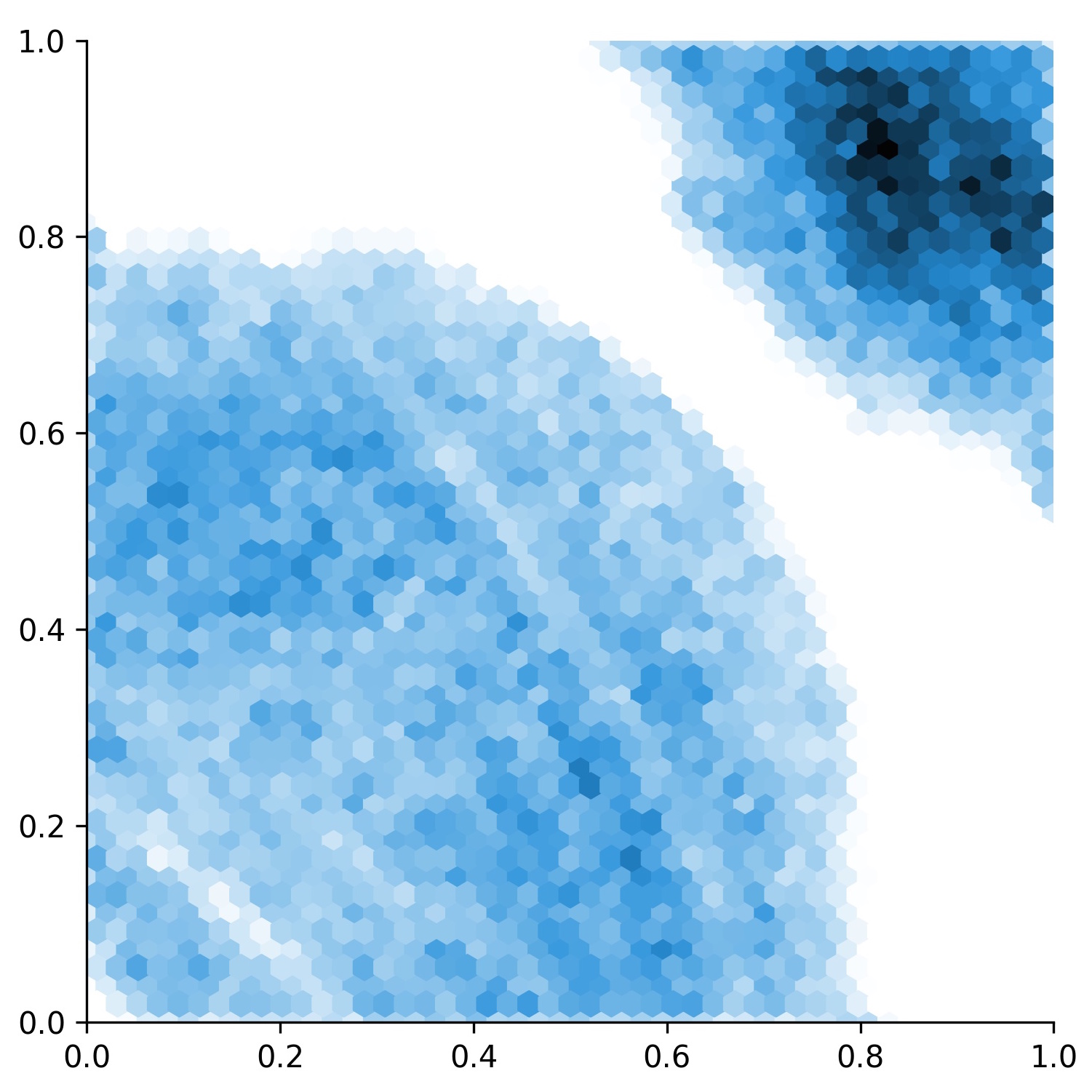}}
\put(1,0){\includegraphics[width=0.3\textwidth]{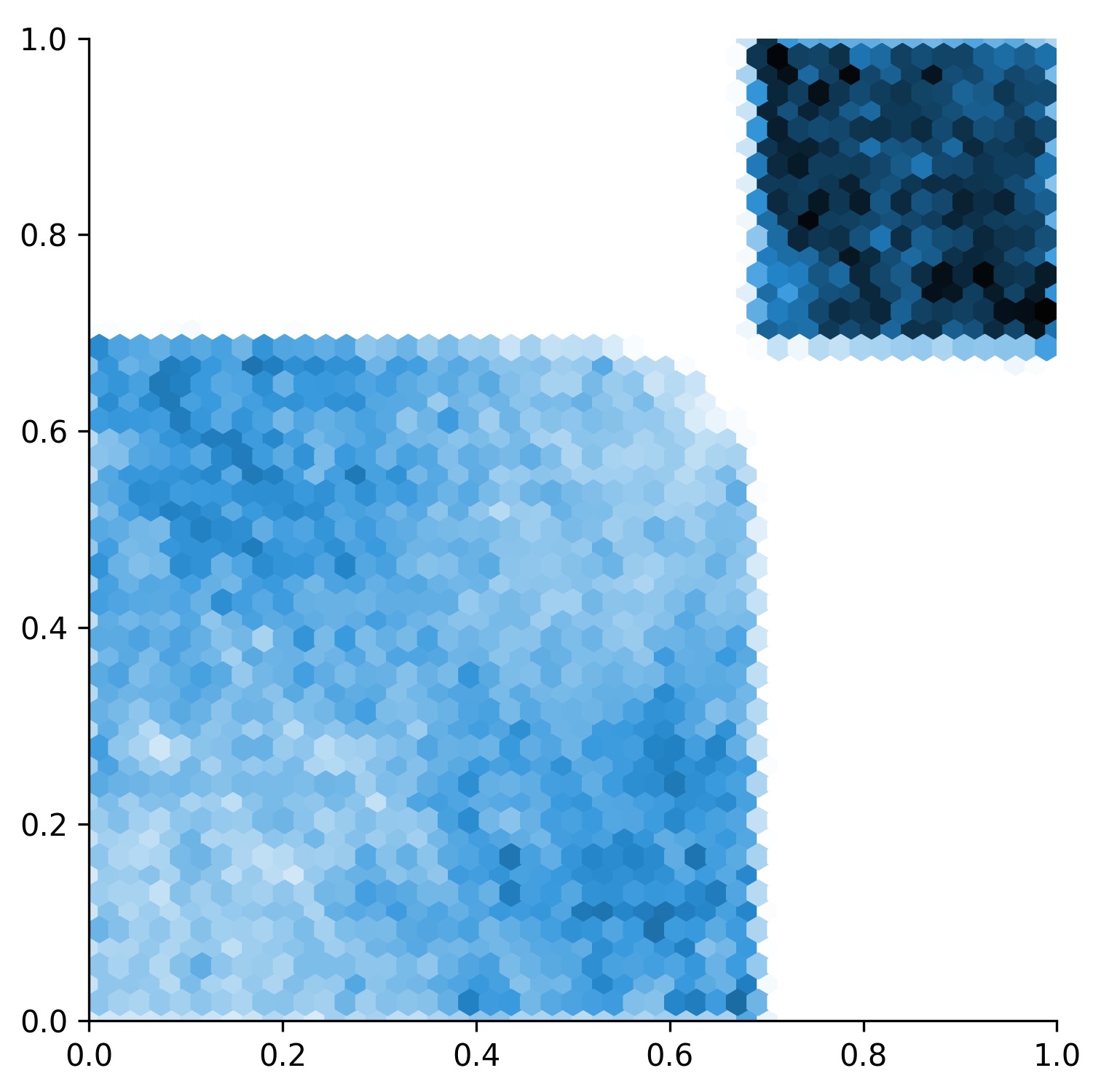}}
\put(2,0){\includegraphics[width=0.3\textwidth]{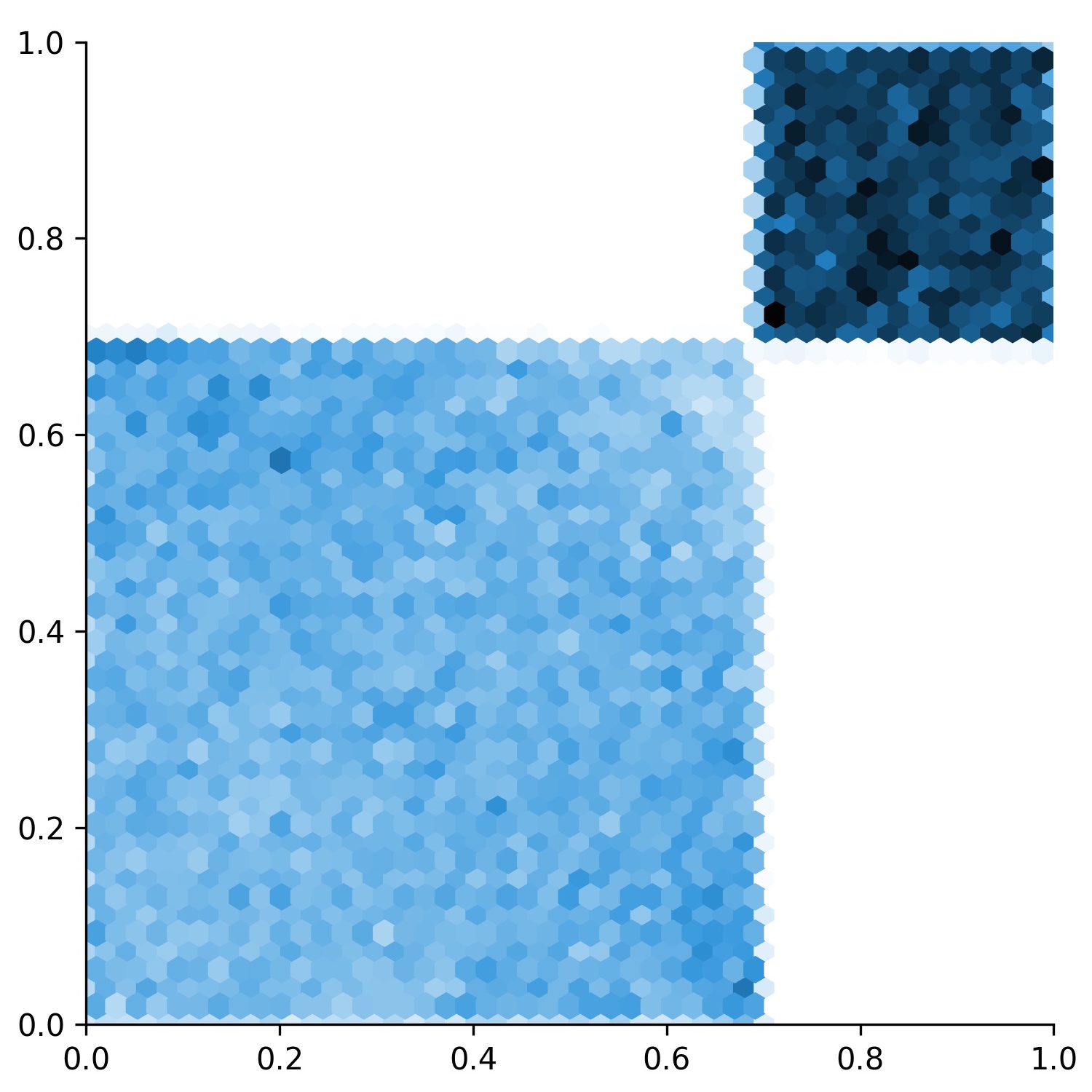}}
\put(0.33,1.9){$\rho = 0.00$}
\put(1.33,1.9){$\rho = 0.04$}
\put(2.33,1.9){$\rho = 0.08$}
\put(0.33,0.9){$\rho = 0.12$}
\put(1.33,0.9){$\rho = 0.16$}
\put(2.33,0.9){$\rho = 0.20$}
\end{picture}
\caption{Samples from the optimizer $\mu^\star$ of problem \eqref{eq:ExampleTwo} as obtained by the neural networks approach are shown in form of a heatplot for six different levels of ambiguity, i.e.~$\rho = 0, 0.04, 0.08, 0.12, 0.16, 0.2$. }
\label{fig:Example2Optimizer}
\end{center}
\end{figure}

We now want to illustrate a further merit of the neural networks approach, namely that we can sample from the numerical optimizer $\mu^\star$ of problem \eqref{eq:ExampleTwo}. By doing so, we obtain information about the structure of the worst case distribution. The samples are obtained by acceptance-rejection sampling from the density given by Proposition \ref{prop:penal} (b), where we replace true optimizers by numerical ones. Figure \ref{fig:Example2Optimizer} plots samples of this worst case distribution $\mu^\star$ for different values of $\rho$. 
To understand the intriguing nature of the results presented in Figure \ref{fig:Example2Optimizer}, we have to describe problem \eqref{eq:ExampleTwo} in some more detail. It should be clear that the comonotone coupling of the Uniforms $U$ and $V$ is maximizing $\text{AVaR}_\alpha(U+V)$ among all possible coupling of $U$ and $V$. However, one can find many different maximizing couplings. Notably, the optimizer shown for $\rho = 0.2$ corresponds to the one which has the lowest relative entropy with respect to the independent coupling among the maximizers of $\text{AVaR}_\alpha(U+V)$. On the other hand, the middle panel for $\rho = 0.16$ motivated us to derive a coupling which - among maximizers of $\text{AVaR}_\alpha(U+V)$ - we conjecture to have the lowest Wasserstein distance to the independent coupling. This coupling is used to derive the lower bound for problem \eqref{eq:ExampleTwo} in Appendix \ref{ssec:ProofEx2}. Some features of the others couplings, e.g.~for $\rho = 0.08$ and $\rho=0.12$ came as a surprise to us: For example, the curved lines as boundary for the support are unusual in an $L_1$-Wasserstein problem.

\subsection{Variance of three normally distributed random variables with distance penalization}
\label{ssec:VarThreeNormals}

We now leave the domain of uniformly distributed, univariate marginals and replace the distance constraint by a distance penalty. 
We analyze the following problem 
\begin{align}
\chi(f_4) :=& \sup_{\binom{X}{Y} \sim \mu \in \Pi(\bar{\mu}_{12},\bar{\mu}_3)} \mathbb{V}\text{ar}( X_1 + X_2 + Y) - \frac{1}{r} d_{\tilde{c}}(\bar{\mu},\mu)^r \notag \\
=& \sup_{\mu \in \Pi(\bar{\mu}_{12},\bar{\mu}_3)} \int_{\mathbb{R}^3} \left( (x_1 + x_2 + y)^2 - m^2 \right) \mu(dx_1,dx_2,dy) -  \frac{1}{r} d_{\tilde{c}}(\bar{\mu},\mu)^r, \label{eq:ProblemVarThreeNormals}
\end{align}
where the cost function ${\tilde{c}}(x,y) = 2 ||x-y||_1$.\footnote{One can think of the factor $2$ occurring in the cost function similarly to a particular choice of $\rho$ for the radius of the Wasserstein ball.} We specify the reference distribution function as follows
$$ \bar{\mu} = \mathcal{N} \left( \left( \begin{array}{c}
0 \\ 0 \\ 0
\end{array} \right), \left( \begin{array}{ccc}
1 & 0.8 & 0\\ 0.8 & 1 & 0 \\ 0 & 0 & 1
\end{array} \right) \right).$$
In this examples, there are two novelties that are explained in the following.

Firstly, the fact that we set $\mu \in \Pi(\bar{\mu}_{12},\bar{\mu}_3)$, means we are fixing not only the univariate marginal distributions, which are standard normal, but also the dependence structure between the first and the second margin $X_1$ and $X_2$. In this case, we assume that $X_1$ and $X_2$ are jointly normal with correlation 0.8. We use $\bar{\mu}_{12}$ to denote the fixed, \emph{bivariate} margin. As a consequence, the model ambiguity concerns solely the dependence structure between the third margin $Y$ and the other two margins $X_1$ and $X_2$.

Secondly, rather than a distance constraint $d_{\tilde{c}}(\bar{\mu},\mu) \leq \rho$, 
we now use a distance penalty to account for the described model ambiguity: we set $\varphi(x) = \frac{1}{r} x^r$ in Theorem \ref{thm:main}. The parameter $r$ accounts for the degree of penalization and hence is \emph{not} comparable to the radius $\rho$ of the Wasserstein balls described above. Instead, for $r \rightarrow \infty$ the penalization becomes closer and closer to the case where we impose the constraint $d_{\tilde{c}}(\bar{\mu}, \mu) \leq 1$.

These two specifications aim to demonstrate the value of the generality of Theorem \ref{thm:main} with respect to both the choice of polish spaces and the modelling of ambiguity.

Even though Section \ref{subsec:penalization} focuses on the Wasserstein ball constraint, the solution method based on penalization and neural networks is trivially adapted to problems like \eqref{eq:ProblemVarThreeNormals}.
We state the resulting numerical solution of problem \eqref{eq:ProblemVarThreeNormals} for different values of $r$ in Table \ref{tab:Example4}. In order to make these results more concrete, we sampled $20000$ values from the respective worst case distribution $\mu^\star$ and report the corresponding empirical covariance matrix $\hat{\Sigma}_{\mu^\star}$. Notably the covariance matrix does not completely characterize $\mu^\star$, since $\mu^\star$ does not have to be a joint normal distribution.

\begin{table}[t]
	\begin{center}
		\begin{tabular}{c||c|c|c|c}
			&$\chi(f_4)$ & $\int_{\mathbb{R}^3} \left( x_1 + x_2 + y \right)^2 d\mu^\star$ & $d_{\tilde{c}}(\bar{\mu},\mu^\star)$ & $\hat{\Sigma}_{\mu^\star}$ \\
			\hline \hline 
			\begin{footnotesize}
				$ \begin{array}{c}
				\text{No} \\ \text{penalization}
				\end{array} $
			\end{footnotesize} & 4.6 & 4.6 & 0 & \begin{footnotesize}
			$\left( \begin{array}{ccc}
			1 & 0.8 & 0\\ 0.8 & 1 & 0 \\ 0 & 0 & 1
			\end{array} \right)$
		\end{footnotesize} \\
		\hline 
		$r=1$ & 6.16 & 8.08 & 1.92 & \begin{footnotesize}
			$\left( \begin{array}{ccc}
			0.998 & 0.801 & 0.847\\ 0.801 & 1.008 & 0.853 \\ 0.847 & 0.853 & 1.011
			\end{array} \right)$
		\end{footnotesize} \\
		\hline 
		$r=2$ & 6.50 & 7.60 & 1.48 & \begin{footnotesize}
			$\left( \begin{array}{ccc}
			0.989 & 0.806 & 0.737\\ 0.806 & 0.989 & 0.736 \\ 0.737 & 0.736 & 0.997
			\end{array} \right)$
		\end{footnotesize} \\
		\hline
		$r=3$ & 6.57 & 7.29 & 1.29 & \begin{footnotesize}
			$\left( \begin{array}{ccc}
			0.975 & 0.795 & 0.675\\ 0.795 & 0.991 & 0.682 \\ 0.675 & 0.682 & 0.980
			\end{array} \right)$
		\end{footnotesize} \\
		\hline 
		$r=4$ & 6.65 & 7.19 & 1.21 & \begin{footnotesize}
			$\left( \begin{array}{ccc}
			0.976 & 0.792 & 0.652\\ 0.792 & 0.970 & 0.654 \\ 0.652 & 0.654 & 0.986
			\end{array} \right)$
		\end{footnotesize} \\
		\hline 
		$r=\infty$ & 6.76 & 6.76 & 1.00 & \begin{footnotesize}
			$\left( \begin{array}{ccc}
			0.991 & 0.803 & 0.554\\ 0.803 & 0.998 & 0.551 \\ 0.554 & 0.551 & 0.993
			\end{array} \right)$
		\end{footnotesize}
	\end{tabular}
	\caption{Comparison of the numerical solutions $\chi(f_4)$ of problem \eqref{eq:ProblemExampleOne}, computed based on penalization and neural networks, for different values of $r$. We define the worst case distribution $\mu^\star \in \Pi(\bar{\mu}_{12},\bar{\mu}_3)$ such that $\chi(f_4) = \int_{\mathbb{R}^3} \left( x_1 + x_2 + y \right)^2 \mu^\star(dx_1,dx_2,dy) -  \frac{1}{r} d_{\tilde{c}}(\bar{\mu},\mu^\star)^r$ and report also the empirical covariance matrix $\hat{\Sigma}_{\mu^\star}$ computed from $N=20000$ samples of $\mu^\star$. The case $r=\infty$ corresponds to the constraint $d_{\tilde{c}}(\bar{\mu}, \mu) \leq 1$.}
	\label{tab:Example4}
\end{center}
\end{table}

\section{DNB case study: Aggregation of six given risks}
\label{sec:DNB}
\citeA{aas2014bounds} provide a very illustrative case study of the risk aggregation at the DNB, Norway's largest bank. We want to make use of this example to showcase the applicability of the novel framework presented in this paper. 

The DNB is exposed to six different types of risks: credit, market, asset, operational, business and insurance risk. Let the random variables $L_1,\dots,L_6$ represent the marginal risk exposures for these six risks. 
Per definition, risk aggregation is not concerned with the computation of the distribution of the marginal risks. Hence, we take the corresponding marginal distribution functions $F_1,\dots,F_6$ as given.
In this particular case, $F_1, F_2$ and $F_3$ are empirical cdfs originating from given samples, while $L_4$, $L_5$ and $L_6$ are assumed to be log-normally distributed with given parameters, see Table \ref{tab:givenInfo}.

\begin{table}
\begin{center}
\begin{tabular}{c|ccc}
&  Description & Type & Parameters/Other details \\
\hline
\hline
\multirow{2}{*}{$F_1$} & \multirow{2}{*}{cdf of credit risk $L_1$} & \multirow{2}{*}{empirical cdf} & given by 2.5 Million samples; \\
 & & & standard deviation $\bar{\sigma}_1 =  644.602$ \\
\hline
\multirow{2}{*}{$F_2$} & \multirow{2}{*}{cdf of market risk $L_2$} & \multirow{2}{*}{empirical cdf} & given by 2.5 Million samples; \\
 & & & standard deviation $\bar{\sigma}_2 = 5562.362$ \\
 \hline
\multirow{2}{*}{$F_3$} & \multirow{2}{*}{cdf of asset risk $L_3$} & \multirow{2}{*}{empirical cdf} & given by 2.5 Million samples; \\
 & & & standard deviation $\bar{\sigma}_3 = 1112.402$ \\
 \hline
\multirow{2}{*}{$F_4$} & \multirow{2}{*}{cdf of operational risk $L_4$} & \multirow{2}{*}{lognormal cdf} & mean $\bar{m}_4 = 840.735$; \\
 & & & standard deviation $\bar{\sigma}_4 = 694.613$ \\
 \hline
\multirow{2}{*}{$F_5$ }& \multirow{2}{*}{cdf of business risk $L_5$} & \multirow{2}{*}{lognormal cdf} &  mean $\bar{m}_5 = 743.345$; \\
 & & & standard deviation $\bar{\sigma}_5 = 465.064$ \\
 \hline
\multirow{2}{*}{$F_6$} & \multirow{2}{*}{cdf of insurance risk $L_6$} & \multirow{2}{*}{lognormal cdf} &   mean $\bar{m}_6 = 438.978$; \\
 & & & standard deviation $\bar{\sigma}_6 = 111.011$ \\
\hline
\hline
\multirow{2}{*}{$C_0$} & reference copula  & \multirow{2}{*}{student-t copula} & with 6 degrees of freedom \\
& linking $L_1,\dots,L_6$ & &  and correlation matrix $\Sigma_0$ \\
\end{tabular}
\caption{Overview of the information concerning the reference distribution in the DNB case study. The correlation matrix $\Sigma_0$ is given in Appendix \ref{ssec:CorrelationMatrix}. $F_i$ denotes the cumulative distribution function of the marginal probability measure $\bar{\mu}_i$ for $i=1,\dots,6$.}
\label{tab:givenInfo}
\end{center}
\end{table}

For the purpose of risk management, the DNB needs to determine the capital to be reserved. 
According to the \citeA{basel2013statistical}, this capital requirement should be computed by the Average Value at Risk (AVaR) of the sum of these six losses.\footnote{\citeA{aas2014bounds} focus on the Value at Risk (VaR) rather than the AVaR. Since the Basel Committee on Banking Supervision recently shifted the quantitative risk metrics system from VaR to Expected Shortfall ~\cite<see>{chang2016choosing}, which is equivalent to the AVaR, we consider the AVaR in our study.} The AVaR of the sum of these six losses at a specific confidence level $\alpha$ is defined as
\begin{align} \label{eq:VaR_DNBexample}
\text{AVaR}_\alpha\left( L_6^+ \right) = \min_{\tau \in \mathbb{R}} \left\lbrace \tau + \frac{1}{1-\alpha} \mathbb{E} \left[ \max\right(L_6^+ -\tau,0\left) \right] \right\rbrace,
\end{align} 
where $L_6^+ := \sum_{i=1}^6 L_i$.
To evaluate expression \eqref{eq:VaR_DNBexample}, the joint distribution of $L_1,\dots,L_6$ is needed. As the  marginal distributions of $L_1,\dots,L_6$ are known, the DNB relies on the concept of copulas to model the dependence structure between these risks. From the above description, it is clear that joint observations of the $L_1,\dots,L_6$ are not available. Hence, standard techniques to determine the copula, e.g., by fitting a copula family and the corresponding parameters to a multivariate data set, do not apply. A panel of experts at the DNB therefore chooses a specific \emph{reference copula} $C_0$, in this case a student-t copula with six degrees of freedom and a particular correlation matrix. Such an approach is common in practice and referred to as \emph{expert opinion}.

From an academic point of view, this method for risk aggregation is not very satisfying due to the fact that the experts' choice of a \emph{reference dependence structure} between the different risk types might be very inaccurate. Hence, we say that there is \emph{model ambiguity} with respect to the dependence structure. It should be emphasized that a misspecification of this \emph{reference copula} chosen by expert opinion can have a significant impact on the aggregated risk and therefore on the required capital. Table \ref{tab:AVaR0.95} supports this statement by comparing the AVaR implied by the reference copula $C_0$ to the AVaR implied by other dependence structures: Without any information regarding the dependence structure between the six risk, the lower (resp.~upper) bound for the AVaR with confidence level $\alpha=0.95$ is 24165.52 (resp. 36410.12) million Norwegian kroner. Similar bounds are studied in \citeA{aas2014bounds}. As we pointed out in the literature review in Section \ref{ssec:literature}, these bounds have been criticized in the literature since they are too far apart for practical purposes. We therefore apply the results derived in this paper to compute bounds for the AVaR which depend on the level $\rho$ of distrust concerning the reference copula $C_0$. Alternatively, the parameter $\rho$ can be understood as the level of ambiguity with respect to the reference distribution $\bar{\mu}$.
\vspace{5mm}

\begin{table}
\begin{center}
\begin{tabular}{cccc}
$\inf_{C \in \mathcal{C}}  \text{AVaR}_\alpha^C (L_6^+)$ & $\text{AVaR}_\alpha^\Pi (L_6^+)$ & $\text{AVaR}_\alpha^{C_0} (L_6^+)$ & $\sup_{C \in \mathcal{C}}  \text{AVaR}_\alpha^C (L_6^+)$ \\
\hline
\hline
24165.52 & 26980.64 & 30498.94 & 36410.12
\end{tabular}
\caption{ Note that we set $\alpha = 0.95$. We use the rearrangement algorithm (see Aas and Puccetti, 2014)  to approximate $\inf_{C \in \mathcal{C}}  \text{AVaR}_\alpha^C (L_6^+)$, while $\sup_{C \in \mathcal{C}}  \text{AVaR}_\alpha^C (L_6^+) = \sum_{i=1}^6  \text{AVaR}_\alpha(L_i)$. The two remaining entries are computed by averaging over 50 simulation runs where 10 millions sample points are drawn in each run. Note that $\Pi$ denotes the independence copula. Thus, $\text{AVaR}_\alpha^\Pi (L_6^+)$ corresponds to the AVaR of the sum of the six losses given that they are independent. }
\label{tab:AVaR0.95}
\end{center}
\end{table}

We define the probability measure $\bar{\mu}$ of the reference distribution by the following joint cumulative distribution function
$$ \bar{F}(x) = C_0 (F_1(x_1),F_2(x_2),\dots,F_6(x_6),$$
for all $x \in \R^6$. Hence, the cdfs of the marginals $\bar{\mu}_i$ are given by $F_i(\cdot)$ for $i=1,2,\dots,6$.
The problem of interest can be formulated as follows:
 \begin{align}
  \underline{\Phi}_4^{C_0}(\alpha,\rho) &:= \inf_{\substack{L_6^+ \sim \mu \in \Pi(\bar{\mu}_1,\dots,\bar{\mu}_6),\\ d_c(\bar{\mu},\mu)\leq \rho}} \text{AVaR}_\alpha\left( L_6^+ \right), \label{eq:DNBbestAVaR} \\
 \overline{\Phi}_4^{C_0}(\alpha,\rho) &:= \sup_{\substack{L_6^+ \sim \mu \in \Pi(\bar{\mu}_1,\dots,\bar{\mu}_6),\\ d_c(\bar{\mu},\mu)\leq \rho}} \text{AVaR}_\alpha\left( L_6^+ \right). \label{eq:DNBworstAVaR}
\end{align}
The cost function $c$ defining the transportation distance $d_c$ in problem \eqref{eq:DNBbestAVaR} and \eqref{eq:DNBworstAVaR} is set to
\begin{align}
 c(x,y) = \sum_{i=6}^d \frac{\vert x_i - y_i \vert}{\bar{\sigma}_i}, \label{eq:DNBcost}
 \end{align}
where $\bar{\sigma}_i$ denotes the standard deviation of $\bar{\mu}_i$ and is given in Table \ref{tab:givenInfo}. The rational behind this definition of $c$ is that we want to model the ambiguity such that it concerns solely the dependence structure of the reference distribution. Definition \eqref{eq:DNBcost} is a simple way to achieve this.\footnote{It should be mentioned that \citeA{gao2017data} promote the definition $c(x,y) = \sum_{i=1}^6 \vert F_i(x_i) - F_i(y_i) \vert$, which implies that the transportation distance $d_c$ is defined directly on the level of copulas. Even if this approach is arguably more intuitive, we stick to definition \eqref{eq:DNBcost} mainly for the sake of computational efficiency.}

\begin{figure}
\begin{center}
\includegraphics[width=0.95\textwidth]{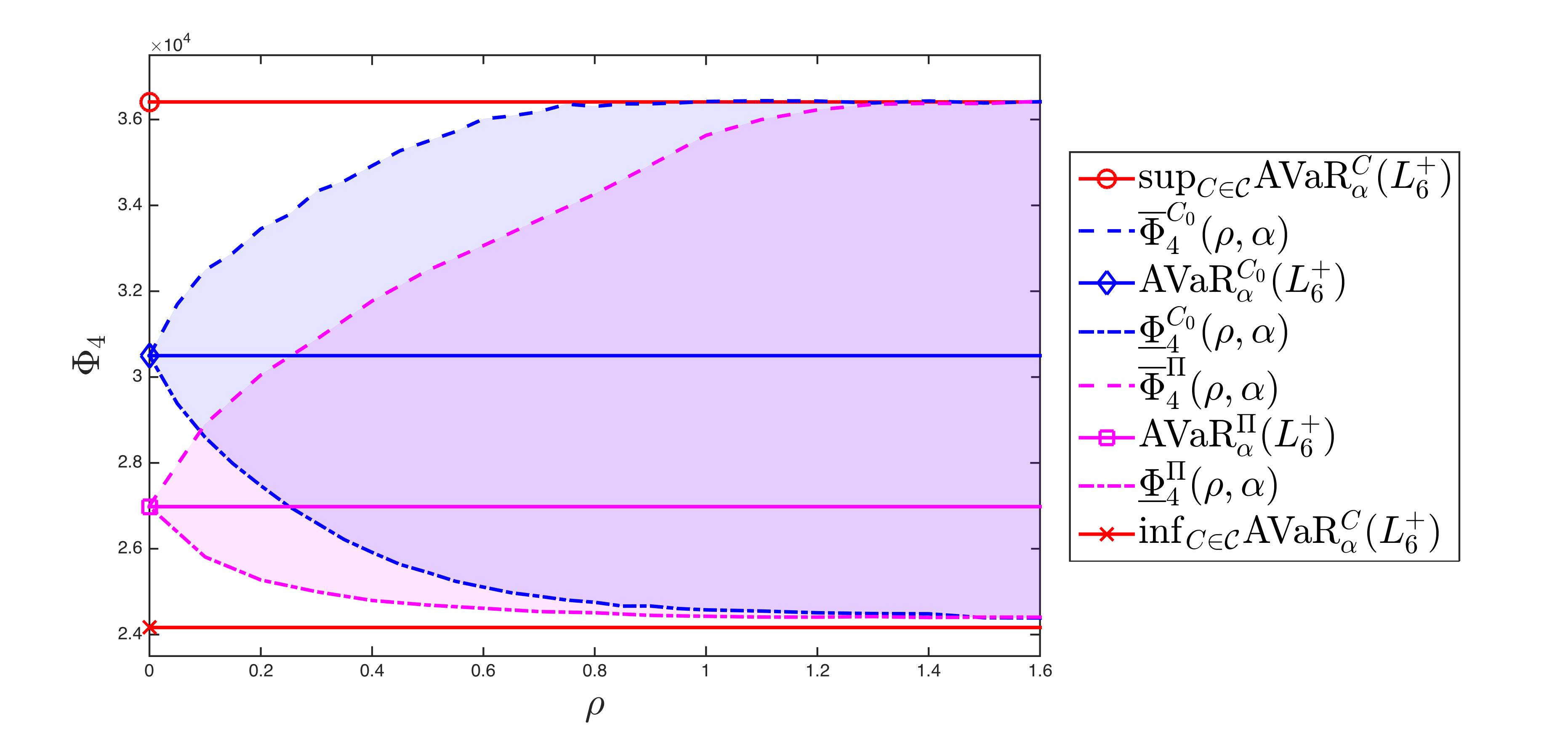}
\caption{We consider two distinct reference dependence structures, the student-t copula $C_0$ defined in Table \ref{tab:givenInfo} and the independence copula $\Pi$. 
The corresponding robust solutions $\underline{\Phi}_4^{C_0}(\alpha,\rho)$ and $\overline{\Phi}_4^{C_0}(\alpha,\rho)$, defined in \eqref{eq:DNBbestAVaR}, and \eqref{eq:DNBworstAVaR} resp. $\underline{\Phi}_4^{\Pi}(\alpha,\rho)$ and $\overline{\Phi}_4^{\Pi}(\alpha,\rho)$, defined analogously, are plotted as a function of the level of ambiguity $\rho$. We compare these results, which were computed relying on the concept presented in this paper, to the known values of AVaR$_\alpha(L_6^+)$ given in Table \ref{tab:givenInfo}. Note that we fix $\alpha = 0.95$.}
\label{fig:DNBplot}
\end{center}
\end{figure}

Figure \ref{fig:DNBplot} shows the numerical solutions of problems  \eqref{eq:DNBbestAVaR} and \eqref{eq:DNBworstAVaR}, which are computed relying on penalization and neural networks,  as a function of $\rho$ and for $\alpha = 0.95$. As a comparison, the same problem is also solved with respect to the independence coupling $\Pi$ rather than the reference copula $C_0$ described in Table \ref{tab:givenInfo}. The shaded regions outline the possible levels of risk for a given level of ambiguity $\rho$ and the two reference structures. On one hand, the evolution of the risk levels in $\rho$, combined with the given optimizers of problems \eqref{eq:DNBbestAVaR} and \eqref{eq:DNBworstAVaR} can be used as an informative tool to better understand the risk the DNB is exposed to. On the other hand, if a certain level of ambiguity is justified in practice, the bank can assign their capital based on the corresponding worst-case value. If for example $\rho=0.1$ is decided on, the bank would have to assign 32490 capital compared to 30499 as dictated by the reference structure $C_0$.

Analytically, one striking feature of the numerical solution with respect to $C_0$ is worth pointing out: The absolute upper bound is attained already for $\rho \approx 0.8$, while the distance from the reference measure to the comonotone joint distribution can be calculated to be around $1.7$. This underlines the fact that even though the comonotone distribution is a maximizer of the worst case AVaR, there are several more,
and they may be significantly more plausible structurally than the comonotone one.

\vspace{5mm}
In conclusion, this paper introduces a flexible framework to aggregate different risks while accounting for ambiguity with respect to the chosen dependence structure between these risks. The proposed numerical method allows us to perform this task without making restrictive assumptions about either the particular form of the aggregation functional, or the considered distributions, or the specific way to account for the model ambiguity.

\appendix
\section{Appendix}

	\subsection{A basic inequality between difference in $p$-th moment and difference in $p$-norm}
	\label{subsec:Inequality}
	The following result is used in Remark \ref{rem:errors}. The statement and proof are taken from \citeA{475146}.
	\begin{lemma} \label{lem:Inequality}
		Let $p\in \mathbb{N}$ and $X,Y \in L^p$, then $$ \| X^p - Y^p \|_1 \leq \| X - Y \|_p 
		\sum\limits_{k=0}^{p-1}\Vert X\Vert_p^k\Vert Y\Vert_p^{p-1-k}.
		$$
	\end{lemma}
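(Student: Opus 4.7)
The plan is to derive the inequality from the elementary algebraic factorization combined with Hölder's inequality. The pointwise identity
\[
a^p - b^p = (a - b) \sum_{k=0}^{p-1} a^k b^{p-1-k},
\]
valid for all $a, b \in \mathbb{R}$ and $p \in \mathbb{N}$, is the starting point. Applied to $X$ and $Y$ pointwise and combined with the triangle inequality, it gives
\[
|X^p - Y^p| \leq |X - Y| \sum_{k=0}^{p-1} |X|^k |Y|^{p-1-k}.
\]
Integrating and exchanging the sum with the integral reduces the problem to bounding each term $\int |X - Y| \cdot |X|^k |Y|^{p-1-k} d\mu$ by $\|X - Y\|_p \|X\|_p^k \|Y\|_p^{p-1-k}$.

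For the interior indices $1 \leq k \leq p-2$, this bound is a direct application of the generalized Hölder inequality with three exponents $p$, $p/k$ and $p/(p-1-k)$, which are conjugate since $\tfrac{1}{p} + \tfrac{k}{p} + \tfrac{p-1-k}{p} = 1$. For the boundary cases $k = 0$ and $k = p-1$ one of these exponents is infinite, and one instead applies the classical two-exponent Hölder inequality with conjugate pair $(p, p/(p-1))$: for example, for $k=0$,
\[
\int |X - Y| \cdot |Y|^{p-1} \, d\mu \leq \|X - Y\|_p \big\| |Y|^{p-1}\big\|_{p/(p-1)} = \|X - Y\|_p \|Y\|_p^{p-1}.
\]
Summing the resulting bounds over $k = 0, 1, \dots, p-1$ yields the claimed inequality.

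There is essentially no main obstacle here beyond bookkeeping: the proof is purely algebraic manipulation followed by a single invocation of Hölder's inequality per term. The only place where minor care is needed is the separate treatment of the boundary indices $k=0$ and $k=p-1$, which is entirely standard. One could alternatively handle all indices uniformly by first applying Hölder with exponents $p$ and $p/(p-1)$, and then bounding $\big\| |X|^k |Y|^{p-1-k}\big\|_{p/(p-1)}$ by $\|X\|_p^k \|Y\|_p^{p-1-k}$ via a second application of Hölder with exponents $(p-1)/k$ and $(p-1)/(p-1-k)$ — again with the boundary cases being trivial.
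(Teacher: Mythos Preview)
Your proof is correct and essentially identical to the paper's: both use the factorization $a^p-b^p=(a-b)\sum_{k=0}^{p-1}a^k b^{p-1-k}$ followed by H\"older's inequality term by term. The paper organizes the H\"older step as two successive two-exponent applications (first with $(p,q)$ where $q=p/(p-1)$, then with $r=p/(kq)$ on the remaining factor), which is precisely the alternative you sketch at the end; your primary route via the three-exponent generalized H\"older inequality is just the same argument compressed into one step.
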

	\begin{proof}
		For $p=1$ the inequality obviously holds. Let $p\geq 2$, then
		\begin{align}
		\Vert X^p-Y^p\Vert_1
		&=\left\Vert(X-Y)\sum\limits_{k=0}^{p-1}X^k Y^{p-1-k}\right\Vert_1
		\leq \Vert X-Y\Vert_p\sum\limits_{k=0}^{p-1}\left\Vert |X|^{kq} |Y|^{(p-1-k)q}\right\Vert_1^{1/q}. \label{eq:InequFirst}
		\end{align}
		It holds for all $k\in \{0, ..., p-1\}$  that
		\begin{align}
		\left\Vert |X|^{kq} |Y|^{(p-1-k)q}\right\Vert_1\leq\Vert X\Vert_p^{kq}\Vert Y\Vert_p^{p-kq}. \label{eq:InequToShow}
		\end{align}
		For $k\in \{0, p-1\}$ \eqref{eq:InequToShow} is immediate. For $0<k<p-1$ \eqref{eq:InequToShow} follows by H\"{o}lder's inequality applied with $r = p/kq$. Putting \eqref{eq:InequFirst} and \eqref{eq:InequToShow} together, we obtain
		\begin{align*}
		\Vert X^p-Y^p\Vert_1
		&\leq \Vert X-Y\Vert_p\sum\limits_{k=0}^{p-1}\left(\Vert X\Vert_p^{kq}\Vert Y\Vert_p^{p-kq}\right)^{1/q}\leq \Vert X-Y\Vert_p\sum\limits_{k=0}^{p-1}\Vert X\Vert_p^{k}\Vert Y\Vert_p^{p-1-k}
		\end{align*}
		and thus the claim.
	\end{proof}

	\subsection{Convergence analysis for Example \ref{ssec:MaxUniforms}}
	\label{subsec:convergenceanalysis}
	This section is meant to demonstrate how to assess the quality of the obtained numerical solution. We consider the case $\rho = 0.25$ in Figure \ref{fig:Example1}. We use the sampling measure $\theta^{\text{prod}}$ and compare the following three parameter settings:
	\begin{enumerate}[(i)]
	\item $\gamma = 100$, batch size = $1024$, $N_0 = 15000$ and $N_{\text{fine}} = 5000$.
	\item $\gamma = 2500$, batch size = $1024$, $N_0 = 15000$ and $N_{\text{fine}} = 5000$.
	\item $\gamma = 2500$, batch size = $16$, $N_0 = 7500$ and $N_{\text{fine}} = 2500$.
	\end{enumerate}
	Figure \ref{fig:Setting(i)and(ii)} examines the contrast between setting (i) and (ii). As can be seen, in both settings the algorithm  appears to converge in a stable way. In setting (i), there is however an apparent difference between the dual value $\phi_{\theta, \gamma}(f_1)$ and the primal value $\int f_1 d\mu^\star$, which is computed using the worst case distribution $\mu^\star$. We can therefore conclude that the penalization in setting (i) is insufficient, i.e. the penalization parameter $\gamma$ is chosen too low. This is clearly not the case for setting (ii). Figure \ref{fig:Setting(iii)} shows that both a small batch size and a small number of iterations lead to bad numerical behavior.

	\begin{figure}[t]
		\begin{minipage}[b]{0.5\textwidth} 
			\includegraphics[width=\textwidth,height=0.7\textwidth]{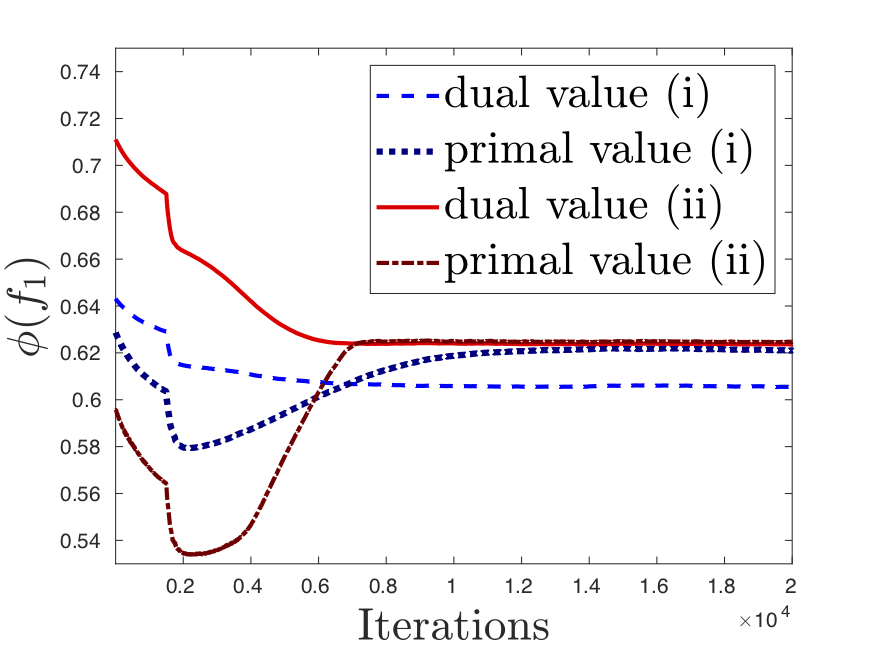}
		\end{minipage}\hspace{-1mm}
		\hfill
		\begin{minipage}[b]{0.5\textwidth}
			\includegraphics[width=\textwidth,height=0.7\textwidth]{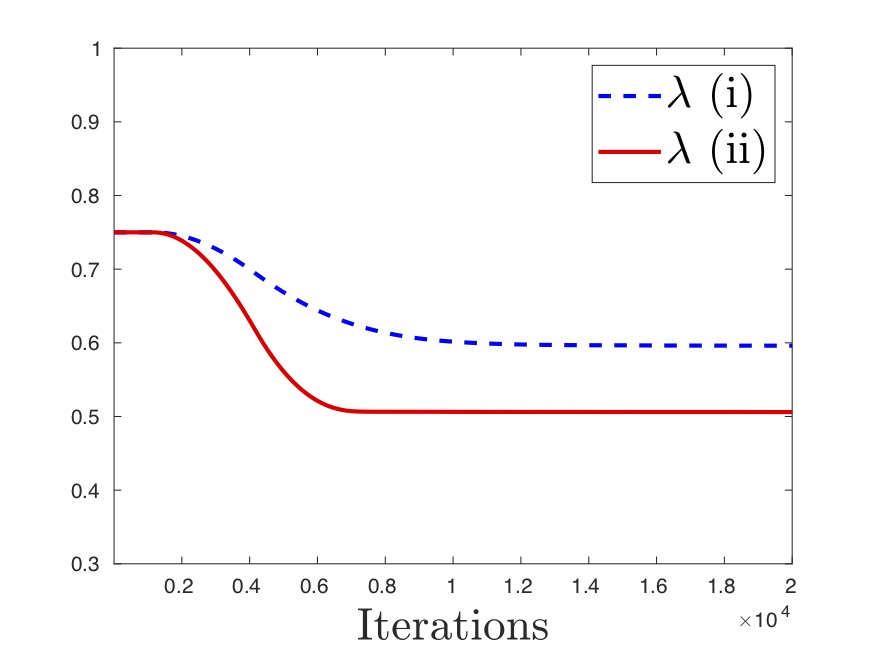}
		\end{minipage}\hspace{-1mm}
		\hfill
		\vspace{3mm}
		\begin{minipage}[b]{0.5\textwidth}  
			\includegraphics[width=\textwidth,height=0.7\textwidth]{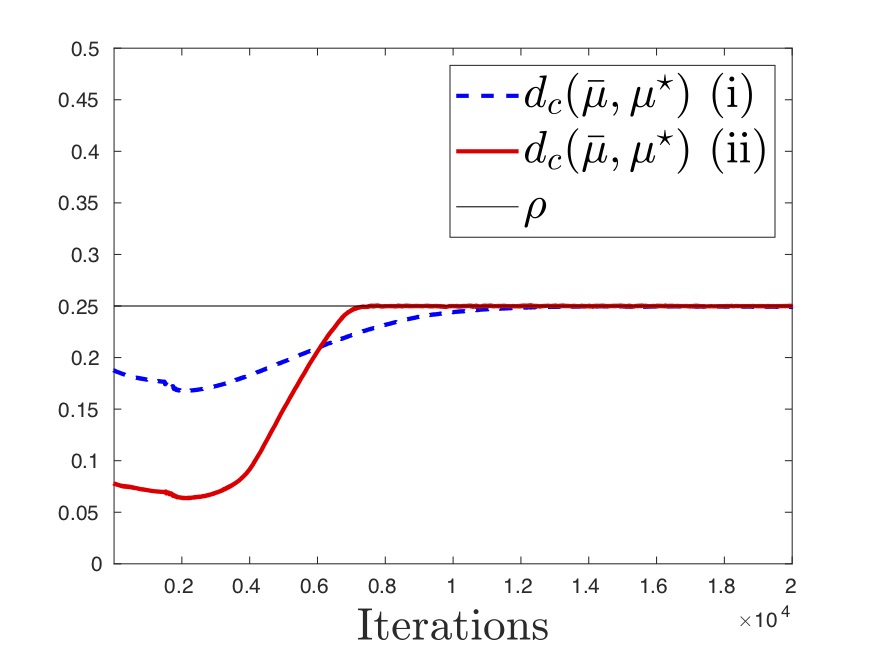}
		\end{minipage}
		\hspace{-3mm}
		\begin{minipage}[b]{0.5\textwidth}  
			\includegraphics[width=\textwidth,height=0.7\textwidth]{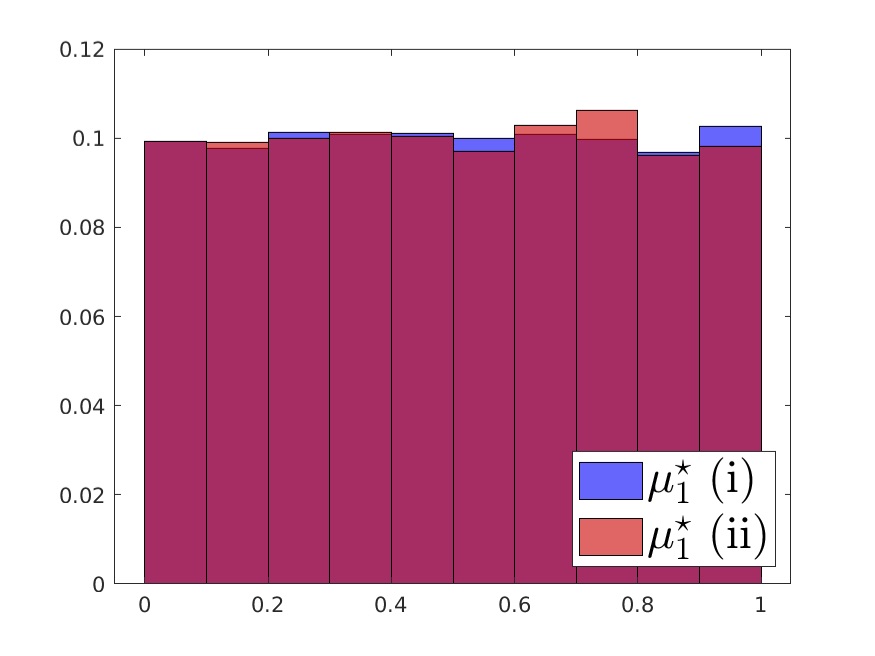}
		\end{minipage}
		\caption{Comparison of the parameter setting (i) with $\gamma = 100$ and (ii) with $\gamma = 2500$, while batch size = $1024$, $N_0 = 15000$ and $N_{\text{fine}} = 5000$ in both settings. The upper left panel shows the dual value $\phi_{\theta, \gamma}(f_1)$ as well as the primal value $\int f_1 d\mu^\star$. The upper right resp.~lower left panel illustrates the convergence of $\lambda$ resp.~$d_c(\bar{\mu},\mu^\star)$. The lower right panel plots 5000 samples from the first marginal $\mu_1^\star$ of the worst case distribution $\mu^\star$. Note that this histogram is also representative for the second marginal $\mu_2^\star$. The computation time is 205 seconds in both cases. }
		\label{fig:Setting(i)and(ii)}
	\end{figure}
	
	\begin{figure}[t]
		\begin{minipage}[b]{0.5\textwidth} 
			\includegraphics[width=\textwidth,height=0.7\textwidth]{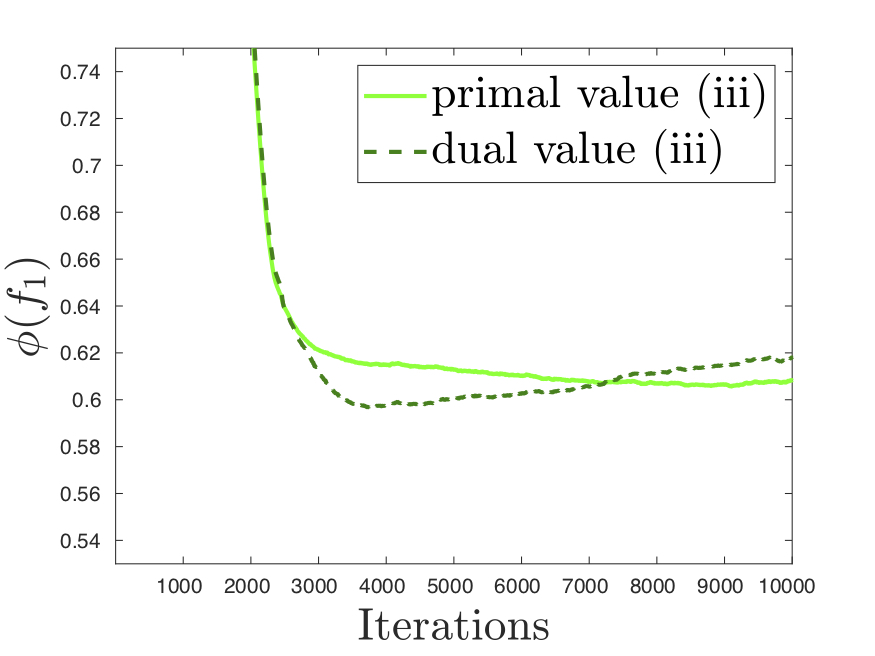}
		\end{minipage}\hspace{-1mm}
		\hfill
		\begin{minipage}[b]{0.5\textwidth}
			\includegraphics[width=\textwidth,height=0.7\textwidth]{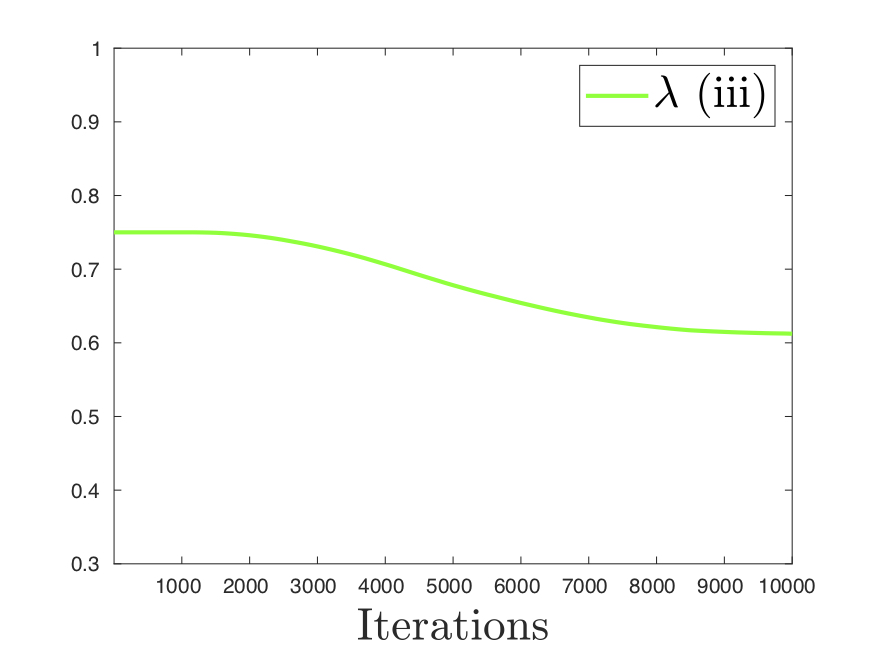}
		\end{minipage}\hspace{-1mm}
		\hfill
		\vspace{3mm}
		\begin{minipage}[b]{0.5\textwidth}  
			\includegraphics[width=\textwidth,height=0.7\textwidth]{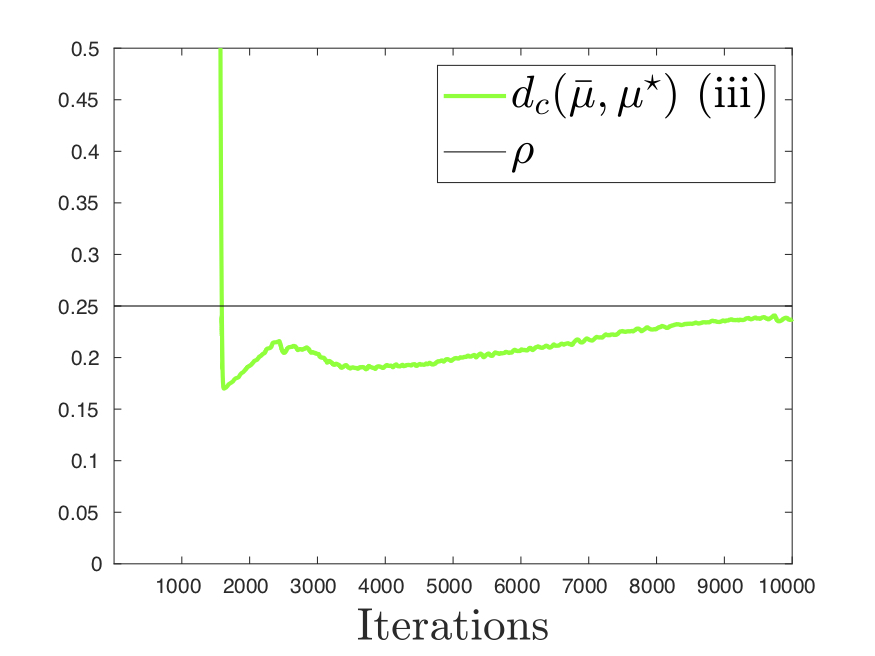}
		\end{minipage}
		\hspace{-3mm}
		\begin{minipage}[b]{0.5\textwidth}  
			\includegraphics[width=\textwidth,height=0.7\textwidth]{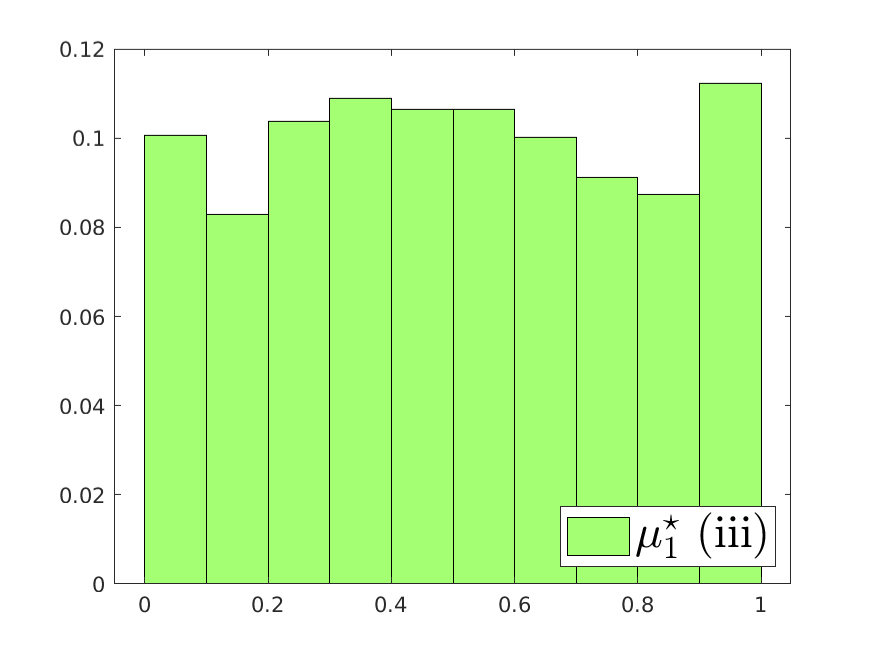}
		\end{minipage}
		\caption{Convergence analysis of the parameter setting (iii) where $\gamma = 2500$, batch size = $16$, $N_0 = 7500$ and $N_{\text{fine}} = 2500$. The upper left panel shows the dual value $\phi_{\theta, \gamma}(f_1)$ as well as the primal value $\int f_1 d\mu^\star$. The upper right resp.~lower left panel illustrates the convergence of $\lambda$ resp.~$d_c(\bar{\mu},\mu^\star)$. The lower right panel plots 5000 samples from the first marginal $\mu_1^\star$ of the worst case distribution $\mu^\star$. Note that this histogram is also representative for the second marginal $\mu_2^\star$. The computation time is 45 seconds.}		
		\label{fig:Setting(iii)}
	\end{figure}

\subsection{Proof for Section \ref{ssec:MaxUniforms} }
\label{ssec:ProofEx1}
We want to derive the analytic solution of problem \eqref{eq:ProblemExampleOne}. To do so, the concept of copulas turns out to be rather useful. We refer to \citeA{nelsen2007introduction} for an introduction to this topic. Let $\mathcal{C}$ denote the set of all copulas and let the comonotonic copula be denoted by $M(u_1,u_2) = \min(u_1,u_2)$, for all $u_1,u_2 \in [0,1]$. Using this notation, we can rewrite problem \eqref{eq:ProblemExampleOne} and show the following:
$$
\phi_1(f) = \sup_{\substack{C \in \mathcal{C},\\ d_c(M,C) \leq \rho}} \int_{[0,1]^2} \max(u_1,u_2) dC(u_1,u_2) = \frac{1+\min(\rho,0.5)}{2}.
$$
\begin{proof}
First, we derive an upper bound for $d_c(M,C)$, where $C \in \mathcal{C}$. Since $M$ lives on the main diagonal of the unit square, the vertical (or horizontal) projection of the mass of an arbitrary copula $C\in \mathcal{C}$ onto $M$ is feasible transportation plan with costs $\int_{[0,1]^2} \vert u_1 - u_2 \vert dC(u)$.\footnote{Note that $\vert u_1 - u_2 \vert$ is the distance (and thereby the cost of transportation) of any point-mass $C(u_1,u_2)$ to the main diagonal $M(u_1,u_2)$.} The latter expression appears in the definition of a concordance measure called \emph{Spearman's footrule} and it known to be maximized by the countermonotonic copula $W(u_1,u_2) := \max(u_1+u_2-1,0)$ for all $u_1,u_2 \in [0,1]$ cite<see>{liebscher2014copula}.
Hence, we obtain that $$d_c(M,C) \leq \int_{[0,1]^2} \vert u_1 - u_2 \vert dC(u) \leq \int_{[0,1]^2} \vert u_1 - u_2 \vert dW(u) = \int_0^1 |2u_1 -1| du_1 = 0.5.$$
Second, we show that this upper bound is attend for $d_c(M,W)$. The Kantorovich Rubinstein duality yields that
\begin{align*}
d_c(M,W) &= \sup_{\vert h(u)-h(v) \vert \leq c(u,v)}  \int_{[0,1]^2} h(u) dM(u) - \int_{[0,1]^2} h(v) dW(v) \\
&\geq \int_{[0,1]^2} u_1+u_2 dM(u) - \int_{[0,1]^2} v_1 + v_2 dW(v) = 1 - 0.5 = 0.5,
\end{align*} 
where we simply set $h(u) = u_1 +u_2$ to obtain the inequality. Since $d_c(M,C) \leq 0.5$ for all $C \in \mathcal{C}$, we have that $d_c(M,W)=0.5$.

Combining these two observations yields for $\rho > 0.5$ that \begin{align*}
\phi(f_1) &= \sup_{C \in \mathcal{C}} \int_{[0,1]^2} \max(u_1,u_2) dC(u_1,u_2) =  \int_{[0,1]^2} \max(u_1,u_2) dW(u_1,u_2) = \frac{3}{4}.
\end{align*}
It follows that we can assume $\rho \leq 0.5$ for the remainder of the proof.

Let us define the copula $R_\alpha$ as follows:
\begin{align*}
R_\alpha(u_1,u_2) = \begin{cases} W(u_1,u_2) &\mbox{if } \frac{1-\alpha}{2} \leq u_1,u_2 \leq \frac{1+\alpha}{2} \\
M(u_1,u_2) &\mbox{else} \end{cases},
\end{align*}
for $\alpha\in [0,1]$.
Using the same projection-argument as in the beginning of the proof, it follows that
\begin{align*} d_c(M,R_\alpha) &\leq \int_{[0,1]^2} \vert u_1 - u_2 \vert dR_\alpha(u) = \int_{(1-\alpha)/2}^{(1+\alpha)/2} |2u_1 -1| du_1 = \alpha^2 /2.
\end{align*} Thus, $d_c(M,R_{\sqrt{2\rho}}) \leq \rho$, which implies
$$ \phi(f_1) \geq \int_{[0,1]^2} \max(u_1,u_2) dR_{\sqrt{2\rho}}(u_1,u_2) = \frac{1+\rho}{2}.$$ 

By Corollary \ref{coro:DualityWassersteinball}, we have that
\begin{align*}
\phi(f_1) = \inf_{\lambda \geq 0, h_i \in C([0,1])} &\left\lbrace \lambda \rho + \sum_{i=1}^2 \int_0^1 h_i(u_i) du_i \right. \\
& + \left.\int_{[0,1]^2} \sup_{v \in [0,1]^2} \left[ \max(v_1,v_2) - \sum_{i=1}^2 h_i(v_i) - \lambda \sum_{i=1}^2 \vert u_i -v_i \vert \right] dM(u) \right\rbrace. \notag
\end{align*}
Plugging in the value $\lambda =0.5$ and setting $h_1(u) = h_2(u) =u/2$, yields $\phi_1(f) \leq \frac{\rho}{2} + \frac{1}{2} + 0$.
\end{proof}

\subsection{Proof for Section  \ref{ssec:AVaRUniforms} }
\label{ssec:ProofEx2}
We now derive the analytic bounds for $\Phi_2$, i.e.~the solution of problem \eqref{eq:ExampleTwo}, which are plotted in Figure \ref{fig:Example2}. 

Let us start by proving the following upper bound
\begin{align}
\label{eq:Phi2UpperBound}
\Phi_2 \leq \min\left(1+\alpha,2 - \frac{2}{3} \sqrt{2-2\alpha}+ \frac{\rho}{2(1-\alpha)}\right),\end{align}
where $\Phi_2$ is defined in \eqref{eq:ExampleTwo}.
\begin{proof}
Due to Corollary \ref{coro:DualityWassersteinball},
\begin{align} \label{eq:Example2Dual} 
\Phi_2= &\inf_{\tau,\lambda \geq 0, h_i \in C([0,1])} \left\lbrace \lambda \rho + \sum_{i=1}^2 \int_0^1 h_i(u_i) du_i \right. \\
& \hspace*{-6mm}+ \left.\int_{[0,1]^2} \sup_{v \in [0,1]^2} \left[ \tau + \frac{1}{1-\alpha}\max(v_1+v_2-\tau,0) - \sum_{i=1}^2 h_i(v_i) - \lambda \sum_{i=1}^2 \vert u_i -v_i \vert \right] d(u_1, u_2) \right\rbrace. \notag
\end{align}
The following choice of optimizers in equation \eqref{eq:Example2Dual} yields the upper bound for $\Phi_2$ given in \eqref{eq:Phi2UpperBound}:
\begin{align*}
\lambda = \frac{1}{2(1-\alpha)}, \qquad \tau = \tau^\star := 2 - \sqrt{2-2\alpha} \quad \text{and} \quad h_i(v) = \frac{1}{1-\alpha}\left(v-\frac{\alpha \tau^\star}{2}\right) \text{ for } i=1,2.
\end{align*}
\end{proof}

We now derive the following lower bound
\begin{align}\label{eq:Example2LowerBound}
 \Phi_2 \geq 
\min \left( 1+\alpha,2 - \frac{2}{3} \sqrt{2-2\alpha} +\frac{2 (-3 + 2 \sqrt{2-2\alpha} + 3\alpha) \rho}{3 (2- \alpha) (1-\alpha) \alpha}\right),\end{align}
where $\Phi_2$ is defined in \eqref{eq:ExampleTwo}.

\begin{proof}
It is straight forward to see that $\Phi_2$ is concave in the radius $\rho$ of the considered Wasserstein ball around $\bar{\mu}$. This is due to the fact that we defined the ground metric $c(\cdot,\cdot)$ of the transportation distance $d_c$ by the $L_1$-metric, i.e.~$c(x,y) = ||x-y||_1$.  Hence, to establish the lower bound \eqref{eq:Example2LowerBound}, we only need to show that for $\rho^\star = \alpha (1-\alpha) (1-\alpha/2)$ it holds that $\Phi_2 \geq 1 + \alpha$.

Therefore, we define the probability measure $\mu_\alpha$ by the following bivariate copula
$$
C_\alpha(u_1,u_2) = \begin{cases}
u_1 u_2 &\mbox{if } u \in [0,\alpha/2]^2 \cup [\alpha/2,\alpha]^2 \\
\frac{2-\alpha}{\alpha} u_1 u_2 &\mbox{if } u \in \left( [0,\alpha/2] \times [ \alpha/2,\alpha] \right) \cup \left( [ \alpha/2,\alpha]\times[0,\alpha/2] \right) \\
\frac{1}{1-\alpha} u_1 u_2 &\mbox{if } u \in [\alpha,1]^2 \\
\min(u_1,u_2) &\mbox{else}
\end{cases}.
$$
Tedious calculations show that $d_c(\bar{\mu},\mu_\alpha) \leq \alpha (1-\alpha) (1-\alpha/2) = \rho^\star$, where $\bar{\mu}$ is the bivariate probability measure with independent, standard uniformly distributed marginals defined in problem \eqref{eq:ExampleTwo}. Moreover, for $\binom{V}{U} \sim \mu_\alpha$ it holds that AVaR$_\alpha(U+V) = 1+\alpha$.
\end{proof}

\subsection{Correlation Matrix}
\label{ssec:CorrelationMatrix}
The purpose of this subsection is to give the correlation matrix $\Sigma_0$. Recall that $\Sigma_0$ defines the student-t copula $C_0$ with six degrees of freedom used as a reference dependence structure in the case study by \citeA{aas2014bounds}, which we consider in Section \ref{sec:DNB}. As this matrix is not given in the paper by \citeA{aas2014bounds}, we simply choose the following arbitrary correlation matrix
\begin{align*}
\Sigma_0 = 
  \left( {\begin{array}{cccccc}
    1    &   0.36 &   0.35  &  0.44  &  0.45  &  0.30 \\
    0.36 &   1    &   0.37  &  0.36  &  0.41  &  0.43 \\
    0.35 &   0.37 &   1     &  0.44  &  0.32  &  0.42 \\
    0.44 &   0.36 &   0.44  &  1     &  0.41  &  0.29 \\
    0.45 &   0.41 &   0.32  &  0.41  &  1     &  0.28 \\ 
    0.30 &   0.43 &   0.42  &  0.29  &  0.28  &  1
  \end{array} } \right).
\end{align*}

\section*{Acknowledgments}
The authors confirm that the data supporting the findings of this study are available within the article by \citeA{aas2014bounds}. 

The authors thank Daniel Bartl, Ludovic Tangpi, Ruodu Wang and the participants of the numerous conference and seminars, where the authors presented this paper, for helpful comments as well as interesting discussions.
Moreover, Mathias Pohl thanks Philipp Schmocker for his help and acknowledges support by the Austrian Science Fund (FWF) under the grant P28661 and Stephan Eckstein sincerely thanks Jan Ob\l\'{o}j for his hospitality. 
Finally, we thank the two referees for their helpful comments.

\bibliographystyle{apacite}
\bibliography{references}

\end{document}